\numberwithin{equation}{section}
\newtheorem{thm}{Theorem}[section]
\newtheorem{lem}[thm]{Lemma}
\newtheorem{rem}[thm]{Remark}
\newtheorem{obs}[thm]{Observation}
\theoremstyle{definition}
\newtheorem{defn}{Definition}[section]
\newcommand{\sn}{S_n}
\newcommand{\bn}{B_n}
\newcommand{\ul}[1]{\underline{#1}}
\newcommand{\Aut}{\mbox{Aut}}
\begin{document}

\author{Sa\'ul A. Blanco, Charles Buehrle, and Akshay Patidar}%\corref{cor}}
\title[Cycles in the Burnt Pancake Graph]{Cycles in the burnt pancake graph}
%\cortext[cor]{Corresponding author}
\address{Department of Computer Science, Indiana University, Bloomington, IN 47408}
\email{sblancor@indiana.edu}
\address{ Department of Mathematics, Physics, and Computer Studies, Notre Dame of Maryland University, Baltimore, MD 21210}
\email{cbuehrle@ndm.edu}
\address{Department of Computer Science and Engineering, Indian Institute of Technology, Bombay, MH 400076}
\email{akshay@cse.iitb.ac.in}

%\begin{keyword}
%burnt pancake graph \sep Cayley graphs \sep Hamiltonian cycles \sep weakly pancyclic
%\end{keyword}

\bibliographystyle{plain}

%\date{\today}

\begin{abstract}
The pancake graph $P_n$ is the Cayley graph of the symmetric group $\sn$ on $n$ elements generated by prefix reversals. $P_n$ has been shown to have properties that makes it a useful network scheme for parallel processors. For example, it is $(n-1)$-regular, vertex-transitive, and one can embed cycles in it of length $\ell$ with $6\leq\ell\leq n!$. The burnt pancake graph $BP_n$, which is the Cayley graph of the group of signed permutations $B_n$ using prefix reversals as generators, has similar properties. Indeed, $BP_n$ is $n$-regular and vertex-transitive. In this paper, we show that $BP_n$ has every cycle of length $\ell$ with $8\leq\ell\leq 2^n n!$. The proof given is a constructive one that utilizes the recursive structure of $BP_n$.

We also present a complete characterization of all the $8$-cycles in $BP_n$ for $n \geq 2$, which are the smallest cycles embeddable in $BP_n$, by presenting their canonical forms as products of the prefix reversal generators. 
\end{abstract}

\maketitle

\section{Introduction }\label{s:intro}

The \emph{pancake problem} was first introduced in Dweighter~\cite{Dweighter75}. The setting is as follows: There is a stack of pancakes of different diameters and one is to sort it utilizing a chef spatula. So the only operations that one can perform on the stack are prefix reversals: one picks up a substack from the top, flips it, and then places them back on top of the substack that was left. The pancake problem consists on finding the smallest number $f(n)$ of pancake flips or prefix reversals that are needed to sort any stack of $n$ pancakes, all of different diameter. The first nontrivial bound, $f(n)\leq (5n + 5)/3$, was given by Gates and Papadimitriou~\cite{GatesPapa} (when Gates was an undergraduate student), and this was later improved by Chitturi, Fahle, Meng, Morales, Shields, Sudborough, and Voit~\cite{Chitt} to $f(n)\leq (18/11)n+O(1)$. Finding a minimal length sequence of flips to sort a given stack is NP-hard as shown by Bulteau, Fertin, and Rusu~\cite{BulFerRusu}, and to our knowledge $f(n)$ is known for $1\leq n\leq 19$ (see Asai, Kounoike, Shinano, and Kaneko~\cite{Asai2006}, Cibulka~\cite{Cibulka}, Cohen and Blum~\cite{CohenBlum}, Heydari and Sudborough~\cite{HeySud}, and Kounoike, Kaneko, and Shinano~\cite{KKS}). Recently, a generalization of the pancake graph was used to construct a new general family of mixed graphs (see Dalf\'{o}~\cite{Dalfo}).

The burnt pancake problem introduced by Gates and Papadimitriou~\cite{GatesPapa} is concerned with finding the minimum number $g(n)$ of pancake flips, prefix reversals, needed to sort a stack of burnt on one side pancakes where all the burnt sides are down. Unlike the pancake problem, efficient algorithms exist to find the minimum number of flips needed to sort a burnt/signed stack (see Bergeron~\cite{B05} and Hannenhali and Pevzner~\cite{HannenPev}). However, exact values for the burnt pancake number are only known for $n\leq17$ (see Cibulka~\cite{Cibulka}). Cohen and Blum~\cite{CohenBlum} proved that $3n/2 \leq g(n) \leq 2n - 2$, for $n \geq 10$. In particular, the diameter of the burnt pancake graph is linear. 

Both the pancake problem and the burnt pancake problem give rise to graphs that have several nice properties. We use $P_n$ and $BP_n$ to denote the pancake graph and the burnt pancake graph of dimension $n$, respectively. Both $P_n$ and $BP_n$ are the Cayley graphs of the symmetric and hyperoctahedral group generated by prefix reversals, respectively. Since these graphs are Cayley graphs, they are vertex-transitive and both have a low diameter in comparison to the number of vertices of the graph. These properties have made the pancake graph an intriguing model for an interconnection network. Another desirable property that the pancake graph has is containing cycles of length $\ell$ with $6\leq\ell\leq\ n!$. Containing such cycles facilitates local connections within the network. 

Several models have been offered for interconnection schemes for parallel computers. Using Cayley graphs as interconnection schemes was first proposed in Akers and Krishnamurthy~\cite{AkersKrish}, and it offers several advantages due to their algebraic structure. For example, as mentioned earlier, every Cayley graph is vertex-transitive, which intuitively means that every vertex and its neighborhood looks the same as any other vertex. Hence, computation can be shared evenly among all the computers in the network. Furthermore, routing (communicating between the different computers in an interconnection network) is easy in Cayley graphs coming from certain permutation groups (see Schibell and Stafford~\cite{SS92}). In order to implement parallel algorithms designed for other architectures, it is desirable to be able to embed many types of subgraphs. At the very least the ability to embed cycles and paths of a given length is wanted. These cycles and paths, called \emph{rings} and \emph{lines} in the interconnection networks literature, are used for local communication and load balancing (see the discussion in Kanevsky and Feng~\cite{KF95} and references therein). Our proof is a recursive construction of all the possible cycle lengths.

The organization of the paper is as follows: The notation is explained in Section~\ref{s:notation}, where we also discuss and correct the original proof that the pancake graph has all cycles of length $\ell$ with $6\leq\ell\leq n!$. Later in Section~\ref{s:pancycle}, we provide a constructive, recursive proof of the existence of cycles in the burnt pancake graph from its girth which happens to be 8 (see Compeau~\cite{Compeau2011}) to its number of vertices, $2^n n!$. In Section~\ref{s:classification}, we provide a full characterization of the $8$-cycles inside the burnt pancake graph, inspired by analogous results for the pancake graph (see Konstantinova and Medvedev~\cite{KM10, KM11, KonMed}).  

\subsection{Main results} The main contributions of the paper are:
\begin{enumerate}
    \item We prove that $BP_n$, with $n\geq2$, has all cycles of length $\ell$ with $8\leq\ell\leq 2^nn!$. The details are discussed in Section~\ref{s:pancycle}. This result is surprising since burnt pancake graphs are fairly sparse. Furthermore, it shows that the burnt pancake graphs provide a viable interconnection scheme for parallel processes, since one would be able to use cycles or paths of a given length, as needed, for local communication. 
    \item We provide a characterization of all the $8$-cycles contained in any $BP_n$, with $n\geq2$. The details are given in Section~\ref{s:classification}. These would be the smallest local communication ring possible in such an interconnection scheme.
\end{enumerate}

To prove both these results, we use the recursive nature of the burnt pancake graph.

\section{Notation and preliminaries}\label{s:notation}

Let $\sn$ denote the group of permutations of the set $[n]:=\{1,2,\ldots, n\}$. The pancake problem has a natural interpretation in terms of $\sn$. If we denote a stack of pancakes using permutations, say for example one has the stack 4312 (the largest pancake is on top, then the second largest, then the smallest, and then the second smallest), one can turn that stack into 3412, 1342, and 2134 using \emph{prefix reversals} (permutations of the form $j\;(j-1)\cdots 1\;(j+1)\cdots n$ in one-line notation, with $1\leq j\leq n$). If we regard $\sn$ as being generated by prefix reversals (also known as \emph{pancake flips}), one can realize the \emph{pancake graph} as a Cayley graph of $\sn$ with generating set the set of prefix reversals. We denote this graph by $P_n$.

\subsection{Burnt pancake problem} The \emph{burnt pancake problem} introduced by Gates and Papadimitriou~\cite{GatesPapa} is concerned with finding the minimum number of pancake flips, or prefix reversals, needed to sort a stack of burnt on one side pancakes, where all the burnt sides are down. This is equivalent to finding the diameter of the Cayley graph for the hyperoctahedral group generated by prefix reversals. Specifically, the hyperoctahedral group $B_n$ is the group of signed permutations of $[\pm n]=\{-n,-(n-1),\ldots,-1,1,2,\ldots,n\}$ where $w \in \bn$ if and only if $w(-i)=-w(i)$ for all $i\in[\pm n]$. For ease of notation, it is usual to write $\ul{i}$ instead of $-i$. For $w\in\bn$, we utilize its \emph{window notation}, that is, we write $w=[w(1)\,w(2)\,\cdots\,w(n)]$. We denote a \emph{pancake flip} or \emph{prefix reversal} by 
$r_i$, for $1 \leq i \leq n$, where
\[r_i = [\ul{i}\,\ul{i-1}\,\ul{i-2}\cdots\,\ul{3}\,\ul{2}\,\ul{1}\,(i+1)\,(i+2)\,\cdots\,n].\]

A \emph{burnt pancake graph} $BP_n $ is defined as follows. $BP_n:= (\bn, E_n)$, where 
\begin{align*} 
    E_n = \Big\{ \big([w(1)\,\cdots\,w(i)\,w(i+1)\,\cdots\,w(n)],%\\&
    \,\,[\ul{w(i)}\,\cdots\,\ul{w(1)}\,w(i+1)\,\cdots\,w(n)]\big) \\
    \hfill:\,w \in \bn \text{ and } 1 \leq i \leq n \Big\}.
\end{align*}
That is, the vertices are all the signed permutations, and the edge set is the set of all pairs of permutations that are a pancake flip away from each other. One labels the edge 
$(w,wr_i)$ by $r_i$.

It is worth noting that $|\bn|=2^n n!$ and $|E_n|=n 2^{n-1} n!$, for $n \geq 1$.

\subsection{Recursive structure of $BP_n$}  Throughout the paper, we shall make constant use of the recursive structure of $BP_n$. We use the notation $BP_{n-1}(q)$ to denote the copy of $BP_{n-1}$ obtained by restricting $BP_n$ to its subgraph induced by all of those signed permutations whose last character is $q$ in the window notation, where $q\in[\pm n]$. Figures~\ref{fig:burnt2} and~\ref{fig:burnt3} show $BP_2$ and $BP_3$, and the recursive nature of these graphs is also showcased.

Furthermore, for $k < n$, we use $P_{k}(p)$ ($BP_{k}(p)$, respectively) to denote the subgraph of $P_{n-1}(n)$ ($BP_{n-1}(n)$, respectively) whose vertices are the set of all permutations $\pi \in S_{n}$ with $$\pi = \pi(1)\; \pi(2)\; \cdots \;\pi({k-1})\; p \;(k+1)\; (k+2)\; \cdots\; n,$$ where $p\in[k]$ and $\pi(i) \in [k] \setminus \{p\}$ or, respectively, $\pi \in B_{n}$ with $$\pi = [\pi(1)\; \pi(2)\; \cdots \;\pi({k-1})\; p\; (k+1)\; (k+2)\; \cdots \;n],$$ where $p \in [\pm k]$ and $\pi(i) \in [\pm k] \setminus \{p\}$. The edges of $P_{k}(p)$ being $\{(\pi,\pi r_i) : \pi \in P_{k}(p), \text{ for } 2 \leq i \leq k-1\}$ ($\{(\pi,\pi r_i) : \pi \in BP_{k}(p), \text{ for } i \in [k-1]\}$, respectively).

\begin{figure}
    \centering
    \begin{tikzpicture}[scale=0.05,line cap=round,line join=round,>=triangle 45,x=1.0cm,y=1.0cm]
	
	\begin{scriptsize}
	\node (e) at (-13,-33) {12};
	\node (0) at (13,-33) {\underline{1}2};
	\node (10) at (33,-14) {\underline{2}1};
	\node (010) at (33,14) {21};
	\node (1010) at (13,33) {$\underline{1}\,\underline{2}$};
	\node (101) at (-13,33) {$1\underline{2}$};
	\node (01) at (-33,14) {2\underline{1}};
	\node (1) at  (-33,-14) {\underline{2}\,\underline{1}};
    \end{scriptsize}
	\draw[line width=2] (e)--(0) (10)--(010) (101)--(1010) (1)--(01);
	\draw[line width=2,dashed] (e)--(1) (0)--(10) (01)--(101) (010)--(1010);

    \end{tikzpicture}
    \caption{$BP_2$ is an $8$-cycle.}
    \label{fig:burnt2}
\end{figure}
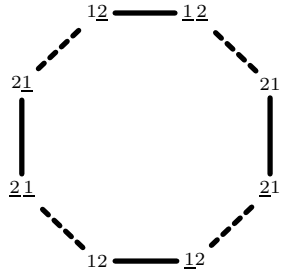

\begin{figure}
    \centering
   	\begin{tikzpicture}[scale=0.15,line cap=round,line join=round,>=triangle 45,x=1.0cm,y=1.0cm]
	
	\begin{scriptsize}
	\node (e) at (-13,-33) {123};
	\node (0) at (13,-33) {\underline{1}23};
	\node (10) at (33,-14) {\underline{2}13};
	\node (010) at (33,14) {213};
	\node (1010) at (13,33) {$\underline{1}\,\underline{2}3$};
	\node (101) at (-13,33) {$1\underline{2}3$};
	\node (01) at (-33,14) {2\underline{1}3};
	\node (1) at  (-33,-14) {\underline{2}\,\underline{1}3};

	\node (202) at  (-4,-9) {12\underline{3}};
	\node (0202) at  (4,-9) {\underline{1}2\underline{3}};
	\node (10202) at  (9,-3.5) {\underline{2}1\underline{3}};
	\node (010202) at  (9,4) {21\underline{3}};
	\node (1010202) at  (4,9) {$\underline{1}\;\underline{2}\,\underline{3}$};
	\node (101202) at  (-4,9) {$1\underline{2}\,\underline{3}$};
	\node (01202) at  (-9,4) {$2\underline{1}\;\underline{3}$};
	\node (1202) at  (-9,-3.5) {$\underline{2}\,\underline{1}\,\underline{3}$};

	\node (21) at  (-12.5,-20) {\underline{3}12};
	\node (121) at  (-6,-15) {\underline{1}32};
	\node (0121) at  (6,-15) {132};
	\node (10121) at  (12.5,-20) {\underline{3}\,\underline{1}\,2};
	\node (010121) at  (11,-25) {3\underline{1}2};
	\node (01021) at  (7,-29.5) {1\underline{3}2};
	\node (1021) at  (-7,-29.5) {\underline{1}\,\underline{3}2};
	\node (021) at  (-11,-25) {312};

	\node (212) at (6,14.75) {1\underline{3}\,\underline{2}};
	\node (0212) at (-6,14.75) {\underline{1}\,\underline{3}\,\underline{2}};
	\node (10212) at (-12.5,20) {31\underline{2}};
	\node (010212) at (-12,25) {\underline{3}1\underline{2}};
	\node (1010212) at (-7,29.5) {\underline{1}3\underline{2}};
	\node (101212) at (7,29.5) {13\underline{2}};
	\node (01212) at (12,25) {\underline{3}\,\underline{1}\,\underline{2}};
	\node (1212) at (12.5,20) {3\underline{1}\,\underline{2}};

	\node (20) at  (29.5,-8) {\underline{3}\,\underline{2}1};
	\node (020) at  (29.5,7) {3\underline{2}1};
	\node (1020) at  (26,11.5) {2\underline{3}1};%
	\node (01020) at  (18.5,11.5) {\underline{2}\,\underline{3}1};
	\node (101020) at  (17,5.5) {321};
	\node (10120) at  (16.5,-7) {\underline{3}\,21};	
	\node (0120) at  (18.5,-11.5) {\underline{2}\,31};
	\node (120) at  (24.5,-11.5) {231};

	\node (2) at  (-29.5,-8) {\underline{3}\,\underline{2}\,\underline{1}};
	\node (02) at  (-29.5,7) {3\underline{2}\,\underline{1}};
	\node (102) at  (-26,11.5) {2\underline{3}\,\underline{1}};%
	\node (0102) at  (-18.5,11.5) {\underline{2}\,\underline{3}\,\underline{1}};
	\node (10102) at  (-17,5.5) {32\underline{1}};
	\node (1012) at  (-16.5,-7) {\underline{3}2\underline{1}}; 
	\node (012) at  (-18.5,-11.5) {\underline{2}3\underline{1}};
	\node (12) at  (-24.5,-11.5) {23\underline{1}};	
	\end{scriptsize}

	\draw[line width=2,dashed] (e)--(0) (10)--(010) (101)--(1010) (1)--(01) (202)--(0202) (10202)--(010202) 
	(101202)--(1010202) (1202)--(01202) (21)--(021) (121)--(0121) (10121)--(010121) (1021)--(01021) 
	(212)--(0212) (10212)--(010212) (101212)--(1010212) (1212)--(01212) (20)--(020) (1020)--(01020) 
	(10120)--(101020) (120)--(0120) (2)--(02) (102)--(0102) (1012)--(10102) (12)--(012);

	\draw[line width=2] (e)--(1) (0)--(10) (2)--(12) (01)--(101) (02)--(102) (20)--(120) 
	(21)--(121) (010)--(1010) (012)--(1012) (020)--(1020) (021)--(1021) (010212)--(1010212) 
	(0121)--(10121) (202)--(1202) (212)--(1212) (0202)--(10202) (010202)--(1010202) (01202)--(101202) 
	(0102)--(10102) (01021)--(010121) (01020)--(101020) (0120)--(10120) (01212)--(101212) (0212)--(10212);

	\draw[line width=1,dotted] (e)--(2) (0)--(20) (1)--(21) (10)--(10121) (02)--(202) (010)--(01212) 
	(1010)--(10120) (101)--(1012) (01)--(010212) (020)--(0202)  (010121)--(10202) (10102)--(101202)
	(101020)--(1010202) (120)--(0212) (121)--(01020) (12)--(212) (0121)--(0102) (010202)--(1212)
	(01021)--(012) (1021)--(0120) (021)--(1202) (101212)--(102) (1010212)--(1020) (01202)--(10212);
	
    \end{tikzpicture}
    \caption{$BP_3$ has 6 embedded copies of $BP_2$ obtained by looking at the subgraphs induced by fixing the last character in the window notation to be $\pm1,\pm2$, or $\pm3$. All of the different copies are connected to each other using edges labeled by 
    $r_3$, depicted with a dotted line. }
    \label{fig:burnt3}
\end{figure}
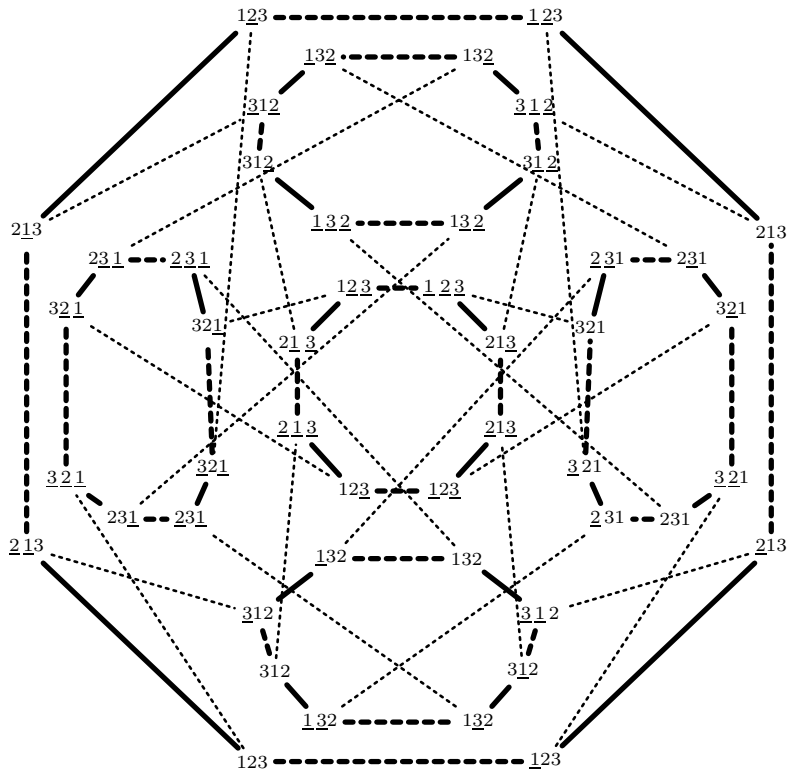

\subsection{The pancake graph $P_n$ is weakly pancyclic}

A graph $G=(V,E)$ is said to be \emph{pancyclic} if it has all cycles of length $\ell$, for $3\leq\ell\leq|V|$. Furthermore, the \emph{girth} of $G$ is the length of the shortest cycle contained in $G$, and its \emph{circumference} is the length of the longest cycle contained in $G$. The graph $G$ is called \emph{weakly pancyclic} if it has cycles of all lengths including and between its girth and its circumference. It is known that the girth of $P_n$ is six (see Kanevsky and Feng~\cite{KF95}) and the girth of the burnt pancake graph $BP_n$ is eight (see Compeau~\cite[Theorem 10]{Compeau2011}). Furthermore, $P_n$ is also weakly pancyclic (see Kanevsky and Feng~\cite{KF95} and Sheu, Tan, and Chu~\cite{STC06}. In fact, $P_n$ has a slightly stronger property than just being weakly pancyclic, since it is also Hamiltonian, that is, it $P_n$ has a Hamiltonian cycle. 

The main result of this paper is that $BP_n$ is Hamiltonian and weakly pancyclic: If $n\geq2$, $BP_n$ has a cycle of any length $\ell$, where $8\leq\ell\leq 2^nn!$. Our proof follows a similar technique to that in Kanevsky and Feng~\cite[Theorem 1]{KF95}, where the authors show that $P_n$, for $n\geq3$, has cycles of length from 6 (the girth of $P_n$) up to $n!-2$, and it is also Hamiltonian having a cycle of length $n!$. They provide a constructive, recursive proof, and our proof is in the same spirit. In regard to the proof in Kanevsky and Feng~\cite{KF95}, some remarks are in order.
\begin{rem}

\begin{enumerate}
\item That $P_n$ has a cycle of length $n!-1$ was not proven in Kanevsky and Feng~\cite{KF95}, though their proof can easily be extended to include that case as well. The existence of such a cycle of length $n!-1$ was first established from the main result in Hung, Hsu, Liang, and Hsu~\cite{Hung2003}, where they proved a stronger result.
    
\item The main reason of why the proof in Kanevsky and Feng~\cite{KF95} does not cover the existence of a cycle of length $n!-1$ is that they did not provide a cycle of length $23$ contained in $S_4$ to be the basis of their induction. They do provide cycles from length 6 up to 22 and 24 contained in $P_4$. However, a computer search shows that there are 184 different 23-cycles in this graph. 
Such a cycle is \[r_2 r_3 r_2 r_3 r_2 r_4 r_2 r_3 r_4 r_3 r_2 r_4 r_2 r_3 r_2 r_4 r_3 r_2 r_4 r_2 r_4 r_2 r_4.\]

\item In their proof, there is also an easily rectifiable mistake in case (5), regarding the cycles of length $\ell=a \cdot (n-1)! + b$ for $1 \leq a \leq n-1$ and $0 \leq b \leq (n-1)!-1$. Specifically, when these authors are considering $a \leq n-3$ and $b \leq (n-2)!+6$, they build a cycle of length $\ell$ beginning with their base cycle with $k=a+1$ by attaching Hamiltonian cycles to the base cycle in all but the first two copies of $BP_{n-1}$ and the last two copies of $BP_{n-1}$. In the second copy, they attach a cycle of length $(n-1)!-(n-2)!$; and in the second to the last copy, they attach a cycle of length $(n-2)!+b-5$. However, a length of $(n-2)!+b-5$ does not guarantee that an edge labeled by $r_{n-1}$ would be present since \[ (n-2)!+b-5 \geq (n-2)!-5,\] which clearly is not greater than $(n-2)!$. A fix to this issue is to attach a cycle of length $(n-1)!-(n-2)!-5$ in the second copy and attach a cycle of length $(n-2)!$ in the second to the last copy. This adjustment will force both such cycles to have sufficient length to have the needed edge, since $n \geq 5$ implies $(n-1)!-(n-2)! \geq 18$ and $(n-1)!-(n-2)!-5 \geq (n-2)!$. That is, there is room to remove five edges from the cycle and it would still have the needed edge. Moreover, this adjustment will not affect the overall length $\ell$.
\end{enumerate}
\end{rem}

\section{The Burnt pancake graph is Hamiltonian and weakly pancyclic}\label{s:pancycle}

Here we present a more precise definition of vertex- and edge-transitive graphs, and present our main result.

\begin{defn}[Vertex- and edge-transitive]
Given a graph $G=(V,E)$, we say that $G$ is 
\begin{description}
    \item[vertex-transitive.] If for any pair $v_1,v_2\in V$, there exists some $\varphi\in \Aut(G)$ such that $\varphi(v_1)=v_2$. Moreover, if $V'\subseteq V$ and for every $v'_1,v'_2\in V'$, there exists $\varphi'\in\Aut(G)$ such that $\varphi'(v'_1)=v'_2$, we say that $V'$ is \emph{vertex-transitive}. 
    \item[edge-transitive.] If for any pair $e_1,e_2\in E$, there exists some $\psi\in \Aut(G)$ such that $\psi(e_1)=e_2$. Moreover, if $E'\subseteq E$ and for every $e'_1,e'_2\in E'$, there exists $\psi'\in\Aut(G)$ such that $\psi'(e'_1)=e'_2$, we say that $E'$ is \emph{edge-transitive}. 
\end{description} 
\end{defn}

Since both $P_n$ and $BP_n$ are Cayley graphs, they are both vertex-transitive. Furthermore, $P_n$ and $BP_n$ are both not edge-transitive. That $P_n$ are not edge-transitive is mentioned in many places, for example, see Kanevsky and Feng~\cite{KF95} or Lakshmivarahan, Jwo, and Dhall~\cite{LJD93}. Furthermore, if $BP_3$ were edge-transitive, then there must be an automorphism $\phi$ such that 
$\phi(r_1)=r_2$. Hence, since $\phi$ is one-to-one and onto, 
$\phi(r_3)=r_3$. Therefore, the pair of edges 
$r_1,r_3$ is mapped to 
$r_2,r_3$. Direct inspection of Figure~\ref{fig:burnt3} gives that 
$r_1,r_3$ form an $8$-cycle whereas 
$r_2,r_3$ form a 12-cycle, and therefore no such $\phi$ can exist. Since $BP_3$ is contained in $BP_n$ with $n\geq3$, then $BP_n$ cannot be edge-transitive. 

In spite of $P_n$ and $BP_n$ not being edge-transitive, certain sets of their edges are. Indeed, the set of edges labeled by the same generators is edge-transitive. We prove this edge-transitivity result in the lemma below. As it is customary, if $G$ is a graph, we denote its vertex and edge set by $V(G)$ and $E(G)$, respectively.

\begin{lem}\label{l:edgetrans}
 Let $E^k_n$ be the set of edges of $BP_n$ labeled by 
 $r_k$, with $1\leq k\leq n$. Then, for every pair $e_1,e_2\in E^k_n$, there exists $\varphi\in \Aut(BP_{n})$ such that $\varphi(e_1)=e_2$.
\end{lem}
\begin{proof}
Recall that the vertices of $BP_n$ are signed permutations. Let $e_1,e_2\in E^k_n$, and therefore the edges $e_1,e_2$ have the form 
$(\pi_1,\pi_1r_k)$ and $(\pi_2,\pi_2r_k)$, respectively. Let $\varphi:V(BP_n)\to V(BP_n)$ be given by $\varphi(x)=\pi_2\pi_1^{-1}x$. It is easy to see that $\varphi$ is an automorphism of $BP_n$ for $e=(\pi,\pi')\in E(BP_n)$ if and only if 
$\pi'=\pi r_m$ for some $1\leq m\leq n$, and if and only if $(\varphi(\pi),\varphi(\pi'))\in E(BP_n)$. Notice that $\varphi(\pi_1)=\pi_2$ and 
$\varphi(\pi_1r_k)=\pi_2r_k$, and so $\varphi(e_1)=e_2$.
\end{proof}

A starting point for constructing arbitrary length cycles is to define what are referred to as \emph{base cycles}, which include an edge in any number of the copies of $BP_{n-1}$ embedded in $BP_n$. To denote these cycles, we label the edges by the pancake generators of $B_n$, where we assume that $n\geq4$. The base cycles are described below. 
    \[
        C_k = \begin{cases}
            r_{k+1} \left(r_{n}r_{n-1}\right)^{2k+2}r_{n-2k-2}r_{n}\left(r_{k+1}r_{k}\right)^{k+1}r_{2k+2}\left(r_{k}r_{k+1}\right)^{k}r_{k}, \\
            \hfill \text{for }1 \leq k < \left\lceil n/2 \right\rceil; \\
            r_{k+1} \left(r_{n}r_{n-1}\right)^{2k+2}r_{n}r_{k+1}\left(r_{n-k-2}r_{n-k-1}\right)^{n-k-2}r_{n-k-2}r_{n} \\
            \qquad \left(r_{n-k-2}r_{n-k-1}\right)^{n-k-2}r_{n-k-2}, \hfill \text{for } \left\lceil n/2 \right\rceil \leq k < n-1; \\
            \left(r_{n}r_{n-1}\right)^{2n}, \hfill \text{for } k=n-1,n.
            \end{cases}
    \]
    
\begin{lem}\label{l:basecycles}
    For $k \in [n]$, the paths $C_{k}$ defined above are cycles in $BP_{n}$.
\end{lem}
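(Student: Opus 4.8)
The plan is to verify directly that each of the three expressions defining $C_k$ corresponds to a closed walk in $BP_n$ that visits distinct vertices, hence is a genuine cycle. Since each $C_k$ is written as a word in the generators $r_1,\dots,r_n$, the fundamental observation is that a word $w = r_{i_1}r_{i_2}\cdots r_{i_m}$ describes a closed walk starting at any fixed vertex $\pi$ if and only if the product $r_{i_1}r_{i_2}\cdots r_{i_m}$ equals the identity in $B_n$; and it describes a cycle (simple closed walk) if and only if, in addition, the partial products $\pi, \pi r_{i_1}, \pi r_{i_1}r_{i_2},\dots$ are pairwise distinct until we return to $\pi$. By vertex-transitivity (Lemma~\ref{l:edgetrans} shows the relevant symmetry, and $BP_n$ is a Cayley graph), it suffices to check these two conditions for a single convenient starting vertex, for instance the identity $[1\,2\,\cdots\,n]$.

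First I would handle the closure condition, namely that each word multiplies out to the identity. This is a finite algebraic computation in $B_n$ for each of the three cases. I would track the action of each prefix reversal $r_i$ on the window notation: $r_i$ reverses the first $i$ entries and negates each of them, leaving positions $i+1,\dots,n$ fixed. For the third case, $C_{n-1}=C_n=(r_nr_{n-1})^{2n}$, I expect to exploit the fact that $r_nr_{n-1}$ is an element of known order in $B_n$; computing that order and checking it divides $2n$ (so that the $2n$-th power, i.e.\ taking the pair $2n$ times, is the identity) settles closure there. For the first two cases I would group the word into the parenthesized blocks $(r_nr_{n-1})^{2k+2}$, $(r_{k+1}r_k)^{k+1}$, etc., compute the net permutation contributed by each block, and then compose the blocks in order, showing the total is the identity.

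Next I would establish that the walk is simple, i.e.\ no intermediate vertex repeats. The cleanest route is to show that the length of each word equals the number of distinct vertices the walk visits, equivalently that the sequence of partial products contains no repetition before the final return. Here I would use the recursive/copy structure: each $r_i$ with $i<n$ keeps the last coordinate fixed and hence stays within one copy $BP_{n-1}(q)$, while each occurrence of $r_n$ (or more generally the largest index appearing) moves the walk between copies. Tracking which copy $BP_{n-1}(q)$ the walk currently occupies, and checking that within any maximal run confined to a single copy the partial products are distinct, reduces the distinctness check to shorter, more transparent subwalks. I would verify that consecutive generator indices never coincide (so no edge is immediately retraced) and that the inter-copy moves visit distinct copies in the required pattern.

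The main obstacle I anticipate is the simplicity (no-repeat) verification for the two intricate cases $1\le k<\lceil n/2\rceil$ and $\lceil n/2\rceil\le k<n-1$, rather than the closure computation, which is routine bookkeeping. The words are long (length growing with $k$ and $n$) and mix several different reversal lengths, so a naive term-by-term expansion is unwieldy; the risk is a collision between a partial product in one block and one in a distant block. I would manage this by arguing structurally: partition the visited vertices according to the value of the last coordinate (which copy of $BP_{n-1}$ they lie in) and according to which coordinates have already been ``frozen'' into their final positions by the large reversals, so that two partial products lying in different cells of this partition cannot coincide, leaving only a bounded, explicitly checkable amount of within-cell comparison. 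I would also note that it suffices to present the argument for one representative $k$ in each range together with the uniform pattern, since the block structure is identical across admissible $k$, with only the exponents shifting.
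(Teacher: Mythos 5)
Your proposal matches the paper's proof in essentials: the paper also verifies each $C_k$ by direct computation starting from the identity, organizing the word into blocks of alternating reversals (observing that $r_i r_{i-1}$ moves the first entry to position $i$ with flipped sign, and $r_{i-1}r_i$ moves the $i$th entry to the front with flipped sign) and tabulating the intermediate signed permutations for each of the three cases. Your added emphasis on explicitly verifying that no intermediate vertex repeats is sound and, if anything, more careful than the paper, which concludes from the return to the identity and leaves distinctness of the intermediate vertices implicit in its tables.
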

\begin{proof}
    Without loss of generality, we can assume that $C_{k}$ begins at the identity $e=[1\,2\,\cdots\,n]$. Direct computation of the resulting signed permutation after each prefix reversal verifies that $C_{k}$ is a cycle for the three cases of the value of $k$ relative to $n$. In Tables~\ref{tab:basecycle1}, \ref{tab:basecycle2}, and \ref{tab:basecycle3} the resulting signed permutation is beside the reversal(s) which was applied to the previous line's signed permutation.
    
    In the paths $C_{k}$, there are several instances of reversals that alternate the labels $r_{i}$ and $r_{i-1}$, in that order, with $1 < i \leq n$. When such a pair of reversals is applied to a given signed permutation, say $w = [w(1)\,w(2)\,\cdots\,w(n)]\in\bn$, the resulting signed permutation is 
    \begin{align*}
        w r_{i} &= [\ul{w(i)}\,\cdots\,\ul{w(1)}\,w(i+1)\,\cdots\,w(n)]\\
        w r_{i}r_{i-1} &= [w(2)\,\cdots\,w(i)\,\ul{w(1)}\,w(i+1)\,\cdots\,w(n)].
    \end{align*}
    Thus, the first character is moved to the $i$th position, with opposite sign, and all of the other characters are in their original relative positions with their original signs.
    
    Alternatively, when the order of the reversals is switched, that is, the labels $r_{i-1}$ and $r_{i}$, in that order, with $1 < i \leq n$, are applied to a signed permutation, say $w = [w(1)\,w(2)\,\cdots\,w(n)]\in\bn$, the resulting signed permutation is
    \begin{align*}
        w r_{i-1} &= [\ul{w(i-1)}\,\cdots\,\ul{w(1)}\,w(i)\,\cdots\,w(n)]\\
        w r_{i-1}r_{i} &= [\ul{w(i)}\,w(1)\,\cdots\,w(i-1)\,w(i+1)\,\cdots\,w(n)].
    \end{align*}
    Thus, the $i$th character is moved to the first position, with opposite sign, and all of the other characters are in their original relative positions with their original signs.
    
    When $1 \leq k < \lceil n/2 \rceil$, one can see in Table~\ref{tab:basecycle1}, following the $r_{k+1}$ reversal, there is a sequence of $4k+4$ reversals that alternate the labels $r_{n}$ and  $r_{n-1}$, in that order, that moves the first $2k+2$ characters to the end of the window notation and changes their signs. Following that, the remaining $n-2k-2$ characters are reversed and the entire window is reversed. Then a sequence of $2k$ reversals labeled by alternating $r_{k+1}$ and $r_{k}$, in that order, keeps the first $k$ characters in the same positions but with opposite signs. The first $2k+2$ characters are reversed, simultaneously placing the last $n-k-1$ in sorted order with correct signs. Then a sequence of $2k$ reversals, labeled by alternating $r_{k}$ and $r_{k+1}$, in that order, moves the $k$ characters following character $(k+1)$ in front of that character, with signs reversed. Finally the first $k$ characters are reversed, returning back to the identity.
    
    When $\lceil n/2 \rceil \leq k < n$, one can see in Table~\ref{tab:basecycle2} that the path begins in the same manner as above, but since $2k+2 > n$ the sequence of $4k+4$ reversals, labeled by alternating $r_{n}$ and $r_{n-1}$, in that order, will move all of the characters, at least once, and $2k+2-n$ characters are moved twice. Then the entire window is reversed followed by a reversal of the first $k+1$ characters. Then a sequence of $2(n-k-2)$ reversals, labeled by alternating $r_{n-k-2}$ and $r_{n-k-1}$, in that order, moves the $n-k-2$ characters in the second position to the $(n-k-1)$th position to the beginning of the window, with opposite signs. The reversal $r_{n-k-2}$ then reverses those characters just moved by the prior sequence. The entire window is reversed with an $r_{n}$. Then a sequence of $2(n-k-2)$ reversals, labeled by alternating $r_{n-k-2}$ and $r_{n-k-1}$, in that order, moves the $n-k-2$ characters in the second position to the $(n-k-1)$th position to the beginning of the window, with opposite signs. Lastly, the first $n-k-2$ characters are reversed to result in the identity.
    
    Finally, the path $C_n$, as can be seen in Table~\ref{tab:basecycle3}, begins with the identity and, after applying $(r_{n}r_{n-1})^{n}$, the $n$ characters of the identity in window notation move to the end of the window, with opposite signs. Thus, the original ordering of the characters is kept but with opposite signs. Then applying $(r_{n}r_{n-1})^{n}$ repeats the same process, which results in the identity.
    
    Since all three types of $C_k$, for $k \in [n]$, begin and end at the identity, then all $C_k$, for $k \in [n]$, are cycles in $BP_{n}$.
\end{proof}
 
The following results are immediate from the construction of the base cycles.

\begin{lem}\label{l:baselength}
The length of the $C_k$ is $8k+11$, when 
$1 \leq k < \left\lceil n/2 \right\rceil$; 
% $1 \leq k < \left\lfloor n/2 \right\rfloor$; 
$4n+2$, when $\left\lceil n/2 \right\rceil \leq k < n-1$; %$\left\lfloor n/2 \right\rfloor \leq k < n-1$
 and $4n$, when $k=n-1,n$. %$k=n-1$.
\end{lem}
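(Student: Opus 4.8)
The plan is to read the lengths directly off the explicit products defining $C_k$, since Lemma~\ref{l:basecycles} has already established that each $C_k$ is a genuine cycle. The quantity to compute is simply the total number of prefix-reversal generators appearing in each case of the piecewise definition, because the length of a cycle equals the number of edges traversed, and each generator $r_i$ in the word contributes exactly one edge. So the proof is a bookkeeping exercise: in each of the three regimes for $k$, I would expand the word $C_k$ and count its letters, being careful about the exponents on the alternating blocks such as $(r_nr_{n-1})^{2k+2}$, which contribute $2(2k+2)$ letters each.

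For the first case, $1 \leq k < \lceil n/2\rceil$, I would tally the letters of
\[
C_k = r_{k+1}\,(r_nr_{n-1})^{2k+2}\,r_{n-2k-2}\,r_n\,(r_{k+1}r_k)^{k+1}\,r_{2k+2}\,(r_kr_{k+1})^k\,r_k.
\]
Counting term by term: the two isolated reversals at the front contribute $1$ and $1$ (from $r_{k+1}$, then later $r_{n-2k-2}$ and $r_n$), the block $(r_nr_{n-1})^{2k+2}$ contributes $2(2k+2)=4k+4$, the block $(r_{k+1}r_k)^{k+1}$ contributes $2(k+1)=2k+2$, the isolated $r_{2k+2}$ contributes $1$, the block $(r_kr_{k+1})^k$ contributes $2k$, and the final $r_k$ contributes $1$. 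Summing the $k$-dependent pieces gives $(4k+4)+(2k+2)+2k=8k+6$, and the isolated letters ($r_{k+1}$, $r_{n-2k-2}$, $r_n$, $r_{2k+2}$, and the trailing $r_k$) contribute $5$, for a total of $8k+11$, matching the claim. For the second case, $\lceil n/2\rceil \leq k < n-1$, the two alternating blocks are $(r_nr_{n-1})^{2k+2}$ and two copies of $(r_{n-k-2}r_{n-k-1})^{n-k-2}$; here the exponents conspire so that the $k$-dependence cancels, leaving the constant $4n+2$, which I would verify by the same letter count. For the final case $k=n-1,n$, the word is simply $(r_nr_{n-1})^{2n}$, whose length is manifestly $2\cdot 2n=4n$.

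The only genuine subtlety, and the step I would guard most carefully, is the algebraic cancellation in the second case: one must confirm that after counting all generators the explicit $k$-terms vanish and yield the $k$-independent value $4n+2$ uniformly across the range $\lceil n/2\rceil \leq k < n-1$. This is not difficult but is the one place where an off-by-one in an exponent would go unnoticed, so I would cross-check the count against the row-by-row derivations already given in Tables~\ref{tab:basecycle2}. Everything else reduces to adding the contributions of the alternating blocks, each $(r_ir_{i-1})^m$ contributing $2m$ edges, together with the isolated reversals, so the result follows immediately from the construction of the base cycles without any further graph-theoretic argument.
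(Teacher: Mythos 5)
Your proposal is correct and matches the paper's approach: the paper states Lemma~\ref{l:baselength} as immediate from the construction of the base cycles, i.e., precisely the letter-count of the words defining $C_k$ that you carry out, with each block $(r_ir_{i-1})^m$ contributing $2m$ edges and each isolated generator one edge. Your verification that the $k$-dependent terms cancel in the middle case ($4k+4$ from the first block against $-2k-4$ twice from the two blocks of $(r_{n-k-2}r_{n-k-1})^{n-k-2}$, giving $4n+2$) is exactly the bookkeeping the paper leaves implicit.
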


\begin{lem}\label{l:copiesmet}
 
    When 
    $1 \leq k < \left\lceil n/2 \right\rceil$, $C_k$ goes through $2k+3$ different copies of $BP_{n-1}$, where there is a single edge in $2k+1$ copies, two edges in a single copy, and $4k+5$ edges in a single copy.
    
    When 
    $\left\lceil n/2 \right\rceil \leq k < n-1$, $C_k$ goes through $2k+4$ different copies of $BP_{n-1}$, where there is a single edge in $2k+2$ copies, $2n-2k-2$ edges in a single copy, and $2n-2k-2$ edges in a single copy.
    
    When 
    $k=n-1,n$, $C_k$ goes through $2n$ different copies of $BP_{n-1}$, where there is a single edge in each copy.
    
\end{lem}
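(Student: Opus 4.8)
Lemma~\ref{l:copiesmet} is a bookkeeping statement: it counts, for each of the three families of base cycles $C_k$, how many copies $BP_{n-1}(q)$ the cycle visits and how the edges are distributed among those copies. Since Lemma~\ref{l:basecycles} has already established that each $C_k$ traces out an honest closed walk returning to the identity, the natural approach is to reuse the very same step-by-step computation recorded in Tables~\ref{tab:basecycle1}, \ref{tab:basecycle2}, and \ref{tab:basecycle3}. The key observation is that a vertex lies in the copy $BP_{n-1}(q)$ precisely when the last character of its window notation equals $q$, and that an edge labeled $r_i$ stays inside a single copy when $i \leq n-1$ (since $r_i$ for $i<n$ never touches position $n$) but crosses between copies exactly when $i=n$ (since $r_n$ changes the final character). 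So I would track the single coordinate $w(n)$, the last entry of the window, all the way along each cycle.

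\textbf{Carrying it out.} First I would read off, for each of the three cases, which prefix reversals in the word for $C_k$ carry the label $r_n$; these are the only transitions between copies, and each block of edges lying strictly between two consecutive $r_n$-edges lives wholly inside one copy. Counting the $r_n$-labelled reversals then gives the number of copy-changes, and hence the number of distinct copies visited (with care taken for the final return, so that the walk closes up in the starting copy $BP_{n-1}(n)$ rather than over-counting it). Concretely, in the case $1 \leq k < \lceil n/2\rceil$, the word $r_{k+1}(r_nr_{n-1})^{2k+2}r_{n-2k-2}r_n(r_{k+1}r_k)^{k+1}r_{2k+2}(r_kr_{k+1})^k r_k$ has its $r_n$ labels coming from the $(r_nr_{n-1})^{2k+2}$ block together with the isolated $r_n$; tracking the value of $w(n)$ through the alternating $r_nr_{n-1}$ moves (using the two-reversal computation already derived in the proof of Lemma~\ref{l:basecycles}, namely that $r_nr_{n-1}$ pushes the first character into position $n$ with opposite sign) shows that each such pair deposits a new value into the last coordinate, and one then tallies that there are $2k+1$ copies entered and exited by a single $r_n$-edge (contributing one edge each), one copy in which two of the cycle's edges lie, and one copy—the one containing the long $r_{k+1},r_k$ alternations—holding the remaining $4k+5$ edges. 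The other two cases are handled identically, reading the multiplicities off the corresponding words and tables.

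\textbf{The main obstacle and the sanity check.} The only real subtlety is the accounting at the seams: an edge labeled $r_n$ joins two copies, so I must decide consistently to which copy each non-$r_n$ edge is charged, and I must make sure the first and last copies (both equal to $BP_{n-1}(n)$, since the cycle starts and ends at the identity) are identified as the same copy and not double-counted. The cleanest way to avoid an off-by-one error is to verify the tallies against Lemma~\ref{l:baselength}: the total number of edges distributed across the copies must equal the stated cycle length in each case. For $1 \leq k < \lceil n/2\rceil$ this reads $(2k+1)\cdot 1 + 2 + (4k+5) = 8k+8$, which is the number of \emph{intra-copy} edges, and adding the $2k+3$ inter-copy $r_n$-edges (one for each copy boundary in a closed walk through $2k+3$ copies) gives $8k+11$, matching Lemma~\ref{l:baselength}; the same cross-check pins down the counts $2k+4$ with $2(2n-2k-2)$ interior edges plus $2k+4$ boundary edges totalling $4n+2$ in the middle case, and $2n$ copies each contributing a single interior edge plus $2n$ boundary edges totalling $4n$ in the last case. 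Once these consistency identities hold, the distribution claimed in the statement is forced, completing the proof.
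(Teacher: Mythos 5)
Your approach---using the fact that only $r_n$-labelled edges change the last window character and hence the copy of $BP_{n-1}$, tracking that character through Tables~\ref{tab:basecycle1}, \ref{tab:basecycle2}, and \ref{tab:basecycle3}, and tallying the edges deposited in each copy---is exactly the intended argument; the paper gives no separate proof at all, declaring the lemma ``immediate from the construction of the base cycles'' in Lemma~\ref{l:basecycles}, which is precisely the inspection you carry out. One arithmetic slip to repair: in your first cross-check the intra-copy total is $(2k+1)\cdot 1+2+(4k+5)=6k+8$, not $8k+8$ (and in the middle case the $2k+2$ copies with a single edge should appear explicitly in the tally, giving $(2k+2)+2(2n-2k-2)+(2k+4)=4n+2$), after which all three totals agree with Lemma~\ref{l:baselength} as you claim.
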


\begin{table}
    \centering
    \begin{tabular}{l|l}
         \hline
        Edge label & Node \\ \hline
         & $[1\,2\,3\,\ldots\,(k-1)\,k\,(k+1)\,(k+2)\,\ldots\,(n-1)\,n]$ \\
        $r_{k+1}$ & $[\ul{k+1}\,\ul{k}\,\ul{k-1}\,\ldots\,\ul{3}\,\ul{2}\,\ul{1}\,(k+2)\,\ldots\,(n-1)\,n]$ \\
        $r_{n}$ & $[\ul{n}\,\ul{n-1}\,\ldots\,\ul{k+2}\,1\,2\,3\,\ldots\,(k-1)\,k\,(k+1)]$ \\
        $r_{n-1}$ & $[\ul{k}\,\ul{k-1}\,\ldots\,\ul{3}\,\ul{2}\,\ul{1}\,(k+2)\,\ldots\,(n-1)\,n\,(k+1)]$ \\
        $\vdots$&\qquad $\vdots$\\
        $ \left(r_{n-1}r_{n}\right)^{2k}$ & $[(2k+2)\,\ldots\,n\,(k+1)\,k\,\ldots\,1\,\ul{k+2}\,\ul{k+3}\,\ldots\,\ul{2k+1}]$  \\
        $r_{n}$ & $[(2k+1)\,\ldots\,(k+2)\,\ul{1}\,\ldots\,\ul{k}\,\ul{k+1}\,\ul{n}\,\ldots\,\ul{2k+2}]$ \\
        $r_{n-1}$ & $[(2k+3)\,(2k+4)\,\ldots\,n\,(k+1)\,k\,\ldots\,1\,\ul{k+2}\,\ul{k+3}\,\ldots\,\ul{2k+2}]$ \\
        $r_{n-2k-2}$ & $[\ul{n}\,\ldots\,\ul{2k+4}\,\ul{2k+3}\,(k+1)\,k\,\ldots\,1\,\ul{k+2}\,\ul{k+3}\,\ldots\,\ul{2k+2}]$ \\
        $r_{n}$ & $[(2k+2)\,\ldots\,(k+2)\,\ul{1}\,\ldots\,\ul{k}\,\ul{k+1}\,(2k+3)\,(2k+4)\,\ldots\,n]$ \\
        $r_{k+1}$ & $[\ul{k+2}\,\ul{k+3}\,\ldots\,\ul{2k+2}\,\ul{1}\,\ldots\,\ul{k}\,\ul{k+1}\,(2k+3)\,(2k+4)\,\ldots\,n]$ \\
        $r_{k}$ & $[(2k+1)\,(2k)\,\ldots\,(k+2)\,\ul{2k+2}\,\ul{1}\,\ldots\,\ul{k}\,\ul{k+1}\,(2k+3)\,\ldots\,n]$ \\
        $\vdots$&\qquad $\vdots$\\
        $ \left(r_{k}r_{k+1}\right)^{k-2}$ & \qquad$\vdots$  \\
        $r_{k+1}$ & $[(k+3)\,\ldots\,(2k+2)\,\ul{k+2}\,\ul{1}\,\ul{2}\ldots
        %$\\ & \hfill $\,\ldots
        \,\ul{k}\,\ul{k+1}\,(2k+3)\,\ldots\,n]$ \\
        $r_{k}$ & $[\ul{2k+2}\,\ul{2k+1}\,\ldots\,\ul{k+2}\,\ul{1}\,\ldots\,\ul{k}\,\ul{k+1}\,(2k+3)\,\ldots\,n]$ \\
        $r_{2k+2}$ & $[(k+1)\,k\,\ldots\,1\,(k+2)\,(k+3)\,\ldots\,(2k+2)\,\ldots\,n]$ \\
        $r_{k}$ & $[\ul{2}\,\ldots\,\ul{k}\,\ul{k+1}\,1\,(k+2)\,(k+3)\,\ldots\,n]$ \\
        $r_{k+1}$ & $[\ul{1}\,(k+1)\,k\,\ldots\,2\,(k+2)\,(k+3)\,\ldots\,n]$ \\
       $ \vdots $&\qquad $\vdots$\\
        $\left(r_{k+1}r_{k}\right)^{k-2}$ &\qquad$\vdots$  \\
        $r_{k}$ & $[\ul{k+1}\,1\,2\,\,3\,\ldots\,(k-1)\,k\,(k+2)\,\ldots\,(n-1)\,n]$ \\
        $r_{k+1}$ & $[\ul{k}\,\ul{k-1}\,\ldots\,\ul{3}\,\ul{2}\,\ul{1}\,(k+1)\,(k+2)\,\ldots\,(n-1)\,n]$ \\
        $r_{k}$ & $[1\,2\,3\,\ldots\,(k-1)\,k\,(k+1)\,(k+2)\,\ldots\,(n-1)\,n]$
    \end{tabular}
    \caption{Base cycle $C_k$ when $1 \leq k < \left\lceil n/2 \right\rceil$. If $k=1$ or $2$, the $(r_{k}r_{k-1})^{k-2}$ portion on the cycle is empty. }
    \label{tab:basecycle1}
\end{table}

\begin{table}
    \centering
    \begin{tabular}{l|l}
         \hline
        Edge label & Node \\ \hline
         & $[1\,2\,3\,\ldots\,(k-1)\,k\,(k+1)\,(k+2)\,\ldots\,(n-1)\,n]$ \\
        
        $r_{k+1}$ & $[\ul{k+1}\,\ul{k}\,\ul{k-1}\,\ldots\,\ul{3}\,\ul{2}\,\ul{1}\,(k+2)\,\ldots\,(n-1)\,n]$ \\
        
        $r_{n}$ & $[\ul{n}\,\ul{n-1}\,\ldots\,\ul{k+2}\,1\,2\,3\,\ldots\,(k-1)\,k\,(k+1)]$ \\
        
        $r_{n-1}$ & $[\ul{k}\,\ul{k-1}\,\ldots\,\ul{3}\,\ul{2}\,\ul{1}\,(k+2)\,\ldots\,(n-1)\,n\,(k+1)]$ \\
        
        $\vdots$ &\qquad $\vdots$\\
        
        $\left(r_{n-1}r_{n}\right)^{2k}$ & $[(n-k)\;\ldots1\;\ul{k+2}\;\ldots
        %$ \\ & \hfill $\ldots
        \,\ul{n}\;\ul{k+1}\;\ul{k}\ldots
        $\\ & \hfill $\ldots
        \,\ul{n-k}\,\ul{n-k+1}]$ \\
        
        $r_{n}$ & $[(n-k+1)\,\ldots\,k\,(k+1)\,n\,\ldots
        %$ \\ & \hfill $\ldots
        \,(k+2)\,\ul{1}\,\ldots\,\ul{n-k}]$ \\
        
        $r_{n-1}$ & $[(n-k-1)\,\ldots\,1\,\ul{k+2}\,\ldots
        $ \\ & \hfill $\ldots
        \,\ul{n}\,\ul{k+1}\,\ul{k}\,\ldots\,\ul{n-k+1}\,\ul{n-k}]$ \\
        
        $r_{n}$ & $[(n-k)\,(n-k+1)\,\ldots$\\ & \hfill $\ldots\,(k+1)\,n\,\ldots
        %$ \\ & \hfill $\ldots
        \,(k+2)\,\ul{1}\,\ldots\,\ul{n-k-1}]$ \\
        
        $r_{k+1}$ & $[\ul{k+2}\,\ldots\,\ul{n}\,\ul{k+1}\,\ul{k}\,\ldots$\\ & \hfill $\ldots\,\ul{n-k+1}\,\ul{n-k}\,\ul{1}\,\ldots
        %$ \\ & \hfill $\ldots
        \,\ul{n-k-1}]$ \\
        
        $r_{n-k-2}$ & $[(n-1)\,\ldots\,(k+2)\,\ul{n}\,\ul{k+1}\,\ul{k}\,\ldots$\\ & \hfill $\ldots\,\ul{n-k+1}\,\ul{n-k}\,\ul{1}\,\ldots
        %$ \\ & \hfill $\ldots
        \,\ul{n-k-1}]$ \\
        
        $r_{n-k-1}$ & $[n\,\ul{k+2}\,\ldots\,\ul{n-1}\,\ul{k+1}\,\ul{k}\,\ldots$\\ & \hfill $\ldots\,\ul{n-k+1}\,\ul{n-k}\,\ul{1}\,\ldots
        %$ \\ & \hfill $\ldots
        \,\ul{n-k-1}]$ \\
        
        $\vdots$&\qquad$\vdots$\\
        
        $\left(r_{n-k-1} r_{n-k-2}\right)^{n-k-4}$ &  \\
        
        $r_{n-k-2}$ & $[\ul{n}\,\ldots\,\ul{k+2}\,\ul{k+1}\,\ul{k}\,\ldots
        $\\ & \hfill $\ldots
        \,\ul{n-k+1}\,\ul{n-k}\,\ul{1}\,\ldots
        %$ \\ & \hfill $\ldots
        \,\ul{n-k-1}]$ \\
        
        $r_{n-k-1}$ & $[(k+3)\,(k+4)\,\ldots\,n\,\ul{k+2}\,\ul{k+1}\,\ldots$\\ & \hfill $\ldots\,\ul{n-k+1}\,\ul{n-k}\,\ul{1}\,\ldots
        %$ \\ & \hfill $\ldots
        \,\ul{n-k-1}]$ \\
        
        $r_{n-k-2}$ & $[\ul{n}\,\ldots\,\ul{k+3}\,\ul{k+2}\,\ul{k+1}\,\ul{k}\,\ldots$\\ & \hfill $\ldots\,\ul{n-k+1}\,\ul{n-k}\,\ul{1}\,\ldots
        %$ \\ & \hfill $\ldots
        \,\ul{n-k-1}]$ \\
        
        $r_{n}$ & $[(n-k-1)\,(n-k-2)\,\ldots$\\ & \hfill $\ldots\,1\,(n-k)\,(n-k+1)\,\ldots\,
        %$\\ & \hfill $\ldots
        \,k\,(k+1)\,\ldots\,n]$ \\
        
        $r_{n-k-2}$ & $[\ul{2}\,\ul{3}\,\ldots\,\ul{n-k-1}\,1\,(n-k)\,(n-k+1)\,\ldots$\\ & \hfill $\ldots\,k\,(k+1)\,\ldots\,n]$ \\
        
        $r_{n-k-1}$ & $[\ul{1}\,(n-k-1)\,\ldots\,2\,(n-k)\,(n-k+1)\,\ldots\,n]$ \\
        
        $\vdots$&\qquad $\vdots$\\
        
        $\left(r_{n-k-1} r_{n-k-2}\right)^{n-k-4}$ & %$[\ul{n-k-3}\,\ul{n-k-4}\,\ldots\,\ul{1}\;(n-k-1)\,\ul{n-k-2}\,(n-k)$
        \\
       %     & \hfill$\ldots\,(n-k+1)\,\ldots\,n]$ \\
        
        $r_{n-k-2}$ & $[\ul{n-k-1}\,1\ldots\,(n-k-3)\,\ul{n-k-2}\,(n-k)
        %\,(n-k+1)\,\ldots$\\ & \hfill $
        \ldots\,n]$ \\
        
        $r_{n-k-1}$ & $[\ul{n-k-2}\,\ldots\,\ul{1}\,(n-k-1)\,(n-k)\,\ldots
        %$ \\ & \hfill $\ldots\,(n-1)
        \,n]$ \\
        
        $r_{n-k-2}$ & $[1\,\ldots\,(n-k-2)\,(n-k-1)\,(n-k)\,\ldots
        %$\\ & \hfill $\ldots
        \,n]$ \\
    \end{tabular}
    \caption{Base cycle $C_k$ when $\left\lceil n/2 \right\rceil \leq k < n-1$. If $n-k = 0,1,2,3,$ or $4$, the $(r_{n-k-1}r_{n-k-2})^{n-k-4}$ portion of the cycle described here is empty.}
    \label{tab:basecycle2}
\end{table}

\begin{table}
    \centering
    \begin{tabular}{l|l}
         \hline
        Edge label & Node \\ \hline
         & $[1\,2\,3\,\ldots\,\ldots\,(n-2)\,(n-1)\,n]$ \\
        $r_{n}$ & $[\ul{n}\,\ul{n-1}\,\ul{n-2}\,\ldots\,\ul{3}\,\ul{2}\,\ul{1}]$ \\
        $r_{n-1}$ & $[2\,3\,4\,\ldots\,(n-1)\,n\,\ul{1}]$ \\
        $r_{n}$ & $[1\,\ul{n}\,\ul{n-1}\,\ldots\,\ul{4}\,\ul{3}\,\ul{2}]$ \\
        $r_{n-1}$ & $[3\,4\,5\,\ldots\,n\,\ul{1}\,\ul{2}]$ \\
        $\vdots$&\qquad$\vdots$\\
          $\left(r_{n-1}r_{n}\right)^{n-2}$ & $[\ul{1}\,\ul{2}\,\ul{3}\,\ldots\,\ul{n-2}\,\ul{n-1}\,\ul{n}]$ \\
          $r_{n}$ & $[n\,(n-1)\,(n-2)\,\ldots\,2\,1]$ \\
          $r_{n-1}$ & $[\ul{2}\,\ul{3}\,\ul{4}\,\ldots\,\ul{n-1}\,\ul{n}\,1]$ \\
        %$\left(r_{n-1}r_{n}\right)^{2n-4}$ & $[\ul{n}\,1\,2\,\ldots\,(n-1)]$ \\
         $\vdots$&\qquad$\vdots$\\
          $\left(r_{n-1}r_{n}\right)^{n-2}$ & $[\ul{n}\,1\,2\,\ldots\,(n-3)\,(n-2)\,(n-1)]$ \\
        $r_{n}$ & $[\ul{n-1}\,\ul{n-2}\,\ul{n-3}\,\ldots\,\ul{2}\,\ul{1}\,n]$ \\
        $r_{n-1}$ & $[1\,2\,3\,\ldots\,\ldots\,(n-2)\,(n-1)\,n]$
    \end{tabular}
    \caption{Base cycle $C_{n}$, with $n\geq4$.}
    \label{tab:basecycle3}
\end{table}

%The following observation will be useful for cycles with  
%$\left\lceil n/2 \right\rceil \leq k \leq n-1$.
%% $\left\lfloor n/2 \right\rfloor \leq k < n-1$.
%\begin{obs}\label{o:ineq}
%When 
%$\left\lceil n/2 \right\rceil \leq k \leq n-1$,
%% $\left\lfloor n/2 \right\rfloor \leq k < n-1$,
%then $n-k \leq k+1$.
%\end{obs}

We now make several observations regarding certain inequalities. These will be needed within the proof of Theorem~\ref{thm:main}. They can all be easily proven by induction or elementary algebraic manipulation.

\begin{obs}\label{o:ineq}
When 
$\left\lceil n/2 \right\rceil \leq k \leq n-1$,
% $\left\lfloor n/2 \right\rfloor \leq k < n-1$,
then $n-k \leq k+1$.
\end{obs}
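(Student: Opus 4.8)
The plan is to reduce the claimed inequality to the equivalent statement $2k \geq n-1$ and then invoke the hypothesis $k \geq \lceil n/2 \rceil$. First I would rewrite $n-k \leq k+1$ by collecting the $k$-terms: adding $k$ to both sides and subtracting $1$ shows that the inequality is equivalent to $n-1 \leq 2k$. So the entire statement reduces to verifying that $2k \geq n-1$ whenever $\lceil n/2 \rceil \leq k \leq n-1$.

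Next I would use the lower bound on $k$ supplied by the hypothesis. Since $k \geq \lceil n/2 \rceil$ and the ceiling satisfies $\lceil n/2 \rceil \geq n/2$ for every integer $n$, I obtain
\[
2k \;\geq\; 2\left\lceil \tfrac{n}{2}\right\rceil \;\geq\; 2\cdot\frac{n}{2} \;=\; n \;>\; n-1,
\]
which gives $2k \geq n-1$ and hence the desired conclusion after reversing the first step. In fact this argument produces the slightly stronger bound $n-k \leq k$ (since $2k \geq n$), but the weaker form stated in the observation is all that is needed within the proof of Theorem~\ref{thm:main}.

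There is essentially no obstacle here; the only point requiring a moment of care is the behavior of the ceiling function for the two parities of $n$. For even $n$ one has $\lceil n/2 \rceil = n/2$, giving $2k \geq n$; for odd $n$ one has $\lceil n/2 \rceil = (n+1)/2$, giving the even stronger $2k \geq n+1$. Either way the inequality $2k \geq n-1$ holds with room to spare, so the observation follows immediately. Since the claim is a purely arithmetic fact about integers, no induction is needed, though one could equivalently verify it by checking the boundary case $k = \lceil n/2 \rceil$ and noting that the difference $(k+1)-(n-k) = 2k+1-n$ is nondecreasing in $k$.
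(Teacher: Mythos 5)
Your proof is correct: the rearrangement to $2k \geq n-1$ followed by the bound $k \geq \lceil n/2 \rceil \geq n/2$ is exactly the kind of ``elementary algebraic manipulation'' the paper invokes for this observation (the paper gives no explicit argument, deeming it routine). Your parenthetical remark that one in fact gets the stronger bound $n-k \leq k$ is also valid, and nothing further is needed.
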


\begin{obs}\label{o:1ii1}
For $n \geq 5$, $2^{n-2}(n-2)! < 2^{n-1}(n-2)!-3n-9 < 2^{n-1}(n-1)!$. 
\end{obs}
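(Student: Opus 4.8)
The plan is to treat the two inequalities separately after rewriting the outer terms in a more convenient form. Writing $B := 2^{n-1}(n-2)!$, I would observe that the leftmost quantity is exactly $B/2 = 2^{n-2}(n-2)!$, and that the rightmost quantity is $2^{n-1}(n-1)! = (n-1)B$, since $(n-1)! = (n-1)(n-2)!$. With this notation the claim becomes the pair of inequalities $B/2 < B - (3n+9)$ and $B - (3n+9) < (n-1)B$, which I would verify in turn.

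The second inequality is the easy one, so I would dispatch it first. Subtracting $B$ from both sides reduces it to $-(3n+9) < (n-2)B = (n-2)2^{n-1}(n-2)!$. For $n \geq 5$ the left-hand side is negative while the right-hand side is strictly positive (as $n-2 \geq 3$), so the inequality holds trivially and requires no further estimate.

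The first inequality carries the actual content. Rearranging $B/2 < B - (3n+9)$ gives $3n+9 < B/2 = 2^{n-2}(n-2)!$, so everything reduces to establishing
\[ 3n + 9 < 2^{n-2}(n-2)! \qquad \text{for all } n \geq 5. \]
I would prove this by induction on $n$. For the base case $n=5$ one checks $3\cdot 5 + 9 = 24 < 48 = 2^{3}\cdot 3!$. For the inductive step, set $f(n) := 2^{n-2}(n-2)!$ and note that $f(n+1) = 2(n-1)f(n)$; since $2(n-1) \geq 8$ for $n \geq 5$, the inductive hypothesis $f(n) > 3n+9$ yields $f(n+1) > 8(3n+9) = 24n + 72 > 3n + 12 = 3(n+1)+9$, which completes the step.

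The only place requiring any care is the growth estimate in the inductive step, but even there the multiplicative factor $2(n-1)$ is so much larger than $1$ that the factorial–exponential product $f(n)$ overwhelms the linear term $3n+9$ immediately. There is thus no genuine obstacle; the entire statement follows from the single base-case check together with this clean comparison of growth rates, exactly the sort of elementary induction anticipated in the text preceding the observation.
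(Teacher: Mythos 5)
Your proof is correct, and it follows exactly the route the paper anticipates: the paper offers no explicit proof of this observation, stating only that it "can all be easily proven by induction or elementary algebraic manipulation," which is precisely your combination of an algebraic reduction (to $3n+9 < 2^{n-2}(n-2)!$ and a trivially signed comparison) followed by a short induction. Nothing further is needed.
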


\begin{obs}\label{o:1ii2} 
For $n \geq 3$, $2^{n-1}(n-2)!+2^{n-2}(n-2)!< 2^{n-1}(n-1)!$.
\end{obs}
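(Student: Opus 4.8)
The plan is to reduce this inequality to an elementary linear comparison in $n$ by extracting the largest common factor from both sides. Every term on either side carries a factor of $2^{n-2}(n-2)!$, so I would begin by pulling this factor out, after which the statement should collapse to a trivial inequality.

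Concretely, on the left-hand side I would use $2^{n-1}+2^{n-2}=2^{n-2}(2+1)$ to write
\[
2^{n-1}(n-2)! + 2^{n-2}(n-2)! = (2^{n-1}+2^{n-2})(n-2)! = 3\cdot 2^{n-2}(n-2)!,
\]
and on the right-hand side I would use $(n-1)! = (n-1)(n-2)!$ together with $2^{n-1}=2\cdot 2^{n-2}$ to write
\[
2^{n-1}(n-1)! = 2(n-1)\cdot 2^{n-2}(n-2)!.
\]
Thus the claimed inequality is equivalent to
\[
3\cdot 2^{n-2}(n-2)! < 2(n-1)\cdot 2^{n-2}(n-2)!.
\]

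Since $2^{n-2}(n-2)! > 0$ for every $n \geq 3$, dividing through by this common positive factor preserves the direction of the inequality and leaves the condition $3 < 2(n-1)$, i.e.\ $n > 5/2$, which holds for every integer $n \geq 3$. This completes the argument. There is essentially no obstacle beyond the algebraic bookkeeping; the only point that genuinely requires attention is verifying that the quantity we cancel is strictly positive, so that the inequality is not reversed, and this is immediate for $n \geq 3$.
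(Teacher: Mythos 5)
Your proof is correct, and it follows exactly the route the paper intends: the paper offers no explicit proof, stating only that these observations "can all be easily proven by induction or elementary algebraic manipulation," and your factoring of $2^{n-2}(n-2)!$ to reduce the inequality to $3 < 2(n-1)$ is precisely such an elementary algebraic manipulation.
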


\begin{obs}\label{o:1ii3}
For $n \geq 3$, $2^{n-1}(n-1)!-2^{n-1}(n-2)!> 2^{n-2}(n-2)!$.
\end{obs}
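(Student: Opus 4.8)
The plan is to reduce the claim to a trivial linear inequality by factoring out the structure common to both sides. First I would rewrite the left-hand side using the identity $(n-1)! = (n-1)(n-2)!$, which gives
\[
2^{n-1}(n-1)! - 2^{n-1}(n-2)! = 2^{n-1}(n-2)!\bigl[(n-1)-1\bigr] = 2^{n-1}(n-2)!\,(n-2).
\]
With this simplification the asserted inequality becomes $2^{n-1}(n-2)!\,(n-2) > 2^{n-2}(n-2)!$.

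Next I would divide both sides by the quantity $2^{n-2}(n-2)!$, which is strictly positive since powers of two and factorials are positive; this is the step that preserves the direction of the inequality. Cancelling removes all of the shared factors and leaves the equivalent statement $2(n-2) > 1$, that is, $n > 5/2$. Because we are assuming $n \geq 3$, this holds with room to spare (the left-hand side equals $2$ already at $n=3$), which completes the argument.

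There is essentially no genuine obstacle here; the only points deserving a word of care are verifying that the cancelled factor is positive so that the inequality is not reversed, and checking the boundary value $n=3$, where $2(n-2)=2>1$. An induction on $n$ would serve equally well, with base case $n=3$ given by the computation above and an inductive step in which passing from $n$ to $n+1$ multiplies both sides by matching positive factors while the surplus on the left only grows; however, I would prefer the direct factoring argument since it is shorter and makes the sharpness of the bound transparent.
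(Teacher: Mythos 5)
Your proof is correct: factoring out $2^{n-1}(n-2)!$ and cancelling the positive quantity $2^{n-2}(n-2)!$ reduces the claim to $2(n-2)>1$, which holds for $n\geq 3$. The paper gives no explicit proof, stating only that these observations "can all be easily proven by induction or elementary algebraic manipulation," and your argument is precisely the elementary algebraic manipulation the authors had in mind.
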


\begin{obs}\label{o:smallALowerBound}
For $n \geq 4$, $2^{n-2}(n-2)! < 2^{n-1}(n-1)!-17 < 2^{n-1}(n-1)!$.
\end{obs}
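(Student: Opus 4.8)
The statement to establish is Observation~\ref{o:smallALowerBound}, namely that $2^{n-2}(n-2)! < 2^{n-1}(n-1)!-17 < 2^{n-1}(n-1)!$ for $n \geq 4$. This is a purely numerical double inequality, so the plan is to dispatch each half separately with elementary algebra rather than any structural argument about the graph.

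The right-hand inequality $2^{n-1}(n-1)!-17 < 2^{n-1}(n-1)!$ is immediate, since subtracting the positive constant $17$ strictly decreases the quantity. So the only content lies in the left-hand inequality $2^{n-2}(n-2)! < 2^{n-1}(n-1)!-17$, which I would rephrase as the claim that the gap $2^{n-1}(n-1)! - 2^{n-2}(n-2)!$ strictly exceeds $17$.

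The key step is to factor this gap explicitly. Writing $2^{n-1}(n-1)! = 2(n-1)\cdot 2^{n-2}(n-2)!$, I obtain
\[
2^{n-1}(n-1)! - 2^{n-2}(n-2)! = 2^{n-2}(n-2)!\,\bigl(2(n-1)-1\bigr) = 2^{n-2}(n-2)!\,(2n-3).
\]
It then suffices to show $2^{n-2}(n-2)!\,(2n-3) > 17$ for all $n \geq 4$. At the base case $n=4$ this equals $2^{2}\cdot 2!\cdot 5 = 40 > 17$, and each of the three factors $2^{n-2}$, $(n-2)!$, and $(2n-3)$ is nondecreasing (indeed eventually strictly increasing) in $n$, so the product is at least $40$ for every $n \geq 4$. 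This settles the left-hand inequality and completes the argument.

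There is no real obstacle here: the only thing to be careful about is confirming the base case $n=4$ numerically (giving $8 < 31 < 48$) so that the monotonicity argument has a valid starting point, and ensuring the factorization of the gap is carried out correctly. An equally acceptable alternative, if one prefers to avoid the explicit factorization, is a short induction on $n$ anchored at $n=4$, using the fact that multiplying by $2(n-1) \geq 6$ when passing from $n-1$ to $n$ grows the dominant term far faster than the subtracted constant; but the direct factorization above is cleaner and self-contained.
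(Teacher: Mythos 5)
Your proof is correct: the factorization $2^{n-1}(n-1)! - 2^{n-2}(n-2)! = 2^{n-2}(n-2)!\,(2n-3)$ is valid, the base case $n=4$ gives $40>17$ (i.e.\ $8<31<48$), and the monotonicity argument closes the claim. The paper gives no explicit proof, stating only that its observations follow ``by induction or elementary algebraic manipulation,'' and your argument is precisely such a manipulation, so it matches the paper's intended approach.
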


\begin{obs}\label{o:bigAUpperBound}
For $n \geq 3$, $2^{n-2}(n-2)! < 2^{n-2}(n-1)!-n-1 < 2^{n-1}(n-1)!$.
\end{obs}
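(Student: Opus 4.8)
The plan is to prove the two inequalities in the chain separately. Writing $A = 2^{n-2}(n-2)!$, $B = 2^{n-2}(n-1)! - n - 1$, and $C = 2^{n-1}(n-1)!$, the claim is exactly $A < B < C$, and the two halves are of very different character.

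The right inequality $B < C$ is immediate and in fact holds for every $n \geq 1$. I would simply compute
\[
C - B = 2^{n-1}(n-1)! - 2^{n-2}(n-1)! + n + 1 = 2^{n-2}(n-1)! + n + 1,
\]
which is a sum of strictly positive terms; hence $C - B > 0$ with no case analysis required.

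The left inequality $A < B$ carries the real content. I would clear the constant and collect the factorial terms, using $(n-1)! - (n-2)! = (n-2)!\,(n-2)$, to rewrite it as
\[
A < B \quad\Longleftrightarrow\quad 2^{n-2}(n-2)!\,(n-2) > n + 1 .
\]
The task then reduces to this single elementary inequality. I would establish it by checking the base case directly and then running an easy induction: as $n$ increases by one the left-hand side multiplies by at least a factor of $2(n-1)$, while the right-hand side only increases by $1$, so the inequality is comfortably preserved. Equivalently, one can observe that once $n-2 \geq 2$ each of the factors $2^{n-2}$, $(n-2)!$, and $(n-2)$ on the left is at least $2$, already forcing the product well above the linear quantity $n+1$.

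The only delicate point is the behavior for the smallest admissible values of $n$: the reduced inequality $2^{n-2}(n-2)!\,(n-2) > n+1$ has its tightest margin at the low end of the range, so the base case of the induction must be verified numerically rather than folded into the general growth estimate. Past that single check there is no genuine obstacle, as the whole observation is routine algebra once the chain is split and the left half is put into the product form above.
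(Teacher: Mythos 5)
Your reduction is algebraically sound: the right-hand inequality follows from $2^{n-1}(n-1)! - \bigl(2^{n-2}(n-1)! - n - 1\bigr) = 2^{n-2}(n-1)! + n + 1 > 0$, and the left-hand inequality is indeed equivalent to $2^{n-2}(n-2)!\,(n-2) > n+1$. The step that fails is precisely the one you defer to, the ``numerical base case check.'' At the smallest admissible value $n=3$ the reduced inequality reads $2^{1}\cdot 1!\cdot 1 = 2 > 4$, which is false; equivalently, the middle term of the chain is $2^{1}\cdot 2! - 3 - 1 = 0$ while the left term is $2^{1}\cdot 1! = 2$, so Observation~\ref{o:bigAUpperBound} as stated is simply false at $n=3$. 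No proof over the stated range $n \geq 3$ can exist, and your assertion that ``past that single check there is no genuine obstacle'' hides exactly the point where the argument collapses. For $n \geq 4$ everything you say is fine: the base case is $16 > 5$, and your growth estimate (the left side multiplies by at least $2(n-1)$ while the right side grows by $1$) closes the induction.

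To be fair, the defect originates in the paper: these observations carry no explicit proof (they are dismissed as provable ``by induction or elementary algebraic manipulation''), and the stated range $n \geq 3$ is too generous for this one. The error is harmless to the paper's main result, since the observation is invoked only inside the inductive step of Theorem~\ref{thm:main}, where $n \geq 5$. But a correct write-up of your proof must restate the observation for $n \geq 4$ (or otherwise weaken the left inequality), verify the $n=4$ case, and only then run the induction.
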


\begin{obs}\label{o:bigALowerBound}
For $n \geq 3$, $2^{n-2}(n-2)! < 2^{n-1}(n-1)! - 4 < 2^{n-1}(n-1)!$.
\end{obs}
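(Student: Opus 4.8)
The plan is to treat the statement as two independent one-sided inequalities, since the middle quantity $2^{n-1}(n-1)!-4$ must be squeezed between $2^{n-2}(n-2)!$ and $2^{n-1}(n-1)!$. The right-hand inequality $2^{n-1}(n-1)!-4 < 2^{n-1}(n-1)!$ is immediate, as subtracting the positive constant $4$ strictly decreases the value; no further work is needed there.

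For the left-hand inequality $2^{n-2}(n-2)! < 2^{n-1}(n-1)!-4$, I would first rewrite the larger term so that a common factor appears. Using $2^{n-1}(n-1)! = 2(n-1)\cdot 2^{n-2}(n-2)!$ and abbreviating $X := 2^{n-2}(n-2)!$, the target inequality becomes $X < 2(n-1)X - 4$, equivalently $(2n-3)X > 4$. Thus the whole claim reduces to establishing $(2n-3)\cdot 2^{n-2}(n-2)! > 4$ for all $n\geq 3$.

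This reduced inequality follows by a one-line monotonicity argument: both factors $2n-3$ and $2^{n-2}(n-2)!$ are positive and nondecreasing in $n$ for $n\geq 3$, so it suffices to verify the base case $n=3$, where $(2n-3)\cdot 2^{n-2}(n-2)! = 3\cdot 2 = 6 > 4$. One could equally run a short induction on $n$ with the same base case, the inductive step being trivial because increasing $n$ only enlarges the left-hand side. The only point requiring any attention is this base case $n=3$, since that is precisely where the additive slack of $-4$ is tightest relative to the product; for all larger $n$ the left-hand side grows factorially and the inequality becomes very loose. Consequently I expect no genuine obstacle here, the argument being purely elementary algebraic manipulation followed by a single base-case check.
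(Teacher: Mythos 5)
Your proposal is correct: the reduction to $(2n-3)\cdot 2^{n-2}(n-2)! > 4$ via the identity $2^{n-1}(n-1)! = 2(n-1)\cdot 2^{n-2}(n-2)!$ is valid, and the base case $n=3$ together with monotonicity settles it. The paper gives no explicit proof, stating only that its observations ``can all be easily proven by induction or elementary algebraic manipulation,'' which is precisely the kind of argument you supply, so your approach is essentially the same as the paper's intended one.
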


\begin{obs}\label{o:lowerBound3}
For $n \geq 4$, $2^{n-2}(n-2)! < 2^{n-2}(n-1)! - n + 1 < 2^{n-1}(n-1)!$.
\end{obs}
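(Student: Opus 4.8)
The plan is to prove both inequalities by elementary manipulation, using the identity $(n-1)! = (n-1)(n-2)!$ to pull out the common factor $2^{n-2}(n-2)!$; there is no genuine obstacle here, as the claim is purely arithmetic, but one should keep track of why the hypothesis is $n \geq 4$ rather than $n \geq 3$.

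First I would dispatch the right-hand inequality $2^{n-2}(n-1)! - n + 1 < 2^{n-1}(n-1)!$. Writing $2^{n-1}(n-1)! = 2 \cdot 2^{n-2}(n-1)!$ and subtracting $2^{n-2}(n-1)!$ from both sides reduces it to $1 - n < 2^{n-2}(n-1)!$. For every $n \geq 2$ the left side is nonpositive while the right side is strictly positive, so this holds trivially and requires no case analysis.

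For the left-hand inequality $2^{n-2}(n-2)! < 2^{n-2}(n-1)! - n + 1$, I would substitute $2^{n-2}(n-1)! = (n-1)\cdot 2^{n-2}(n-2)!$ and move $2^{n-2}(n-2)!$ to the right, obtaining the equivalent statement $(n-2)\cdot 2^{n-2}(n-2)! > n-1$. To finish I would use the crude bound $(n-2)\cdot 2^{n-2}(n-2)! \geq 2^{n-2}$, valid since $(n-2)(n-2)! \geq 1$ for $n \geq 3$, together with $2^{n-2} > n-1$ for $n \geq 4$ (checked at $n=4$, where $4 > 3$, and preserved thereafter since the left side doubles while the right side increments). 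This is exactly where the hypothesis matters: at $n=3$ one gets $(n-2)\cdot 2^{n-2}(n-2)! = 2 = n-1$, so the strict inequality fails, which is precisely why the statement is restricted to $n \geq 4$. The only step warranting a written line is this boundary check; everything else is routine algebra or a one-line induction.
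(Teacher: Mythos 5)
Your proof is correct and takes exactly the route the paper intends: the paper gives no explicit argument for this observation, stating only that the inequalities ``can all be easily proven by induction or elementary algebraic manipulation,'' which is precisely your combination of factoring out $2^{n-2}(n-2)!$ and the one-line induction $2^{n-2} > n-1$ for $n \geq 4$. Your boundary check showing that the left inequality degenerates to equality at $n=3$ (since $(n-2)\,2^{n-2}(n-2)! = 2 = n-1$ there) is a nice touch that explains why the hypothesis is $n \geq 4$.
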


We are now ready to state and prove our first main result. 

\begin{thm}\label{thm:main}
For $BP_n$, with $n\geq2$, there exists a cycle of length $\ell$, with $8 \leq \ell \leq 2^n n!$, embedded in the graph. In particular, $BP_n$ has a Hamiltonian cycle.
\end{thm}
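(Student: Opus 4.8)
The plan is to induct on $n$, exploiting the decomposition of $BP_n$ into the $2n$ disjoint copies $BP_{n-1}(q)$, $q\in[\pm n]$, glued along the edges labeled $r_n$ (the only generator that changes the last window entry, so that $r_1,\dots,r_{n-1}$ are intra-copy edges). The base case is $n=2$, where $BP_2$ is itself an $8$-cycle and $8=2^2\cdot 2!$, so there is nothing to prove; I would dispatch $n=3$ and $n=4$ by direct computation, since the base cycles $C_k$ are only defined for $n\ge 4$. For the inductive step, assume $BP_{n-1}$ has a cycle of every length $\ell$ with $8\le\ell\le 2^{n-1}(n-1)!$. Every such cycle already embeds in $BP_n$ inside a single copy $BP_{n-1}(q)$, so all target lengths $8\le\ell\le 2^{n-1}(n-1)!$ are immediate, and it remains to produce cycles of every length $\ell$ with $2^{n-1}(n-1)!<\ell\le 2^n n!$.

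For these larger lengths I would start from a base cycle $C_k$ (Lemma~\ref{l:basecycles}) and enlarge it copy by copy. The mechanism is: whenever $C_k$ meets a copy $BP_{n-1}(q)$ in a single intra-copy edge $e=(u,v)$ (Lemma~\ref{l:copiesmet} records how many such copies each $C_k$ has), I replace $e$ by the $u$--$v$ path obtained from a length-$m$ cycle of that copy through $e$ by deleting $e$; this leaves a valid cycle in $BP_n$ and changes its length by $m-2$. Since by induction each copy has cycles of every length $8\le m\le 2^{n-1}(n-1)!$, each single-edge copy can be inflated by any amount in $\{0\}\cup\{6,7,\dots,2^{n-1}(n-1)!-2\}$, independently of the others, so summing the base length from Lemma~\ref{l:baselength} with the inflations of several copies realizes a long contiguous band of lengths. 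The Hamiltonian cycle $\ell=2^n n!$ is the extreme case: take $C_n$, which passes through all $2n$ copies with one edge in each, and inflate every copy to a Hamiltonian path; the count $2n\bigl(2^{n-1}(n-1)!-1\bigr)+2n=2^n n!$ confirms this is exactly Hamiltonian.

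The bulk of the argument is then a case analysis on $\ell$ that selects an appropriate $C_k$ and a distribution of inflations summing to $\ell$ minus the base length. Writing $\ell$ roughly as $a\cdot 2^{n-1}(n-1)!+b$ and choosing a base cycle passing through about $a$ copies, I would fully inflate all but one or two copies to Hamiltonian paths and use a remaining copy to tune the residue $b$. I would organize the cases so that consecutive choices of $k$ yield overlapping length-bands, leaving no gaps; the Observations~\ref{o:ineq}--\ref{o:lowerBound3} are precisely the inequalities certifying these overlaps and guaranteeing that the inflation targets stay inside the admissible per-copy range $[8,2^{n-1}(n-1)!]$.

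The main obstacle is twofold. First, because the girth of a copy is $8$, a single copy can only be inflated by $0$ or by at least $6$, so the residues $1,\dots,5$ are unreachable within one copy; the case analysis must therefore absorb such small residues into the inductively covered range $\ell\le 2^{n-1}(n-1)!$, or realize them by combining two copies or by re-routing inside a copy the base cycle already meets in several edges (Lemma~\ref{l:copiesmet}). Handling these boundary residues with no gaps — the analogue of the missing $23$-cycle in the pancake graph discussed above — is the delicate part, and is exactly what the Observations are designed to control. Second, the inflation step needs a cycle of the prescribed length through the \emph{specific} edge $e$ dictated by the base cycle, which is slightly stronger than the bare inductive statement; I would supply this by noting that the single edges used by each $C_k$ carry a fixed generator label and invoking the edge-transitivity of same-label edges (Lemma~\ref{l:edgetrans}) to transport a cycle of the desired length onto $e$. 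Verifying that these two ingredients mesh over the entire range $2^{n-1}(n-1)!<\ell\le 2^n n!$ is where the real work lies.
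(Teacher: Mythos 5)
Your outline follows the paper's proof almost step for step (induction with computer-checked bases $n=3,4$, the base cycles $C_k$, inflation of single-edge copies, the decomposition $\ell=a\cdot 2^{n-1}(n-1)!+b$, and $C_n$ inflated everywhere for the Hamiltonian cycle), but there is one genuine gap, and it sits at the central mechanism. You claim that each single-edge copy can be inflated by any amount in $\{0\}\cup\{6,7,\dots,2^{n-1}(n-1)!-2\}$, justified by the inductive hypothesis together with the edge-transitivity of same-label edges (Lemma~\ref{l:edgetrans}). That justification does not go through: an automorphism carrying one $r_{n-1}$-labeled edge to another can transport a cycle onto the prescribed edge $e$ only if the cycle already contains \emph{some} $r_{n-1}$-labeled edge, and the bare inductive statement gives no control over which edges a length-$m$ cycle uses. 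For $m\le 2^{n-2}(n-2)!$ the cycle handed to you by induction may use no $r_{n-1}$ edge at all — it can lie entirely inside a copy of $BP_{n-2}$ embedded in $BP_{n-1}$ — and then no automorphism whatsoever can move it onto $e$.

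The missing idea is the counting observation the paper states at the very start of its induction step: any cycle in $BP_{n-1}$ of length greater than $2^{n-2}(n-2)!$, the number of vertices of $BP_{n-2}$, cannot be confined to a $BP_{n-2}$ copy and hence must use an $r_{n-1}$ edge; only then does Lemma~\ref{l:edgetrans} let you place it on $e$. So the true admissible per-copy insertion range is $(2^{n-2}(n-2)!,\,2^{n-1}(n-1)!]$, not $[8,\,2^{n-1}(n-1)!]$ as you state — this is exactly what the paper's Observations certify, their recurring lower bound being $2^{n-2}(n-2)!$ rather than $8$. This also changes the shape of the step you flag as delicate: the obstruction to tuning the residue in a single copy is not the girth (residues $1$–$5$) but the far larger forbidden band of all inflations below $2^{n-2}(n-2)!-2$. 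The paper's case analysis is engineered around precisely this: whenever the needed adjustment is small (Subcase 1(ii), and Cases 2 and 3), it splits the adjustment across \emph{two} copies, inserting cycles of lengths such as $2^{n-1}(n-2)!+b-6a-9$ and $2^{n-1}(n-1)!-2^{n-1}(n-2)!$, each chosen to stay above the threshold. With that observation added and your inflation range corrected accordingly, your plan becomes the paper's proof; without it, the residue-tuning step fails.
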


The idea of the proof can be summarized as follows: We use the fact that there are $2n$ copies of $BP_{n-1}$ embedded in $BP_n$, use the induction hypothesis and combine different cycles of length $k_i$, with $8\leq k_i\leq 2^{n-1}(n-1)!$ in $i$ of the copies of $BP_{n-1}$ with a base cycle, removing edges in common to obtain a larger cycle.  Inside the copies of $BP_{n-1}$, we use cycles having edges in $E^{n-1}_{n}$, and since the set $E^{n-1}_{n}$ is edge-transitive, we assume we can build cycles inside copies of $BP_{n-1}$ using the same edges labeled by $r_{n-1}$ in the base cycles. We make this process rigorous in the proof that follows. A schematic representation of the proof is given in Figure~\ref{f:proof}.

\begin{figure}[t]
\centering
\begin{tikzpicture}[scale=0.5,line cap=round,line join=round,>=triangle 45,x=1.0cm,y=1.0cm]
\draw [rotate around={45.:(-0.22,-1.24)}] (-0.22,-1.24) ellipse (1.7708575554888042cm and 1.5543283056779813cm);
\draw [rotate around={45.:(3.8715139033861563,2.7028644481044513)}] (3.8715139033861563,2.7028644481044513) ellipse (1.7708575554888022cm and 1.5543283056779795cm);
\draw [rotate around={45.:(9.8,4.6)}] (9.8,4.6) ellipse (1.7708575554888104cm and 1.5543283056779869cm);
\draw [rotate around={45.000000000000135:(14.94,-0.5)}] (14.94,-0.5) ellipse (1.7708575554888109cm and 1.554328305677988cm);
\draw [rotate around={45.:(7.0590794409271895,-5.5085417101584655)}] (7.0590794409271895,-5.5085417101584655) ellipse (1.7708575554888124cm and 1.5543283056779889cm);
\draw [thick] (-0.7264910708510319,-0.5963232353696936)-- (-0.6936802465963368,-1.9415670298121919);
\draw [thick] (2.9811320699295125,3.1769215539202413)-- (3.0139428941842086,1.8316777594777416);
\draw [thick] (9.149567029812188,5.014327712183163)-- (9.182377854066884,3.6690839177406662);
\draw [thick] (14.33367726205401,0.1255148982335983)-- (14.366488086308706,-1.2197288962089008);
\draw [thick] (7.344971695803958,-4.796108739970666)-- (7.377782520058654,-6.141352534413165);
\draw [thick] (-0.7240917061760453,-0.6946971870441451)-- (3.0123823317939733,1.8956608174773852);
\draw [thick] (2.9811905910191463,3.1745221892452538)-- (9.15038632506706,4.980736606733352);
\draw [thick] (9.181578065841888,3.7018752349654838)-- (14.336096133758875,0.026341158334151432);
\draw [thick,dash pattern=on 6pt off 6pt] (14.366488086308706,-1.2197288962089008)-- (7.377782520058654,-6.141352534413165);
\draw [thick] (7.346590779283827,-4.862491162645297)-- (-0.6936802465963368,-1.9415670298121919);
\filldraw (-0.7264910708510319,-0.7) circle (3pt);
\filldraw (2.9811320699295125,3.1769215539202413) circle (3pt);
\filldraw (9.149567029812188,5) circle (3pt);
\filldraw (14.33367726205401,0.095) circle (3pt);
\filldraw (7.344971695803958,-4.85) circle (3pt);
\filldraw (-0.7264910708510319,-1.9) circle (3pt);
\filldraw (2.9811320699295125,1.9) circle (3pt);
\filldraw (9.16,3.8) circle (3pt);
\filldraw (14.33367726205401,-1.2) circle (3pt);
\filldraw (7.344971695803958,-6.1) circle (3pt);
\draw [shift={(0.08441459317116995,-1.2987833140487204)},dash pattern=on 6pt off 6pt]  plot[domain=-2.4511387732914405:2.499909591636878,variable=\t]({1.*1.0092583836280562*cos(\t r)+0.*1.0092583836280562*sin(\t r)},{0.*1.0092583836280562*cos(\t r)+1.*1.0092583836280562*sin(\t r)});
\draw [shift={(3.824848558206411,2.540083123750604)},dash pattern=on 6pt off 6pt]  plot[domain=-2.4511387732914414:2.4999095916368783,variable=\t]({1.*1.0092583836280566*cos(\t r)+0.*1.0092583836280566*sin(\t r)},{0.*1.0092583836280566*cos(\t r)+1.*1.0092583836280566*sin(\t r)});
\draw [shift={(10.026094342343782,4.443110930522919)},dash pattern=on 6pt off 6pt]  plot[domain=-2.4511387732914445:2.4999095916368854,variable=\t]({1.*1.009258383628057*cos(\t r)+0.*1.009258383628057*sin(\t r)},{0.*1.009258383628057*cos(\t r)+1.*1.009258383628057*sin(\t r)});
\draw [shift={(15.144582926076213,-0.5441343561907394)},dash pattern=on 6pt off 6pt]  plot[domain=-2.451138773291442:2.499909591636876,variable=\t]({1.*1.009258383628055*cos(\t r)+0.*1.009258383628055*sin(\t r)},{0.*1.009258383628055*cos(\t r)+1.*1.009258383628055*sin(\t r)});
\draw [shift={(6.584023535299155,-5.6169331174300945)},dash pattern=on 6pt off 6pt]  plot[domain=0.6904538802983542:5.641502245226671,variable=\t]({1.*1.0092583836280542*cos(\t r)+0.*1.0092583836280542*sin(\t r)},{0.*1.0092583836280542*cos(\t r)+1.*1.0092583836280542*sin(\t r)});
\begin{scriptsize}
\draw[color=black] (3.5,5) node {$BP_{n-1}(n)$};
\draw[color=black] (0.1,-1.4) node {$r_{n-1}$};
\draw[color=black] (3.75,2.4) node {$r_{n-1}$};
\draw[color=black] (9.95,4.3) node {$r_{n-1}$};
\draw[color=black] (15.2,-0.5) node {$r_{n-1}$};
\draw[color=black] (6.5,-5.5) node {$r_{n-1}$};
\end{scriptsize}
\end{tikzpicture}
\caption{Depiction of the proof of Theorem~\ref{thm:main}. There is a base cycle that starts and ends with the copy of $BP_{n-1}$ of all the signed permutations that end with $n$, which we denote by $BP_{n-1}(n)$. The idea of the proof is to apply the induction hypothesis and replace the edge labeled by $r_{n-1}$ in certain copies of $BP_{n-1}$ by paths with the appropriate lengths, sketched in the picture by the dashed arch.}
\label{f:proof}
\end{figure}
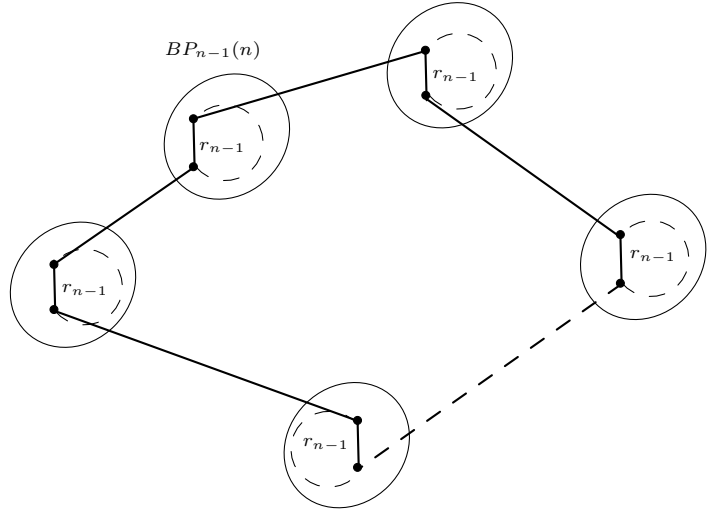

\textit{Proof of Theorem~\ref{thm:main}.} We proceed by induction on $n$. If $n=2$, then $BP_2$ is isomorphic to an $8$-cycle, and thus the theorem is true in this case. Moreover, if $n=3$, a computer search using Depth-First Search to detect cycles of a given length confirms the existence of cycles of all lengths from 8 to 48 (see the authors' preprinted version~\cite[Appendix A]{BBParxiv}). Similarly, if $n=4$, another computer search gives the existence of cycles of length 8 to 384 (see the authors' preprinted version~\cite[Appendix B]{BBParxiv}).  

Let us assume that $BP_{n-1}$ is Hamiltonian and weakly pancyclic, where $n\geq5$. It is worth note that any cycle of length $\ell$, where $2^{n-2}(n-2)!<\ell\leq2^{n-1}(n-1)!$, in $BP_{n-1}$ will necessarily include an edge labeled by $r_{n-1}$ since $\ell$ is greater than the length of any Hamiltonian cycle of any copy of $BP_{n-2}$ embedded in $BP_{n-1}$. By Lemma~\ref{l:edgetrans}, the set of edges labeled by $r_{n-1}$ is edge-transitive. So, any given cycle including an edge labeled by $r_{n-1}$ may be mapped to a cycle that includes any other edge labeled by $r_{n-1}$.

We now show that $BP_n$ is also Hamiltonian and weakly pancyclic. By the induction hypothesis, there are cycles of length from $8$ to $2^{n-1}(n-1)!$ within any embedded copy of $BP_{n-1}$ in $BP_n$. So, it remains to show that there are cycles of lengths between $2^{n-1}(n-1)!+1$ and $2^n n!$.

\textbf{Cycle of length $2^{n-1}(n-1)!+1$}. Begin with $C_1$ which has length $19$ and has an $r_{n-1}$-labeled edge in 5 copies of $BP_{n-1}$. The second copy of $BP_{n-1}$ visited by $C_1$ has a cycle $C$ of length $2^{n-1}(n-1)!-16$. Since $2^{n-1}(n-1)!-16>2^{n-2}(n-2)!$ as $n\geq4$, $C$ contains an edge labeled by $r_{n-1}$. Thus, we can assume that $C$ uses the edge labeled by $r_{n-1}$ in the second copy of $BP_{n-1}$ visited by $C_1$. Merge the cycles $C_1$ and $C$ removing the common edge to construct a new cycle $C'$. The length of $C'$ is $19-1+2^{n-1}(n-1)!-16-1=2^{n-1}(n-1)!+1$.

\textbf{Cycle of length $2^n n!$}. Begin with $C_{n}$, which has length $4n$ and has an $r_{n-1}$ edge in all $2n$ copies of $BP_{n-1}$ embedded in $BP_n$. By the induction hypothesis, there is a cycle of length $2^{n-1} (n-1)!$ in each of the copies of $BP_{n-1}$, and clearly each must include an $r_{n-1}$ edge. Remove these edges from each of these cycles and attach the resulting paths in place of the $r_{n-1}$ edges of $C_{n}$. The resulting cycle is of length $4n - 2n + 2n \left(2^{n-1} (n-1)!\right) - 2n = 2^n n!$.

\textbf{Cycles with lengths of the form $a\left(2^{n-1} (n-1)!\right)+b$} for $1 \leq a \leq 2n-1$ and $0 \leq b \leq 2^{n-1}(n-1)!-1$.

There are three cases for these cycles, which depend upon the size of $a$. In two of the cases, we start with the base cycle $C_a$.

    \textbf{Case 1, for $1 \leq a < \left\lceil n/2 \right\rceil$.} We further subdivide this case into two subcases depending on the value of $b$.
    
    \noindent \textbf{Subcase 1(i), for $b-6a-9>2^{n-2}(n-2)!$.}
    The base cycle $C_a$ has length $8a+11$ and visits $2a+3$ copies of $BP_{n-1}$. Of those $2a+3$ copies of $BP_{n-1}$, $2a+1$ include a single edge labeled by $r_{n-1}$. We replace $a+1$ of these edges labeled by $r_{n-1}$ as follows. In $a$ of these copies, make a Hamiltonian cycle $C^i$, for $1 \leq i \leq a$, of length $2^{n-1}(n-1)!$, which exists by the induction hypothesis. In a different copy of $BP_{n-1}$, make a cycle $C$ of length $b-6a-9$. Since $b-6a-9>2^{n-2}(n-2)!$ by assumption, $C$ uses an edge labeled by $r_{n-1}$, and since $E^{n-1}_n$ is edge-transitive, we assume that this edge is the same used in $C_a$. Merge each $C^i$ and $C$ with $C_a$, removing the edges $r_{n-1}$ in common to obtain a cycle $C'$. Notice that the length of $C'$ is \[
    8a+11-(a+1)+a\left( 2^{n-1}(n-1)! \right)-a+(b-6a-9)-1=a \left(2^{n-1}(n-1)! \right)+b,
    \]
    as desired. 
    
    \noindent \textbf{Subcase 1(ii), for $b-6a-9\leq 2^{n-2}(n-2)!$.} We start with $C_a$ and again, we shall replace $a+1$ edges in $C_a$ labeled by $r_{n-1}$ by appropriate cycles. In $a-1$ copies of $BP_{n-1}$, construct a Hamiltonian cycle $C^i$, for $1 \leq i \leq a-1$, of length $2^{n-1}(n-1)!$; in another copy of $BP_{n-1}$ construct a cycle $C'$ of length  $\ell'=2^{n-1}(n-2)!+b-6a-9$. By Observations \ref{o:1ii1} and \ref{o:1ii2}, $2^{n-2}(n-2)!<\ell'\leq2^{n-1}(n-1)!$. Therefore $C'$ exists and it must have an edge labeled by $r_{n-1}$ and, therefore, we shall assume that it is the same edge used by $C_a$. Finally, take one last copy of $BP_{n-1}$ with one occurrence of an edge labeled by $r_{n-1}$ also present in $C_a$, and construct a cycle $C''$ of length $\ell''=2^{n-1}(n-1)!-2^{n-1}(n-2)!$. Observation~\ref{o:1ii3} gives that $2^{n-2}(n-2)!<\ell''\leq 2^{n-1}(n-1)!$. Hence, $C''$ exists and we can assume that it uses the same edge labeled by $r_{n-1}$ that $C_a$ uses in the respective copy of $BP_{n-1}$. Merge each cycle $C^i, C'$, and $C''$ with $C_a$, removing the edges labeled by $r_{n-1}$ in common among all of the cycles. The length of the cycle obtained is
    \begin{align*}
    &8a+11-(a+1)+(a-1)\left(2^{n-1}(n-1)!\right) -(a-1)\\
    &\quad +\left(2^{n-1}(n-2)!+b-6a-9\right)-1+\left(2^{n-1}(n-1)!-2^{n-1}(n-2)!\right)-1\\
    &= a\left(2^{n-1}(n-1)!\right)+b,
    \end{align*}
    as desired. 
  
    \textbf{Case 2, for $\left\lceil n/2 \right\rceil \leq a \leq n-2$}. The base cycle $C_a$ has length $4n+2$ and visits $2a+4$ copies of $BP_{n-1}$. Of those $2a+4$ copies of $BP_{n-1}$,  $2a+2$ only include a single $r_{n-1}$ edge. We replace $2a$ of these edges in $C_a$ labeled by $r_{n-1}$ by cycles of appropriate length in $BP_{n-1}$ as follows. In $2a-2$ of these copies, make a cycle $C^i$, for $1 \leq i \leq 2a-2$, of length $2^{n-2}(n-1)!$, which exists by the induction hypothesis. Moreover, since $2^{n-2}(n-1)!>2^{n-2}(n-2)!$, each $C^i$ must use an edge labeled by $r_{n-1}$. Since $E^{n-1}_n$ is edge-transitive, we shall assume each $C^i$ uses the edges labeled by $r_{n-1}$ from $C_a$. In two of the other copies with only a single edge of $C_a$, attach two cycles $C'$ and $C''$. The cycle $C'$ is of length $2^{n-2}(n-1)!-2n+2a+\left\lfloor b/2 \right\rfloor$, and by the bounds of $\left\lceil n/2 \right\rceil \leq a$ and $0 \leq b$, is at least of length $2^{n-2}(n-1)!-2n+(n-1) = 2^{n-2}(n-1)!-n-1$, which by Observation \ref{o:bigALowerBound} is greater than $2^{n-2}(n-2)!$. The cycle $C''$ is of length $2^{n-2}(n-1)!-2n+2a+\left\lceil b/2 \right\rceil$, and by the bounds of $a \leq n-1$ and $b < 2^{n-1}(n-1)!$, is at most of length $2^{n-1}(n-1)!-4$, which by Observation \ref{o:bigAUpperBound} is less than $2^{n-1}(n-1)!$. Thus, $C'$ and $C''$ are cycles that will contain an $r_{n-1}$ edge. Merge each cycle $C^i, C'$, and $C''$ with $C_a$, removing the edges labeled by $r_{n-1}$ in common among all of the cycles. The resulting cycle is of length 
    \begin{align*} 
        &4n+2 - 2a + (2a-2) \left(2^{n-2}(n-1)!\right) - (2a-2) \\ 
        & \quad + \left(2^{n-2}(n-1)!-2n+2a+\left\lfloor b/2 \right\rfloor\right) - 1 \\ 
        & \quad + \left(2^{n-2}(n-1)!-2n+2a+\left\lceil b/2 \right\rceil\right) - 1 \\ &= a\left(2^{n-1}(n-1)!\right)+b.
    \end{align*}
    
    \textbf{Case 3, for $n-1 \leq a \leq 2n-1$}. The base cycle $C_n$ has length $4n$ and visits all $2n$ copies of $BP_{n-1}$. In each copy, only an edge labeled by $r_{n-1}$ is used. In $a-1$ of the copies, make a cycle $C^i$, for $1 \leq i \leq a-1$, of length $2^{n-1}(n-1)!$, which exists by the induction hypothesis and contains an edge labeled by $r_{n-1}$. In two of the remaining copies, make two cycles $C'$ and $C''$, whose lengths are as follows:  The cycle $C'$ is of length $2^{n-2}(n-1)!-2n+a+\left\lfloor b/2 \right\rfloor + 1$, and due to $b \geq 0$ along with $a \geq n$ is at least of length $2^{n-2}(n-1)!-n+1$, which by Observation \ref{o:lowerBound3} is greater than $2^{n-2}(n-2)!$. The cycle $C''$ is of length $2^{n-2}(n-1)!-2n+a+\left\lceil b/2 \right\rceil+1$, and due to $b < 2^{n-1}(n-1)!$ as well as $a\leq 2n-1$, is at most of length $2^{n-1}(n-1)!$, which is the length of the longest cycle in a copy of $BP_{n-1}$. Thus, both $C'$ and $C''$ are cycles containing an $r_{n-1}$ edge. Since $E^{n-1}_n$ is edge-transitive, each of the edges labeled by $r_{n-1}$ in each of the $C^i,C',$ and $C''$ cycles may be assumed to be the edges labeled by $r_{n-1}$ in each of the copies of $BP_{n-1}$ that $C_n$ goes through. Merge each $C^i, C',$ and $C''$ with $C_n$ by removing the $r_{n-1}$ edges in common to each of them. The resulting cycle is of length
    \begin{align*}
        &4n - (a+1) + (a-1)\left(2^{n-1}(n-1)!\right) - (a-1) \\
        &\quad + \left(2^{n-2}(n-1)!-2n+a+\left\lfloor b/2 \right\rfloor + 1\right) - 1 \\
        &\quad + \left(2^{n-2}(n-1)! -2n + a+\left\lceil b/2 \right\rceil + 1\right) - 1\\
        &= a\left(2^{n-1}(n-1)!\right)+b.
    \end{align*}
    
    All of these cycle lengths encompass all lengths between $8$ and $2^n n!$. Therefore, $BP_n$ is Hamiltonian and weakly pancyclic. \hfill $\square$

\section{Classification of $8$-cycles in $BP_n$, for $ n\geq2$}\label{s:classification}

In this section, we provide the classification of the $8$-cycles of $BP_n$, for $n\geq2$. These are the shortest possible cycles that can be embedded in the burnt pancake graph, by Theorem \ref{thm:main}. This constitutes the beginning of a process of describing all the small cycles in $BP_n$, inspired by the work of classifying the 6, 7, 8, and 9-cycles of $P_n$ carried out in Konstantinova and Medvedev~\cite{KM10, KM11, KonMed}. Just like the previous section, we take advantage of the recursive structure of $BP_n$. We use the notation $BP_{n-1}(q)$ to denote the copy of $BP_{n-1}$ obtained by restricting $BP_n$ to its subgraph induced by all of those signed permutations whose last character is $q$ in the window notation, where $q\in[\pm n]$.

Throughout this discussion, we consider decomposing the window notation into substrings and track their orientation and position after pancake flips. We adopt the convention of using a capital letter to represent substrings. For example, for $\pi \in B_n$, we may write $\pi = [XYZ]$. The lengths of these substrings will be of importance and to be succinct we use the symbol $|X|$ for the length of the substring. For convenience, the reversal and sign change of a substring will be denoted by $\overline{X}$. That is, if $X=[x_1\;x_2\;\cdots\;x_{i-1}\;x_i]$, then  $\overline{X}=[\ul{x_i}\;\ul{x_{i-1}}\;\cdots\;\ul{x_2}\;\ul{x_1}]$. Throughout this section we also adopt the convention of using names of signed permutations based on the last character, e.g. $\pi \in V(BP_{n-1}(p))$, $\ul{\pi} \in V(BP_{n-1}(\ul{p}))$, $\rho \in V(BP_{n-1}(q))$, etc. For $\pi,\tau \in B_n$, we define the \emph{distance} $d(\pi, \tau)$ as the length of the shortest $\pi$-$\tau$ path in $BP_n$.

Following the definition in Konstantinova and Medvedev~\cite{KM10, KM11, KonMed}, we say that a cycle $C=r_{i_1}r_{i_2}\cdots r_{i_k}$ is in \emph{canonical form} if the index sequence $i_1,i_2,\ldots,i_k$ is lexicographically maximal.

\begin{thm}\label{thm:classification}
An $8$-cycle in $BP_n$, with $n\geq2$, has one of the following canonical forms:
\end{thm}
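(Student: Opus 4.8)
The plan is to classify all $8$-cycles in $BP_n$ by exploiting the recursive structure, separating cycles according to how many distinct copies of $BP_{n-1}$ they visit. An $8$-cycle $C = r_{i_1}r_{i_2}\cdots r_{i_8}$ corresponds to a sequence of prefix reversals that returns to the starting signed permutation, and the multiset of indices $\{i_1,\ldots,i_8\}$ must satisfy a parity/cancellation constraint, so first I would establish the global bookkeeping: each edge labeled $r_n$ changes the last character of the window, so the edges with the largest index label govern transitions between copies $BP_{n-1}(q)$. I would first argue that an $8$-cycle either lies entirely within a single copy $BP_{n-1}(q)$ — in which case it is (by induction) an $8$-cycle of $BP_{n-1}$ and its canonical form is inherited — or it uses the top generator $r_n$ an even number of times (since the last coordinate must return to its starting value), hence $r_n$ appears exactly $2$, $4$, $6$, or $8$ times among the eight reversals.

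The key steps, in order: First I would handle the base case $n=2$, where $BP_2$ is itself an $8$-cycle (Figure~\ref{fig:burnt2}), giving a single canonical form directly by reading off the generator sequence. Second, for the inductive/structural step, I would split on the number of $r_n$-edges. If $C$ stays inside one copy, induction applies. If $C$ crosses copies, the $r_n$-edges partition $C$ into arcs, each arc living inside one copy $BP_{n-1}(q)$ and using only generators $r_i$ with $i \le n-1$; I would track, using the substring notation $\pi=[XYZ]$ and the operation $\overline{X}$, how each $r_n$ flip reorients the prefix, and then impose the closure condition that after all eight reversals the window returns to its initial state. Third, I would enumerate the admissible index patterns: because the cycle has length exactly $8$ and must close up, the arcs between consecutive $r_n$'s are short, so the non-top generators are tightly constrained; working out which index multisets can close a length-$8$ walk yields a small finite list of candidate generator words. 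Fourth, for each surviving candidate I would verify it genuinely closes (by the explicit composition rules for alternating pairs $r_i r_{i-1}$ already derived in the proof of Lemma~\ref{l:basecycles}) and then put it in canonical form by choosing the lexicographically maximal rotation/reflection of the index sequence.

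The main obstacle I anticipate is the case analysis when $r_n$ appears few times (e.g.\ exactly twice), because then the two arcs inside copies of $BP_{n-1}$ are relatively long and one must rule out spurious patterns that look closeable on the last coordinate but fail to close on the interior coordinates; disentangling the interaction between the two $r_n$ flips and the intervening lower-index reversals — keeping careful track of signs via the $\overline{X}$ convention — is where the bookkeeping is heaviest. A secondary subtlety is ensuring the canonical-form reduction does not merge or miss cases: two superficially different generator words may be related by the dihedral symmetry of a cycle (rotation and reflection), so I would need to argue that each equivalence class of closed $8$-walks contributes exactly one canonical representative, and that the representatives listed are pairwise distinct as lexicographically maximal index sequences. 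I expect the final list to be short and largely forced once the number of $r_n$-edges is fixed, with the bulk of the work being the verification that no other index multiset admits a closed length-$8$ walk.
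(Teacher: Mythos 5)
Your overall strategy --- use the recursive structure, let the top-generator edges cut the $8$-cycle into arcs that each live inside a copy of $BP_{n-1}$, then enumerate the possible arc patterns --- is the same skeleton as the paper's proof. But there is a genuine gap at the very first structural step: you claim that a cycle crossing between copies must use the top generator an even number of times ``since the last coordinate must return to its starting value.'' This is false as a parity argument. The last coordinate returning to its start only forces the sequence of copies visited to be a \emph{closed walk} in the graph whose vertices are the $2n$ copies $BP_{n-1}(q)$, and closed walks of odd length exist there. Indeed, odd counts genuinely occur in $BP_n$: the $9$-cycle $r_1r_2r_1r_3r_1r_2r_3r_2r_3$ in $BP_3$ (listed in Appendix~\ref{a:BP3}) uses the top generator $r_3$ exactly three times. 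So the fact that an $8$-cycle cannot visit three copies (equivalently, cannot use exactly three $r_k$-edges) is a theorem that must be proved, not a bookkeeping observation --- and it is precisely the hardest part of the paper's argument, which rules out the vertex-distribution patterns $(4+2+2)$ and $(3+3+2)$ by a careful sign-tracking analysis resting on Lemmas~\ref{lem:31}, \ref{lem:32}, \ref{lem:33}, and~\ref{lem:34}. Your proposal, taken at face value, would silently skip this entire case.

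Two smaller points. First, your list ``$2$, $4$, $6$, or $8$'' overcounts even in the even case: since each vertex lies on exactly one $r_k$-edge and a cycle cannot repeat a generator on consecutive steps, the $r_k$-edges are pairwise non-adjacent, so an $8$-cycle contains at most four of them; the paper encodes this as the requirement that every part of the partition of $8$ be at least $2$. Second, your treatment of the two-copy case needs more than ``the arcs are tightly constrained'': the paper first eliminates the distributions $(6+2)$ and $(5+3)$ via Lemma~\ref{lem:34} (two vertices of the same copy at distance at most $2$ are sent by $r_k$ into \emph{different} copies), and then pins down the $(4+4)$ case using the characterization of distance-$3$ paths fixing the first letter (Lemma~\ref{lem:32}) together with the girth bound. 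You would need analogues of these lemmas to make your ``enumeration of admissible index patterns'' finite and complete; without them, the closure condition on the interior coordinates is exactly the unresolved bookkeeping you flag as an obstacle, and it does not resolve itself.
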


\begin{align}
    \label{8-1}
    &r_{k}r_{j}r_{i}r_{j}r_{k}r_{k-j+i}r_{i}r_{k-j+i}, & \text{ for }1 \leq i < j \leq k-1,  \text{ and } 3 \leq k \leq n\\
    %& r_{k}r_{i+j}r_{i}r_{i+j}r_{k}r_{k-j}r_{i}r_{k-j} & \quad 1 \leq i,j \leq k-2 \quad i+j \leq k-1 \quad 3 \leq k \leq n\\
    \label{8-2}
    &r_{k}r_{j}r_{k}r_{i}r_{k}r_{j}r_{k}r_{i}, & \text{ for }2 \leq i,j \leq k-2, i+j \leq k,  \text{ and } 4 \leq k \leq n\\
    \label{8-3}
    &r_{k}r_{i}r_{k}r_{1}r_{k}r_{i}r_{k}r_{1}, &\text{ for }2 \leq i \leq k-1,  \text{ and } 3 \leq k \leq n\\
    \label{8-4}
    &r_{k}r_{1}r_{k}r_{1}r_{k}r_{1}r_{k}r_{1}, & \text{ for }2 \leq k \leq n
\end{align}

Before proving Theorem~\ref{thm:classification}, we need a few preliminary lemmas. 

\begin{lem}\label{lem:31}
 If $\pi \in V(BP_{n-1}(p))$ and $\pi r_{n} \in V(BP_{n-1}(q))$, then $|p| \neq |q|$.
\end{lem}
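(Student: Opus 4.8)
The plan is to prove Lemma~\ref{lem:31} by directly analyzing how the reversal $r_n$ acts on the last coordinate of the window notation. Recall that if $\pi = [\pi(1)\,\pi(2)\,\cdots\,\pi(n)]$, then by definition $\pi r_n = [\ul{\pi(n)}\,\ul{\pi(n-1)}\,\cdots\,\ul{\pi(1)}]$. The copy $BP_{n-1}(p)$ consists precisely of those signed permutations whose last character equals $p$, so $\pi \in V(BP_{n-1}(p))$ means $\pi(n) = p$. Similarly, $\pi r_n \in V(BP_{n-1}(q))$ means that the last character of $\pi r_n$ equals $q$.

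First I would read off the last character of $\pi r_n$ from the formula above: since $r_n$ reverses the entire window and negates every entry, the character in the final (that is, $n$th) position of $\pi r_n$ is $\ul{\pi(1)}$, the negation of the first character of $\pi$. Hence $q = \ul{\pi(1)}$, which gives $|q| = |\pi(1)|$. On the other hand, $p = \pi(n)$, so $|p| = |\pi(n)|$. The claim $|p| \neq |q|$ therefore reduces to showing $|\pi(1)| \neq |\pi(n)|$.

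The key observation is that $\pi$ is a genuine signed permutation, meaning its window $[\pi(1)\,\cdots\,\pi(n)]$ contains exactly one representative from each of the $n$ pairs $\{i, \ul{i}\}$ for $i \in [n]$; equivalently, the underlying values $|\pi(1)|, |\pi(2)|, \ldots, |\pi(n)|$ are all distinct and form the set $[n]$. In particular, since $n \geq 2$ we have two distinct positions $1$ and $n$, and distinctness of the absolute values immediately yields $|\pi(1)| \neq |\pi(n)|$. Combining this with the identifications $|p| = |\pi(n)|$ and $|q| = |\pi(1)|$ from the previous step completes the argument, so $|p| \neq |q|$ as desired.

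I do not anticipate a genuine obstacle here: the statement is essentially a bookkeeping fact about the action of $r_n$ on the window notation, and the only thing one must be careful about is correctly tracking signs and positions under the reversal. The one subtlety worth stating explicitly is the injectivity of $i \mapsto |\pi(i)|$, which is exactly the defining property of an element of $B_n$; making that property the explicit fulcrum of the proof is what turns the whole lemma into a one-line consequence.
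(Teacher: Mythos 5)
Your proof is correct and follows essentially the same route as the paper's: both rest on reading off that the last entry of $\pi r_n$ is $\ul{\pi(1)}$ while the last entry of $\pi$ is $p = \pi(n)$, and then invoking the fact that the window of a signed permutation contains each absolute value exactly once. The only difference is one of logical packaging — the paper argues by contrapositive (assuming $|p| = |q|$ and $p \neq q$ to force $p$ to appear twice in the window of $\pi$), whereas your direct phrasing reaches $|p| = |\pi(n)| \neq |\pi(1)| = |q|$ immediately and so does not even need the auxiliary assumption $p \neq q$.
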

\begin{proof} 
We proceed by proving the contrapositive statement. Suppose $|p| = |q|$. Since $p\neq q$, then $q=\ul{p}$. Let $\pi \in V(BP_{n-1}(p))$ such that $\pi r_{n} \in V(BP_{n-1}(\ul{p}))$. So, the first and last element of $\pi$ in the window notation must be $p$, which is impossible since then $p$ would appear twice in the window notation of $\pi$.
\end{proof}

\begin{lem}\label{lem:32}
Let $\pi,\tau\in B_n$ have the same first element $q\in[\pm n]$ in the window notation. Then $d(\pi,\tau) = 3$ if and only if $\tau = \pi r_{j} r_{i} r_{j}$, for $1 \leq i < j \leq n$, where $\pi = [A B C]$, $\tau=[A \overline{B} C]$, $|A|=j-i$, $|B|=i$, and $|C| \geq 0$.
\end{lem}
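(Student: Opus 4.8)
\textbf{Proof plan for Lemma~\ref{lem:32}.}

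The plan is to prove both directions by carefully tracking how the prefix reversals act on the window notation, using the substring decomposition $\pi=[ABC]$ that the lemma itself suggests. The key structural fact driving everything is that every generator $r_i$ reverses (and flips the sign of) a prefix, so a short word in the generators can only rearrange a bounded-length prefix of the window; the suffix $C$ is completely inert as long as all reversals used have index at most $|A|+|B|=j$. Since $\pi$ and $\tau$ share the same first element $q$, and since applying any single $r_m$ with $m\geq 1$ moves a \emph{different} element into the first position (namely $\ul{\pi(m)}$, which equals $q$ only when $m=1$, forcing $\pi(1)$ to change sign), I expect the ``same first element'' hypothesis to be exactly what pins down the shape of any length-$3$ path and rules out shorter ones.

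For the backward (if) direction, I would simply verify the explicit word. Writing $\pi=[ABC]$ with $|A|=j-i$ and $|B|=i$, I would apply $r_j$, then $r_i$, then $r_j$ in turn and check that the net effect is $[A\overline B C]$, leaving $A$ and $C$ untouched and reversing-with-sign-change the middle block $B$. Concretely, $r_j$ reverses the whole prefix $AB$ to $\overline{AB}=[\overline B\,\overline A]$; then $r_i$ reverses the first $i=|B|$ entries, which are exactly $\overline B$, turning $[\overline B\,\overline A\,C]$ into $[B\,\overline A\,C]$; then $r_j$ reverses the prefix $[B\,\overline A]$ of length $j$ back to $[A\,\overline B]$, giving $[A\overline B C]$ as claimed. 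This also shows $d(\pi,\tau)\leq 3$. To conclude $d(\pi,\tau)=3$ exactly, I must rule out $d\in\{0,1,2\}$: $d=0$ fails since $B\neq\overline B$ (the sign change alone makes $\pi\neq\tau$, using $|B|=i\geq 1$), and $d=1,2$ I would exclude by the same first-element analysis used in the forward direction, noting that $\pi$ and $\tau$ agree in their first entry so any connecting path must begin and end by moving that entry and hence cannot have length $1$, and a length-$2$ path $r_ar_b$ returning the first element to its place forces $a=b$, giving $\pi=\tau$, a contradiction.

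For the forward (only if) direction, I would assume $d(\pi,\tau)=3$ and write the geodesic as $\tau=\pi r_a r_b r_c$ for some indices. The first element of $\pi r_a$ is $\ul{\pi(a)}$; the first element of $\pi r_a r_b$ is obtained by reversing a prefix of $\pi r_a$; and the requirement that $\tau$ have first element $q=\pi(1)$ again constrains the last reversal. The main work is to show that the three indices must satisfy $a=c=:j$ and $b=:i<j$, and that the block structure $|A|=j-i$, $|B|=i$ emerges forced. I would argue that $c$ must equal $a$: writing the intermediate permutations out, the only way for a length-$3$ geodesic to restore the first coordinate to its original value $q$ while genuinely changing the permutation is to have the outer two reversals share the same index $j$ (so that the inner $r_i$ acts on a reversed prefix and the outer $r_j$ undoes the outer part), with $i<j$ because $i=j$ would collapse the word and $i>j$ is impossible as then the middle element would not return $q$ to the front. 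Given $a=c=j$ and $b=i<j$, the computation in the backward direction run in reverse identifies $B$ as the block of length $i$ sitting in positions $j-i+1,\dots,j$ and $A$ as the block in positions $1,\dots,j-i$, yielding $\tau=[A\overline B C]$.

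The main obstacle I anticipate is the uniqueness/forcing argument in the forward direction: showing that \emph{every} geodesic of length $3$ between two permutations with a common first element has the palindromic index pattern $r_jr_ir_j$ with $i<j$, rather than merely exhibiting one such path. I would handle this by a finite case analysis on the relative sizes of the three generator indices $a,b,c$, tracking in each case which element occupies the first position after each reversal and whether the suffix beyond $\max(a,b,c)$ is disturbed; the ``same first element'' constraint eliminates all patterns except $a=c>b$. I would also need to confirm the boundary condition $|C|\geq 0$ is automatic (it is, since $C$ is whatever remains after the length-$j$ prefix) and that the hypothesis $1\leq i<j\leq n$ covers all produced words without over- or under-counting.
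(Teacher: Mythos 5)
Your proposal is correct and takes essentially the same route as the paper: the forward direction is forced by tracking the position and sign of the common first element $q$ through the three flips (your case analysis on the relative order of the indices $a,b,c$ is exactly the paper's parity-of-sign-changes argument in different bookkeeping, both yielding the palindromic pattern $r_j r_i r_j$ with $i<j$ and the block structure), and the backward direction is the same direct computation. If anything you are slightly more explicit than the paper, which compresses the converse---including ruling out $d(\pi,\tau)\leq 2$---into ``the converse is easily checked to hold.''
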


\begin{proof}
Suppose $d(\pi,\tau)=3$, in $BP_n$ where both of their window notations begin with $q \in [\pm n]$. Decompose $\pi=[q A_1 A_2]$ with $|A_1|=j-1$, where $2 \leq j \leq n$. Since $\tau$ has $q$ at first position, $q$ must be involved in an even number of flips in the path from $\pi$ to $\tau$ to retain the sign of $q$. Since the path is of length $3$ between $\pi$ and $\tau$, then $q$ must be involved in $2$ flips. Let us find vertices with $q$ in the first position and that are at a distance of $3$ from $\pi$. Let $\pi_1 = \pi 
%f_{j-1}
r_{j}
= [\overline{A_1}\uline{q}A_2]$. Now, to get the correct sign, $q$ can be used in only one more flip, and to place it in the first position this will need to be the $3\text{rd}$ flip. So the next flip must only include a substring of $A_1$, $\pi_2 = \pi_1 r_{i}
= [A_{12} \overline{A_{11}} \uline{q} A_2]$ where $A_1 = A_{11}A_{12}$, for $1 \leq i < j\leq n$, and $|A_{12}|=i$. Hence, $\tau = \pi r_{j}r_{i}r_{j}
= [q A_{11} \overline{A_{12}} A_2]$, and we get $\pi = [A B C]$ and $\tau=[A \overline{B} C]$, where $A=q A_{11}$, $B=A_{12}$, and $C=A_2$. Note that using $r_{j}$
is the only way to restore $q$ to the first position after reaching $\pi_2$. The converse is easily checked to hold. 
\end{proof}

\begin{lem}\label{lem:33}
If $\pi,\tau \in B_n$ with $\pi=[ABC]$, $\tau=[BAC]$, and $|A|,|B|,|C| \geq 1$, then $d\left(\pi,\tau\right) \geq 3$ in $BP_n$.
\end{lem}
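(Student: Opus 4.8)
The plan is to rule out $d(\pi,\tau)\in\{0,1,2\}$ by means of a sign invariant, exploiting the contrast between the block transposition relating $\pi=[ABC]$ to $\tau=[BAC]$, which preserves the sign of every value, and prefix reversals, which necessarily change signs. First I would record the crucial elementary fact: for each $a\in\{1,\dots,n\}$ the signed value $\pm a$ occurs once in $\pi$ and once in $\tau$, and since the blocks $A$, $B$, $C$ appear in $\tau$ with the same entries, in the same internal order, and with unchanged signs, the value $a$ carries the same sign in $\tau$ as in $\pi$. Call $a$ \emph{discordant} if its signs in $\pi$ and $\tau$ differ; then the number of discordant values is $0$. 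Since the first entry of $\pi$ is the leading letter of $A$ while the first entry of $\tau$ is the leading letter of $B$, and these are distinct (the blocks are disjoint), we have $\pi\neq\tau$, so $d(\pi,\tau)\geq1$.

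Next I would analyze how prefix reversals act on signs, using that $r_i$ flips the sign of exactly those entries occupying positions $1,\dots,i$ at the moment it is applied. For a single reversal $\pi r_i$, each of the $i\geq1$ entries in positions $1,\dots,i$ has its sign flipped once, so $\pi r_i$ has exactly $i\geq1$ discordant values relative to $\pi$; as $\tau$ has none, $\tau\neq\pi r_i$ for every $i$, which excludes $d=1$. For two reversals $\pi r_{i_1}r_{i_2}$ I would track, for an entry that starts in position $p$, how many of the two reversals flip it: the first flips it iff $p\leq i_1$ (after which it moves to position $i_1-p+1$), and the second flips it iff its current position is $\leq i_2$. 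Splitting into the cases $i_1\leq i_2$ and $i_1>i_2$ and counting the entries flipped an odd number of times, the total comes out to exactly $|i_1-i_2|$.

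Consequently $\pi r_{i_1}r_{i_2}$ has $|i_1-i_2|$ discordant values, and matching $\tau$'s count of $0$ forces $i_1=i_2$; since each $r_i$ is an involution this gives $\pi r_{i_1}r_{i_2}=\pi$, hence $\tau=\pi$, contradicting $\pi\neq\tau$ and excluding $d=2$. I expect the one delicate point to be the two-reversal bookkeeping, where the moving positions make the second flip condition depend on the outcome of the first reversal; the clean value $|i_1-i_2|$ is precisely what makes the argument succeed. Combining the three exclusions $d\neq0,1,2$ yields $d(\pi,\tau)\geq3$, as claimed.
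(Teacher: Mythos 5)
Your proof is correct, but it takes a genuinely different route from the paper's. The paper exhibits the explicit walk $\tau = \pi r_{|A|}r_{|A|+|B|}r_{|B|}$ of length $3$ and then rules out a $\pi$--$\tau$ path of length $1$ or $2$ by a girth argument: such a path together with the length-$3$ path would create a cycle of length at most $5$, contradicting Compeau's theorem that the girth of $BP_n$ is $8$. You instead use a sign invariant: since $\tau=[BAC]$ carries every value with the same sign as $\pi=[ABC]$, while $\pi r_i$ differs from $\pi$ in exactly $i\geq 1$ signs and $\pi r_{i_1}r_{i_2}$ differs from $\pi$ in exactly $|i_1-i_2|$ signs (your two-reversal bookkeeping, split into the cases $i_1\leq i_2$ and $i_1>i_2$, is right), any equality $\tau=\pi r_{i_1}r_{i_2}$ forces $i_1=i_2$ and hence $\tau=\pi$, contradicting the fact that the first letters of $A$ and $B$ are distinct. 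What the paper's approach buys is brevity, at the cost of importing the girth-$8$ theorem (any bound excluding cycles of length $\leq 5$ would do) and of the implicit step that two distinct $\pi$--$\tau$ paths of lengths $\leq 2$ and $3$ must contain a short cycle. What your approach buys is self-containedness: it invokes no external result, and it never actually uses the hypothesis $|C|\geq 1$, so it proves the slightly more general statement with $|C|\geq 0$. One small remark: along a genuine path of length $2$ the two edge labels are automatically distinct, so you could conclude directly from $|i_1-i_2|\geq 1$; your involution argument, which handles arbitrary two-step walks, is equally valid.
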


\begin{proof}
Let $\pi=[ABC]$ and $\tau=[BAC]$. It is true that $\tau = \pi r_{|A|}r_{|A|+|B|}r_{|B|}$
is a path of length $3$ between the two signed permutations. If there was another path of length $1$ or $2$ between $\pi$ and $\tau$, then there would be cycles of length either $4$ or $5$, which do not exist in $BP_n$ since its girth is $8$. Thus, any path from $\pi$ to $\tau$ must have a length of at least $3$.
\end{proof}

\begin{lem}\label{lem:34}
If $\pi_1, \pi_2 \in V(BP_{n-1}(p))$, for any $p\in[\pm n]$, with $d(\pi_1,\pi_2) \leq 2$, then $\pi_1 r_{n}$ and $\pi_2 r_{n}$
must belong to distinct copies of $BP_{n-1}$ in $BP_{n}$.
\end{lem}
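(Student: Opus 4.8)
The plan is to reduce the claim to the single assertion $\pi_1(1)\neq\pi_2(1)$ and then to show that no geodesic of length at most two can preserve the first character of a signed permutation. First I would record where $\pi_i r_n$ lands: writing $\pi_i=[\pi_i(1)\,\cdots\,\pi_i(n-1)\,p]$ (here $\pi_i(n)=p$ because $\pi_i\in V(BP_{n-1}(p))$), one computes $\pi_i r_n=[\ul{p}\,\ul{\pi_i(n-1)}\,\cdots\,\ul{\pi_i(1)}]$, whose last character is $\ul{\pi_i(1)}$. Thus $\pi_i r_n\in V(BP_{n-1}(\ul{\pi_i(1)}))$, and $\pi_1 r_n,\pi_2 r_n$ lie in distinct copies of $BP_{n-1}$ precisely when $\pi_1(1)\neq\pi_2(1)$. (Throughout, $\pi_1\neq\pi_2$ is understood, as otherwise $\pi_1 r_n=\pi_2 r_n$; so $1\leq d(\pi_1,\pi_2)\leq2$.)

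Next I would fix a geodesic realizing $d(\pi_1,\pi_2)$, so that either $\pi_2=\pi_1 r_a$ or $\pi_2=\pi_1 r_a r_b$, and in the latter case $a\neq b$ since a geodesic admits no cancellation. The key observation I would isolate is that every prefix reversal moves the first entry to the reversed position with a sign change: for any $a\in[n]$, if $\sigma=\pi r_a$ then $\sigma(a)=\ul{\pi(1)}$, and $\sigma$ is again a signed permutation, so its entries have pairwise distinct absolute values. This single identity lets me treat $r_n$ on the same footing as $r_1,\dots,r_{n-1}$.

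Then I would finish by a short computation in each case. In the length-one case $\pi_2(1)=\ul{\pi_1(a)}$; if $a=1$ this is $\ul{\pi_1(1)}\neq\pi_1(1)$, and if $a\geq2$ then $|\pi_2(1)|=|\pi_1(a)|\neq|\pi_1(1)|$ because distinct positions of a window carry distinct absolute values, so $\pi_2(1)\neq\pi_1(1)$. In the length-two case, put $\sigma=\pi_1 r_a$, so that $\sigma(a)=\ul{\pi_1(1)}$ and $\pi_2(1)=\ul{\sigma(b)}$; were $\pi_2(1)=\pi_1(1)$, we would get $\sigma(b)=\ul{\pi_1(1)}=\sigma(a)$, forcing $b=a$ (as $\sigma$ has distinct entries) and contradicting $a\neq b$. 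Hence $\pi_1(1)\neq\pi_2(1)$ in all cases, which is exactly what the reduction needs.

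The step I expect to be the main obstacle is handling the reversal $r_n$ cleanly: a priori a length-two geodesic could use $r_n$ and temporarily leave the copy $BP_{n-1}(p)$, so one cannot simply work inside $BP_{n-1}(p)$ with the generators $r_1,\dots,r_{n-1}$. The uniform identity $\sigma(a)=\ul{\pi_1(1)}$, valid for every $a\in[n]$, is what sidesteps this case analysis. Alternatively, I could first exclude $r_n$ from such a geodesic via Lemma~\ref{lem:31} (a single $r_n$ changes the absolute value of the last character, while $\pi_1,\pi_2$ share the last character $p$), but the direct tracking of the distinguished entry $\pi_1(1)$ is shorter and avoids the extra bookkeeping.
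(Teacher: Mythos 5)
Your proof is correct, and it reaches the conclusion by a genuinely lighter route than the paper's. Both arguments rest on the same reduction---the copy containing $\pi_i r_n$ is indexed by its last character $\ul{\pi_i(1)}$, so one must show that a geodesic of length at most two cannot preserve the first character---but the executions differ. The paper writes $\pi_1=[\ul{q}\,A\,p]$ and enumerates the possible window forms of $\pi_2$: for $d(\pi_1,\pi_2)=1$ it takes $\pi_2=\pi_1 r_{|A_1|+1}=[\overline{A_1}\,q\,A_2\,p]$, and for $d(\pi_1,\pi_2)=2$ it splits into the subcases $\pi_2=[A_{12}\,\overline{A_{11}}\,q\,A_2\,p]$ and $\pi_2=[\overline{A_{21}}\,\ul{q}\,A_1\,A_{22}\,p]$, computing $\pi_2 r_n$ explicitly each time and reading off its last character. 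You replace this substring bookkeeping with the single identity $(\pi r_a)(a)=\ul{\pi(1)}$, valid for every $a\in[n]$, combined with the fact that the entries of a signed permutation are pairwise distinct; the two cases then collapse to a short pigeonhole argument ($\sigma(a)=\sigma(b)$ forces $a=b$). Beyond brevity, your version buys something real: the paper's case analysis tacitly assumes that the one or two flips joining $\pi_1$ to $\pi_2$ lie among $r_1,\dots,r_{n-1}$, i.e., that the geodesic never leaves $BP_{n-1}(p)$. That assumption is true, but it requires a justification (essentially the Lemma~\ref{lem:31}-style argument you sketch at the end), and the paper does not supply it. Since your identity holds for $a=n$ as well, your proof is indifferent to whether an intermediate vertex leaves the copy, so this implicit step disappears entirely. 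What the paper's longer computation provides in exchange is the explicit window form of $\pi_2$ and $\pi_2 r_n$ in each case, but those forms are not reused elsewhere in the paper, so nothing is lost by your shortcut; your explicit handling of the degenerate case $\pi_1=\pi_2$ (which the lemma's statement silently excludes) is also a small improvement in rigor.
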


\begin{proof}
 Let $\pi_1, \pi_2 \in V(BP_{n-1}(p))$ for some $p\in[\pm n]$ and $d(\pi_1,\pi_2) \leq 2$. Let $\pi_1=[\uline{q} A p]$ be such that $\pi_1 r_n \in V(BP_{n-1}(q))$. Then, we have two cases to consider. 
\begin{description}
    \item[$d(\pi_1,\pi_2)=1$.] In this case, 
    $\pi_2 = \pi_1 r_{|A_1|+1} = [\overline{A_1} q A_2 p]$, where $A=A_1 A_2$, $|A_1| \geq 0$, and thus $\pi_2 r_{n}=[\uline{p} \overline{A_2} \uline{q} A_1] \notin BP_{n-1}(q)$. %, or $\tau = \pi f_0 = [q \alpha p]$ and thus $\overline{\tau}=[\overline{p} \overline{\alpha} \overline{q}] \notin BP_{n-1}(q)$.
    \item[$d(\pi_1,\pi_2)=2$.]
    Let $\pi'$ be the vertex between $\pi_1$ and $\pi_2$. Then, $\pi' = \pi_1 r_{|A_1|+1}
    = [\overline{A_1} q A_2 p]$, where $A=A_1 A_2$ and $|A_1| \geq 0$. Then, $\pi_2 = \pi' r_{|A_{12}|}
    =[A_{12} \overline{A_{11}} q A_2 p]$, where $A_1=A_{11}A_{12}$, $|A_{12}| \geq 0$  or $\pi_2= \pi' r_{|A_1|+|A_{21}|+1}
    =[\overline{A_{21}}\uline{q}A_1 A_{22} p]$, where $A_2=A_{21}A_{22}$, $|A_{21}| \geq 1$. Therefore, it must be that $\pi_2 r_{n} = [\uline{p}\overline{A_2}\uline{q}A_{11}\overline{A_{12}}] \notin BP_{n-1}(q)$ or $
    \pi_2 r_{n} =[\uline{p}\overline{A_{22}} \overline{A_1}q A_{21}] \notin BP_{n-1}(q)$.
\end{description}

Hence, if $\pi_1,\pi_2 \in BP_{n-1}(p)$ are at a distance of at most $2$, then $\pi_1 r_{n}$ and $\pi_2 r_{n}$ belong to distinct copies of $BP_{n-1}$ in $BP_n$.
\end{proof}

We now proceed to the proof of Theorem~\ref{thm:classification}.

\noindent\textit{Proof of Theorem~\ref{thm:classification}.}
We use the recursive structure of $BP_n$. Notice that the edges labeled by $r_i,$ for $1\leq i\leq k-1$, are all contained inside a copy of $BP_{k-1}$, and only edges labeled by %$f_{k-1}$
$r_{k}$ connect the different copies of $BP_{k-1}$ in $BP_n$. This observation allows us to obtain canonical forms for the $8$-cycles of $BP_n$ based on how many $BP_{k-1}$ copies are used by the $8$-cycle. To contain a non-trivial copy of $BP_{k-1}$ that is not a single point, we must have $k \geq 2$. It follows, then, that an $8$-cycle can be formed by using $2$, $3$, or $4$ copies of $BP_{k-1}$. To help us enumerate the cases, we use an integer partition $(a_1+a_2+\cdots+a_m)$ of 8 to indicate that the $8$-cycle visits $m$ copies of $BP_{k-1}$, and has $a_1$ vertices in one copy, $a_2$ vertices in a second copy, and so on, up to $a_m$ vertices in the $m$-th copy. Since no vertex meets two $r_{k}$ edges, then we must have $a_i > 1$, for $1 \leq i \leq m$. \;

\textbf{Case the $8$-cycle visits two copies of $BP_{k-1}$.}
Part sizes of the cycle partition are at least $2$, which leads to the only possible partitions being (6+2), (5+3), and (4+4). Suppose that the two copies met are $BP_{k-1}(p)$ and $BP_{k-1}(q)$.

\noindent \textbf{Subcases (6+2) and (5+3).} By Lemma~\ref{lem:34}, if the endpoints in the $BP_{k-1}(p)$ copy (say $\pi_1$ and $\pi_2$) are at a distance of at most $2$, then $\pi_1 r_{k}$ and $\pi_2 r_{k}$ will belong to distinct copies of $BP_{k-1}$. Hence, an $8$-cycle cannot occur with associated partition $(6+2)$ or $(5+3)$. Therefore, an $8$-cycle can be formed only when $4$ vertices belong to each copy.

\begin{figure}
    \centering
\begin{tikzpicture}
\draw (2,2) ellipse (1cm and 2cm);
\draw (6,2) ellipse (1cm and 2cm);
\draw node at (0,2) {$BP_{k-1}(p)$};
\draw node at (8,2) {$BP_{k-1}(q)$};
\filldraw (2.25,3.50) circle (2pt) node[align=left,xshift = -0.5cm] {$\pi_1$};
\filldraw (1.75,2.50) circle (2pt) node[align=left,xshift = -0.5cm] {$\pi_2$};
\draw [thick] (2.25,3.50) -- (1.75,2.50);
\filldraw (1.75,1.50) circle (2pt) node[align=left,xshift = -0.5cm] {$\pi_3$};
\draw [thick] (1.75,2.50) -- (1.75,1.50);
\filldraw (2.25,0.50) circle (2pt) node[align=left,xshift = -0.5cm] {$\pi_4$};
\draw [thick] (1.75,1.50) -- (2.25,0.50);
\filldraw (5.75,0.50) circle (2pt) node[align=left,xshift = 0.5cm] {$\rho_4$};
\draw [thick] (2.25,0.50) -- (5.75,0.50);
\filldraw (6.25,1.50) circle (2pt) node[align=left,xshift = 0.5cm] {$\rho_3$};
\draw [thick] (5.75,0.50) --(6.25,1.50);
\filldraw (6.25,2.50) circle (2pt) node[align=left,xshift = 0.5cm] {$\rho_2$};
\draw [thick] (6.25,1.50) -- (6.25,2.50);
\filldraw (5.75,3.50) circle (2pt) node[align=left,xshift = 0.5cm] {$\rho_1$};
\draw [thick] (6.25,2.50) -- (5.75,3.50);
\draw [thick] (2.25,3.50) -- (5.75,3.50);
\end{tikzpicture}
    \caption{Subcase (4+4).}%Case where the $8$-cycle is in two copies of $BP_{k-1}$ and uses four vertices in each of the copies.}
    \label{f:2copies44}
\end{figure}
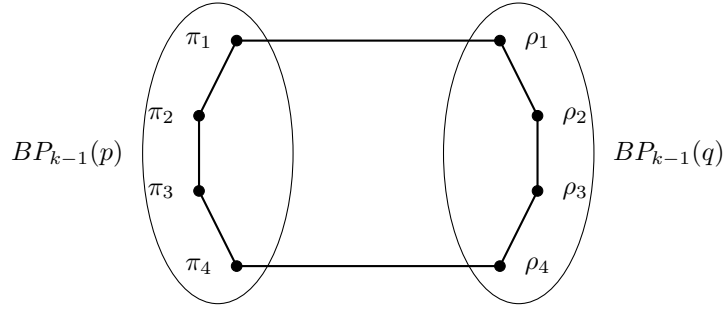

\noindent \textbf{Subcase (4+4).} By Lemma~\ref{lem:31}, $|p| \neq |q|$. Say the four vertices of $V(BP_{k-1}(p))$ are labeled by $\pi_i$, for $i \in [4]$, and the four vertices of $V(BP_{k-1}(q))$ are labeled by $\rho_i$, for $i \in [4]$. Each of $\pi_i$ and $\rho_i$ form a path of length three whose endpoints should be adjacent. We shall say that $\pi_1$ is adjacent to $\rho_1$ and $\pi_4$ is adjacent to $\rho_4$ (see Figure~\ref{f:2copies44}). This means both endpoints in $BP_{k-1}(p)$ ($\pi_1$ and $\pi_4$) should have $\ul{q}$ in their first positions and both endpoints in $BP_{k-1}(q)$ ($\rho_1$ and $\rho_4$) should have $\ul{p}$ in their first positions (see Figure~\ref{f:2copies44}). So in the window notation $\pi_1=[\uline{q}X p]$. By Lemma~\ref{lem:32}, we can get the forms of the remaining vertices of the cycle in $BP_{k-1}(p)$. So, $\pi_2=\pi_1 r_{j}
=[\overline{X_1}qX_2 p]$, $\pi_3=\pi_2 r_{i}
=[X_{12}\overline{X_{11}}qX_2 p]$, $\pi_4=\pi_3 r_{j}
=[\uline{q} X_{11} \overline{X_{12}}X_2 p]$, where $X=X_1X_2$, $X_1 = X_{11}X_{12}$, $j=|X_1|+1$, and $i = |X_{12}|$. After this, we get $\rho_1 = [\uline{p}\overline{X_2} \ \overline{X_{12}} \ \overline{X_{11}}q]$ and $\rho_4 = [\uline{p} \overline{X_2} X_{12} \overline{X_{11}}q]$. Taking $A=\uline{p}X_2$, $B=\overline{X_{11}}q$, $|A|,|B|,|X_{12}| \geq 1$, we need a path of length $3$ from $[AX_{12}B]$ and $[A\overline{X_{12}}B]$. $[AX_{12}B] \ r_{|A|+|X_{12}|}r_{|X_{12}|}r_{|A|+|X_{12}|} = [A\overline{X_{12}}B]$ is a path of length $3$, and there is no other path of length $3$ because there is no cycle of length 6 in the burnt pancake graph since its girth is 8. We have $|X_{11}| = j-1-i \geq 0, |A| = k-j \geq 1,$ and $|X_{12}|=i \geq 1$, which gives the cycle corresponding to the form (\ref{8-1}).

\textbf{Case the $8$-cycle visits three copies of $BP_{k-1}$.} Since part sizes are at least 2 in the partition, there are only two possibilities (4+2+2) or (3+3+2). Let the three copies used be $BP_{k-1}(p)$, $BP_{k-1}(q)$ and $BP_{k-1}(s)$. By Lemma~\ref{lem:31}, $|p|\neq |q|$ , $|p| \neq |s|$ and $|q| \neq |s|$.

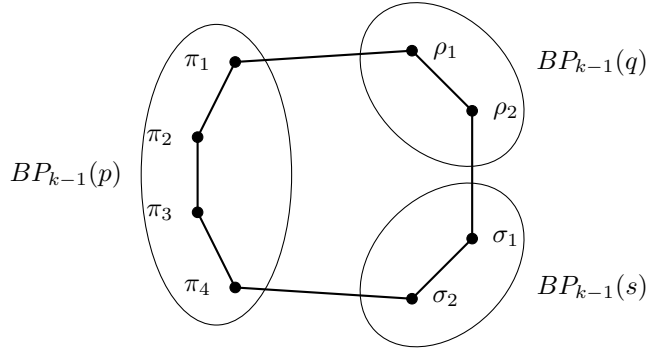
\begin{figure}
\centering
\begin{tikzpicture}
\draw (2,2) ellipse (1cm and 2cm);
\draw node at (0,2) {$BP_{k-1}(p)$};
\draw [rotate around={45:(5,3.2)}](5,3.2) ellipse (0.9cm and 1.25cm);
\draw node at (7,3.5) {$BP_{k-1}(q)$};
\draw [rotate around={-45:(5,0.8)}](5,0.8) ellipse (0.9cm and 1.25cm);
\draw node at (7,0.5) {$BP_{k-1}(s)$};
\filldraw (2.25,3.50) circle (2pt) node[align=left,xshift = -0.5cm] {$\pi_1$};
\filldraw (1.75,2.50) circle (2pt) node[align=left,xshift = -0.5cm] {$\pi_2$};
\draw [thick] (2.25,3.50) -- (1.75,2.50);
\filldraw (1.75,1.50) circle (2pt) node[align=left,xshift = -0.5cm] {$\pi_3$};
\draw [thick] (1.75,2.50) -- (1.75,1.50);
\filldraw (2.25,0.50) circle (2pt) node[align=left,xshift = -0.5cm] {$\pi_4$};
\draw [thick] (1.75,1.50) -- (2.25,0.50);
\filldraw (4.6,0.35) circle (2pt) node[align=left,xshift = 0.45cm] {$\sigma_2$};
\draw [thick] (2.25,0.50) -- (4.6,0.35);
\filldraw (5.4,1.15) circle (2pt) node[align=left,xshift = 0.45cm] {$\sigma_1$};
\draw [thick] (4.6,0.35) -- (5.4,1.15);
\filldraw (5.4,2.85) circle (2pt) node[align=left,xshift = 0.45cm] {$\rho_2$};
\draw [thick] (5.4,1.15) -- (5.4,2.85);
\filldraw (4.6,3.65) circle (2pt) node[align=left,xshift = 0.45cm] {$\rho_1$};
\draw [thick] (5.4,2.85) -- (4.6,3.65);
\draw [thick] (4.6,3.65) -- (2.25,3.50);
\end{tikzpicture}
 \caption{Subcase (4+2+2).}%Case where the $8$-cycle is in three copies of $BP_{k-1}$ and uses four vertices in one copy and two vertices in two other copies.}
    \label{f:3copies422}
\end{figure}

\noindent \textbf{Subcase (4+2+2).}
Suppose the four vertices in $BP_{k-1}(p)$ are labeled by $\pi_i$, for $i\in[4]$; the two vertices in $BP_{k-1}(q)$ are labeled by $\rho_i$, for $i\in[2]$; and the two vertices in $BP_{k-1}(s)$ are labeled by $\sigma_i$, for $i\in[2]$ (Figure~\ref{f:3copies422}). The vertex of $BP_{k-1}(p)$ that is adjacent to $BP_{k-1}(q)$ will have the window notation $\pi_1=[\uline{q} A s B p]$ or $\pi_1=[\uline{q} A \uline{s} B p]$. Now we consider the forms of the adjacent vertices in $BP_{k-1}(q)$ and $BP_{k-1}(s)$. 

If $\pi_1=[\uline{q} X s Y p]$, then $\rho_1=[\uline{p} \overline{Y} \uline{s} \overline{X} q]$. In order to be adjacent to $BP_{k-1}(s)$, $\rho_2$ must have $\uline{s}$ in its first position which is not possible by one flip from $\rho_1$ because any single flip changes the sign of any affected element.

If $\pi_1=[\uline{q} X \uline{s} Y p]$, then $\rho_1=[\uline{p} \overline{Y} s \overline{X} q]$. In order to be adjacent to $BP_{k-1}(s)$, it follows that $\rho_2 = [\uline{s} Y p \overline{X} q]$, and thus $\sigma_1 = [\uline{q} X \uline{p} \overline{Y} s]$. However, for $\sigma_2$ to be adjacent to $BP_{k-1}(p)$ it must have $\uline{p}$ in its first position, which is not possible by one flip from $\sigma_1$.

So cycles of the form $(4+2+2)$ do not exist in $BP_{n}$.

\begin{figure}
\centering
\begin{tikzpicture}
\draw (2,2) ellipse (0.85cm and 1.25cm);
\draw node at (0,2) {$BP_{k-1}(p)$};
\draw [rotate around={45:(5,3.7)}](5,3.7) ellipse (1.05cm and 1.75cm);
\draw node at (7,4.15) {$BP_{k-1}(q)$};
\draw [rotate around={-45:(5,0.3)}](5,0.3) ellipse (1.05cm and 1.75cm);
\draw node at (7,-0.15) {$BP_{k-1}(s)$};

\filldraw (2,2.50) circle (2pt) node[align=left,xshift = -0.5cm] {$\pi_1$};
\filldraw (2,1.50) circle (2pt) node[align=left,xshift = -0.5cm] {$\pi_2$};
\draw [thick] (2,2.50) -- (2,1.50);
\filldraw (4.2,-0.50) circle (2pt) node[align=left,xshift = 0.5cm,yshift =-0.2cm] {$\sigma_1$};
\draw [thick] (2,1.50) -- (4.2,-0.50);
\filldraw (5.2,0.1) circle (2pt) node[align=left,xshift = 0.5cm] {$\sigma_2$};
\draw [thick] (4.2,-0.50) -- (5.2,0.1);
\filldraw (5.7,1.10) circle (2pt) node[align=left,xshift = 0.45cm] {$\sigma_3$};
\draw [thick] (5.2,0.1) -- (5.7,1.10);
\filldraw (5.7,3) circle (2pt) node[align=left,xshift = 0.45cm] {$\rho_3$};
\draw [thick] (5.7,1.10) -- (5.7,3.0);
\filldraw (5.35,4.0) circle (2pt) node[align=left,xshift = 0.45cm] {$\rho_2$};
\draw [thick] (5.7,3.0) -- (5.35,4.0);
\filldraw (4.2,4.45) circle (2pt) node[align=left,xshift = 0.35cm,yshift = 0.2cm] {$\rho_1$};
\draw [thick] (5.35,4.0) -- (4.2,4.45);
\draw [thick] (4.2,4.45) -- (2,2.50);
\end{tikzpicture}
\caption{Subcase (3+2+2).}%Case where the $8$-cycle is in three copies of $BP_{k-1}$ and uses two vertices in one copy and three vertices in other copies.}
    \label{f:3copies332}
\end{figure}
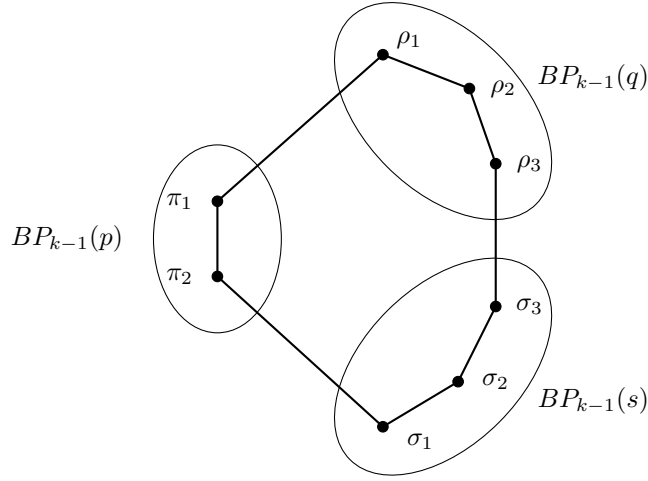

\noindent \textbf{Subcase (3+3+2).}
Suppose the three vertices in $BP_{k-1}(s)$ are labeled by $\sigma_i$, for $i\in[3]$; the three vertices in $BP_{k-1}(q)$ are labeled by $\rho_i$, for $i\in[3]$; and the two vertices in $BP_{k-1}(p)$ are labeled by $\pi_i$, for $i\in[2]$ (Figure~\ref{f:3copies332}). The vertex of $BP_{k-1}(p)$ that is adjacent to $BP_{k-1}(q)$ can have the form $\pi_1=[\uline{q} X s Y p]$ or $\pi_1=[\uline{q} X \uline{s} Y p]$.

If $\pi_1=[\uline{q} X s Y p]$, then $\rho_1=[\uline{p} \overline{Y} \uline{s} \overline{X} q]$, $\pi_2 = [\uline{s} \overline{X} q Y p]$, and $\sigma_1 = [\uline{p} \overline{Y} \uline{q} X s]$. Now we need to find a path of length $2$ between $\sigma_1$ and $\sigma_3$ that places $\ul{q}$ in the first position of $\sigma_3$. In order for the sign to be correct each of the two flips must affect $q$. Thus, $\sigma_2 = \sigma_1 r_{|Y|+|X_1|+2}
=[\overline{X_1} q Y p X_2 s]$, where $X = X_1 X_2$ and $|X_1| \geq 0$. This gives $\sigma_3=\sigma_2 r_{|X_1|+1}
=[\uline{q} X_1 Y p X_2 s]$ and $\rho_3=[\uline{s} \overline{X_2} \uline{p} \overline{Y} \ \overline{X_1} q]$. We now need a path of length 2 between $\rho_3$ and $\rho_1$. Taking $A = \uline{s}\overline{X_2}$, $B=\uline{p} \overline{Y}$ and $C=\overline{X_1 q}$, then $\rho_3 = [ABC]$ and $\rho_1 = [BAC]$. Thus, by Lemma~\ref{lem:33} we know that such a path between $\rho_3$ and $\rho_1$ does not exist.

If $\pi_1=[\uline{q} X \uline{s} Y p]$, then $\rho_1=[\uline{p} \overline{Y} s \overline{X} q]$. $\pi_2$ must have $\uline{s}$ at the first position, which is not possible by one flip from $\pi_1$. Thus, $\pi_1$ could not be of this form and be part of such an $8$-cycle.

So cycles of the form $(3+3+2)$ do not exist in $BP_{n}$. Furthermore, there are no $8$-cycles that visit $3$ copies of $BP_{k-1}$ in $BP_n$.

\textbf{Case the $8$-cycle visits four copies of $BP_{k-1}$ with (2+2+2+2).}
As stated earlier, there must be at least $2$ vertices in any copy of $BP_{k-1}$. Thus the only partition is (2+2+2+2). Let the four copies used be $BP_{k-1}(p)$, $BP_{k-1}(q)$, $BP_{k-1}(s)$ and $BP_{k-1}(t)$. The absolute values of $p,q,s,$ and $t$ may not be distinct. However, by Lemma~\ref{lem:31}, only non-adjacent copies can have the same absolute value in the last elements. Suppose the vertices in $BP_{k-1}(p)$ are labeled by $\pi_i$, for $i\in[2]$; the vertices in $BP_{k-1}(q)$ are labeled by $\rho_i$, for $i\in[2]$; the vertices in $BP_{k-1}(s)$ are labeled by $\sigma_i$, for $i\in[2]$; and the vertices in $BP_{k-1}(t)$ are labeled by $\tau_i$, for $i\in[2]$.

This gives rise to three subcases.

\noindent \textbf{Subcase the absolute value of $p,q,s$, and $t$ are pairwise different.} Suppose, without loss of generality, that $BP_{k-1}(p)$ and $BP_{k-1}(s)$ are non-adjacent (See Figure~\ref{f:4copies2222-1}). 

        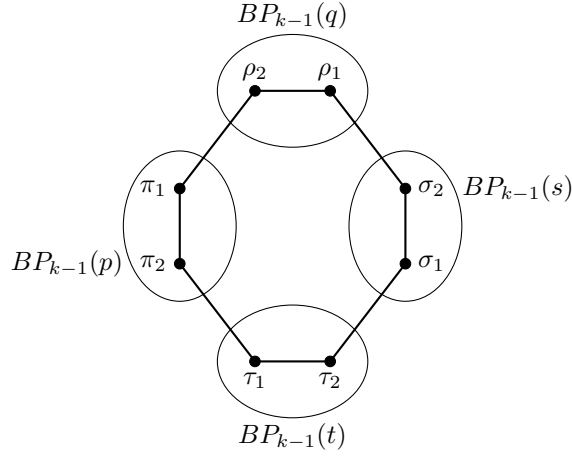
\begin{figure}%[ht]
        \centering
        \begin{tikzpicture}[trim left=-4.cm]
        \draw (-2,2) ellipse (0.75cm and 1cm);
        \draw node at (-3.5,1.5) {$BP_{k-1}(p)$};
        \draw (1,2) ellipse (0.75cm and 1cm);
        \draw node at (-0.5,4.8) {$BP_{k-1}(q)$};
        \draw (-0.5,3.8) ellipse (1cm and 0.75cm);
        \draw node at (2.5,2.5) {$BP_{k-1}(s)$};
        \draw (-0.5,0.2) ellipse (1cm and 0.75cm);
        \draw node at (-0.5,-0.8) {$BP_{k-1}(t)$};
        \filldraw (-2,2.50) circle (2pt) node[align=left,xshift = -0.35cm] {$\pi_1$};
        \filldraw (-2,1.50) circle (2pt) node[align=left,xshift = -0.35cm] {$\pi_2$};
        \draw [thick] (-2,2.50) -- (-2,1.50);
        \filldraw (-1,0.2) circle (2pt) node[align=left,yshift = -0.25cm] {$\tau_1$};
        \draw [thick] (-2,1.50) -- (-1,0.20);
        \filldraw (0,0.2) circle (2pt) node[align=left,yshift = -0.25cm] {$\tau_2$};
        \draw [thick] (-1,0.20) -- (0,0.2);
        \filldraw (1,1.5) circle (2pt) node[align=left,xshift = 0.35cm] {$\sigma_1$};
        \draw [thick] (0,0.2) -- (1,1.50);
        \filldraw (1,2.5) circle (2pt) node[align=left,xshift = 0.35cm] {$\sigma_2$};
        \draw [thick] (1,1.5) -- (1,2.5);
        \filldraw (0,3.8) circle (2pt) node[align=left,yshift = 0.25cm] {$\rho_1$};
        \draw [thick] (1,2.5) -- (0,3.8);
        \filldraw (-1,3.8) circle (2pt) node[align=left,yshift = 0.25cm] {$\rho_2$};
        \draw [thick] (0,3.8) -- (-1,3.8);
        \draw [thick] (-1,3.8) -- (-2,2.50);
        \end{tikzpicture}
        \caption{Subcase (2+2+2+2) where the absolute values are pairwise unequal.}%Case where the $8$-cycle is in four copies of $BP_{k-1}$ and uses two vertices in each copy where the absolute values are pairwise unequal.}
        \label{f:4copies2222-1}
        \end{figure}

Due to their adjacent copies of $BP_{k-1}$ we know that $\pi_1$ begins with $\ul{q}$, $\pi_2$ begins with $\uline{t}$, and so on. In order for $\pi_2$ to start with $\ul{t}$ and $\rho_1$ to start with $\ul{s}$, which is $2$ flips away from $\pi_1$, it follows that $\pi_1$ must contain both $t$ and $\uline{s}$. This leads to two possibilities, either $\ul{s}$ precedes $t$ in $\pi_1$ or $t$ precedes $\ul{s}$ in $\pi_1$.

If $\pi_1=[\uline{q} X \uline{s} Y t Z p]$, then $\pi_2=[\uline{t} \overline{Y} s \overline{X} q Z p]$ and $\tau_1=[\uline{p} \overline{Z} \uline{q} X \uline{s} Y t]$. However, $\tau_2$ must have $\uline{s}$ in its first position, which is not possible by one flip from $\tau_1$.

If $\pi_1=[\ul{q} X t Y \ul{s} Z p]$, then $\pi_2=\pi_1 r_{|X|+2}
=[\uline{t} \overline{X} q Y \ul{s} Z p]$, $\tau_1=\pi_2 r_{k}
=[\uline{p} \overline{Z} s \overline{Y} \uline{q} X t]$, $\tau_2=\tau_1 r_{|Z|+2}
=[\uline{s} Z p \overline{Y} \uline{q} X t]$, $\sigma_1=\tau_2 r_{k}
=[\uline{t} \overline{X} q Y \uline{p} \overline{Z} s]$, $\sigma_2=\sigma_1 r_{|X|+2}
=[\uline{q} X t Y \uline{p} \overline{Z} s]$, $\rho_1=\sigma_2 r_{k}
=[\uline{s} Z p \overline{Y} \uline{t} \overline{X} q]$, $\rho_2=\rho_1 r_{|Z|+2}
=[\uline{p} \overline{Z} s \overline{Y} \uline{t} \overline{X} q]$, and $\pi_1=\rho_2 r_{k}$. Taking $|X|=i-2 \geq 0$ and $|Z|=j-2 \geq 0$ we get $|Y|=k-i-j \geq 0$ and a cycle corresponding to (\ref{8-2}).
    
\noindent \textbf{Subcase the absolute value of one pair among $p,q,s,t$ is the same.} 
%This gives rise to two cases.
Suppose, without loss of generality, that $|p|=|s|$, or more precisely $s=\ul{p}$. It follows from Lemma~\ref{lem:31} that the corresponding copies are non-adjacent (see Figure~\ref{f:4copies2222-2}).

     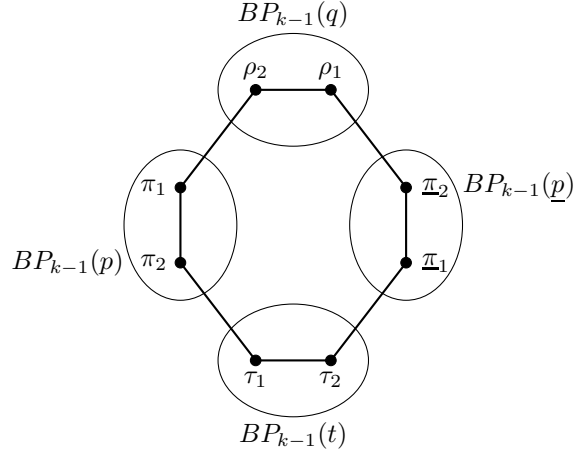
\begin{figure}%[ht]
        \centering
        \begin{tikzpicture}[trim left=-4.cm]
        \draw (-2,2) ellipse (0.75cm and 1cm);
        \draw node at (-3.5,1.5) {$BP_{k-1}(p)$};
        \draw (1,2) ellipse (0.75cm and 1cm);
        \draw node at (-0.5,4.8) {$BP_{k-1}(q)$};
        \draw (-0.5,3.8) ellipse (1cm and 0.75cm);
        \draw node at (2.5,2.5) {$BP_{k-1}(\uline{p})$};
        \draw (-0.5,0.2) ellipse (1cm and 0.75cm);
        \draw node at (-0.5,-0.8) {$BP_{k-1}(t)$};
        \filldraw (-2,2.50) circle (2pt) node[align=left,xshift = -0.35cm] {$\pi_1$};
        \filldraw (-2,1.50) circle (2pt) node[align=left,xshift = -0.35cm] {$\pi_2$};
        \draw [thick] (-2,2.50) -- (-2,1.50);
        \filldraw (-1,0.2) circle (2pt) node[align=left,yshift = -0.25cm] {$\tau_1$};
        \draw [thick] (-2,1.50) -- (-1,0.20);
        \filldraw (0,0.2) circle (2pt) node[align=left,yshift = -0.25cm] {$\tau_2$};
        \draw [thick] (-1,0.20) -- (0,0.2);
        \filldraw (1,1.5) circle (2pt) node[align=left,xshift = 0.4cm] {$\ul{\pi}_1$};
        \draw [thick] (0,0.2) -- (1,1.50);
        \filldraw (1,2.5) circle (2pt) node[align=left,xshift = 0.4cm] {$\ul{\pi}_2$};
        \draw [thick] (1,1.5) -- (1,2.5);
        \filldraw (0,3.8) circle (2pt) node[align=left,yshift = 0.25cm] {$\rho_1$};
        \draw [thick] (1,2.5) -- (0,3.8);
        \filldraw (-1,3.8) circle (2pt) node[align=left,yshift = 0.25cm] {$\rho_2$};
        \draw [thick] (0,3.8) -- (-1,3.8);
        \draw [thick] (-1,3.8) -- (-2,2.50);
        \end{tikzpicture}
        \caption{Subcase (2+2+2+2) where one pair has equal absolute value.}%Case where the $8$-cycle is in four copies of $BP_{k-1}$ and uses two vertices in each copy where one pair has equal absolute value.}
        \label{f:4copies2222-2}
        \end{figure}

Due to their adjacent copies of $BP_{k-1}$, we know that $\pi_1$ begins with $\ul{q}$, $\pi_2$ begins with $\ul{t}$, and so on. In order for $\pi_2$ to begin with $\ul{t}$, then $\pi_1$ must contain $t$. Thus, $\pi_1=[\uline{q} X t Y p]$. This gives $\pi_2=\pi_1 r_{|X|+2}
=[\uline{t} \overline{X} q Y p]$, $\tau_1=\pi_2 r_{k}
=[\uline{p} \overline{Y} \uline{q} X t]$, $\tau_2=\tau_1 r_1
=[p \overline{Y} \uline{q} X t]$, $\ul{\pi}_1=\tau_2 r_{k}
= [\uline{t} \overline{X} q Y \uline{p}]$, $\ul{\pi}_2=\ul{\pi}_1 r_{|X|+2}
=[\uline{q} X t Y \uline{p}]$, $\rho_1=\ul{\pi}_2 r_{k}
=[p\overline{Y} \uline{t} \overline{X} q]$, $\rho_2=\rho_1 r_1
=[\uline{p} \overline{Y} \uline{t} \overline{X} q]$ and $\pi_1=\rho_2 r_{k}$. Taking $|X|=i-2 \geq 0$, we get $|Y|=k-i-1 \geq 0$ and a cycle corresponding to (\ref{8-3}).

\noindent \textbf{Subcase the absolute values of two pairs among $p,q,s,t$ are the same.}
We say that $s=\uline{p}$ and $t=\uline{q}$ (See Figure~\ref{f:4copies2222-3}).

    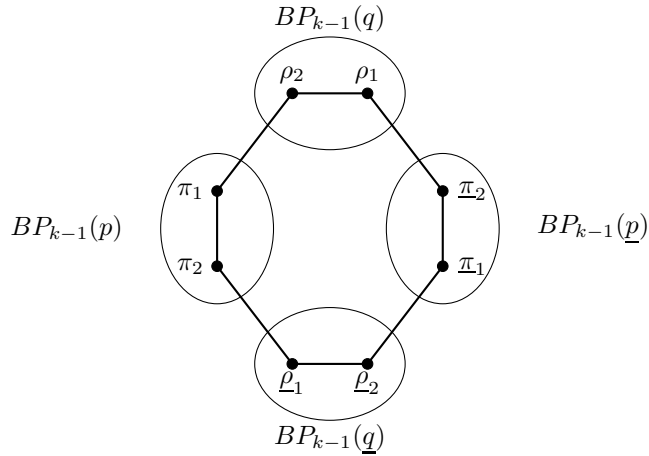
\begin{figure}[t]
    \centering
    \begin{tikzpicture}
        \draw (2,2) ellipse (0.75cm and 1cm);
        \draw node at (0,2) {$BP_{k-1}(p)$};
        \draw (5,2) ellipse (0.75cm and 1cm);
        \draw node at (3.5,4.8) {$BP_{k-1}(q)$};
        \draw (3.5,3.8) ellipse (1cm and 0.75cm);
        \draw node at (7,2) {$BP_{k-1}(\uline{p})$};
        \draw (3.5,0.2) ellipse (1cm and 0.75cm);
        \draw node at (3.5,-0.8) {$BP_{k-1}(\uline{q})$};
        \filldraw (2,2.50) circle (2pt) node[align=left,xshift = -0.35cm] {$\pi_1$};
        \filldraw (2,1.50) circle (2pt) node[align=left,xshift = -0.35cm] {$\pi_2$};
        \draw [thick] (2,2.50) -- (2,1.50);
        \filldraw (3,0.2) circle (2pt) node[align=left,yshift = -0.25cm] {$\ul{\rho}_1$};
        \draw [thick] (2,1.50) -- (3,0.20);
        \filldraw (4,0.2) circle (2pt) node[align=left,yshift = -0.25cm] {$\ul{\rho}_2$};
        \draw [thick] (3,0.20) -- (4,0.2);
        \filldraw (5,1.5) circle (2pt) node[align=left,xshift = 0.40cm] {$\ul{\pi}_1$};
        \draw [thick] (4,0.2) -- (5,1.50);
        \filldraw (5,2.5) circle (2pt) node[align=left,xshift = 0.40cm] {$\ul{\pi}_2$};
        \draw [thick] (5,1.5) -- (5,2.5);
        \filldraw (4,3.8) circle (2pt) node[align=left,yshift = 0.25cm] {$\rho_1$};
        \draw [thick] (5,2.5) -- (4,3.8);
        \filldraw (3,3.8) circle (2pt) node[align=left,yshift = 0.25cm] {$\rho_2$};
        \draw [thick] (4,3.8) -- (3,3.8);
        \draw [thick] (3,3.8) -- (2,2.50);
    \end{tikzpicture}
    \caption{Subcase (2+2+2+2) where two pairs have equal absolute values.}%Case where the $8$-cycle is in four copies of $BP_{k-1}$ and uses two vertices in each copy where two pairs have equal absolute values.}
    \label{f:4copies2222-3}
    \end{figure}
        
Let $\pi_1=[\uline{q} X p]$. Then, we get $\pi_2=\pi_1
%f_0
r_1
=[q X p]$, $\ul{\rho}_1=\pi_2
%f_{k-1}
r_{k}
=[\uline{p} \overline{X} \uline{q}]$, $\ul{\rho}_2=\ul{\rho}_1
%f_0
r_1
=[p \overline{X} \uline{q}]$, $\ul{\pi}_1=\ul{\rho}_2
%f_{k-1}
r_{k}
=[q X \uline{p}]$, $\ul{\pi}_2=\ul{\pi}_1
%f_0
r_1
=[\uline{q} X \uline{p}]$, $\rho_1=\ul{\pi}_2
%f_{k-1}
r_{k}
=[p \overline{X} q]$, $\rho_2=\rho_1
%f_0
r_1
=[\uline{p} \overline{X} q]$, and $\pi_1=\rho_2
%f_{k-1}
r_{k}$. This gives a cycle corresponding to (\ref{8-4}).

Having considered all the possible partitions corresponding to $8$-cycles, we conclude that our classification is now complete. \hfill$\square$

\section{Conclusions}\label{s:discussion}

The first main result of this paper is that the burnt pancake graph $BP_n$ for the group of signed permutations $B_n$, with $n\geq2$, is Hamiltonian and weakly pancyclic. We found this result to be a bit surprising since $BP_n$ is fairly sparse. To make this observation more formal, recall that if $G=(V,E)$ is an undirected simple graph, its \textit{edge density} $\delta(G)$ is defined as
\[
\delta(G):=\frac{|E|}{\binom{|V|}{2}}.
\] That is, $\delta(G)$ is the ratio of its edges to all the potential edges (see Diestel~\cite[Chapter 7]{D12}). The pancake graph $P_n$ of $\sn$ has $n!$ vertices and $\frac{n!(n-1)}{2}$ edges. Furthermore, $P_n$ is ($n-1$)-regular, whereas $BP_{n}$ is $n$-regular. Direct computation yields 
\[
\delta(P_n)=\frac{n-1}{n!-1}
\] and

\[
\delta(BP_n)=\frac{n}{2^nn!-1}.
\] In particular, one notices that $BP_n$ is significantly sparser than $P_n$, yet they are both Hamiltonian and weakly pancyclic. It is also worth note that the number of edges in $BP_n$ is significantly less than the bound on the number of edges compared to the number vertices in Bondy~\cite{Bondy71} of $|E|>|V|^2/4$ that would guarantee the graph is pancyclic. In terms of using $BP_n$ as an interconnection network, $BP_n$ has $n!$ times more vertices than the hypercube of the same degree (thus allowing for many more processors in the network), while preserving a linear diameter. Since $BP_n$ is Hamiltonian and weakly pancyclic, it offers another feature that is often desired in parallel processing, embedding of many different types of graphs. We have shown that the simplest types of graphs, cycles and paths, of length $8$ to $2^nn!$ can be embedded in the $BP_n$, for $n \geq 2$.

Our second main result is a classification of all of the $8$-cycles in $BP_n$, for $n\geq2$, which are the smallest cycles that one can embed in a burnt pancake graph. These classifications indicate how to embed an $8$-cycle within the network starting at any vertex. Through the proof, we can choose different canonical forms, depending on how many nodes are desired from copies of smaller burnt pancake graphs embedded in the greater network. We have already employed a similar method, using the recursive structure of $BP_n$, to classify the 9 and 10-cycles, but have not presented it here. We have used this description to obtain a formula for the number of signed permutations that require four flips to be sorted. At the moment, it seems that the number of stacks requiring $k$
flips to be sorted in the burnt pancake problem is given by an integer-valued polynomial of degree $k+1$.

\section*{Acknowledgments} 
The authors wish to thank the anonymous reviewers whose suggestions significantly improved this article.

\appendix

\section{Cycles of lengths from 8 to 48 in $BP_3$}\label{a:BP3}

We simply list the indices of the generators and write $i$ instead of $r_i$, with $1\leq i\leq 3$.

\begin{footnotesize}

\textbf{Length 8}
 \seqsplit{1 2 1 2 1 2 1 2}

\textbf{Length 9}
\seqsplit{1 2 1 3 1 2 3 2 3}

\textbf{Length 10}
\seqsplit{ 1 2 1 2 1 3 2 1 2 3}

\textbf{Length 11}
\seqsplit{1 2 1 2 3 1 2 3 2 3 2}

\textbf{Length 12}
\seqsplit{1 2 1 2 1 3 1 2 1 2 1 3}

\textbf{Length 13}
\seqsplit{1 2 1 2 1 2 3 1 2 1 3 2 3}

\textbf{Length 14}
\seqsplit{1 2 1 2 1 2 1 3 1 3 2 3 1 3}

\textbf{Length 15}
\seqsplit{1 2 1 2 1 2 1 3 1 2 1 3 1 2 3}

\textbf{Length 16}
\seqsplit{1 2 1 2 1 2 3 1 2 3 1 3 2 1 3 2}

\textbf{Length 17}
\seqsplit{1 2 1 2 1 2 1 3 1 2 3 1 3 1 3 2 3}

\textbf{Length 18}
\seqsplit{1 2 1 2 1 2 1 3 1 2 1 2 3 2 1 3 1 3}

\textbf{Length 19}
\seqsplit{1 2 1 2 1 2 1 3 1 2 1 2 1 2 3 1 2 1 3}

\textbf{Length 20}
\seqsplit{1 2 1 2 1 2 1 3 1 2 3 1 3 2 3 1 2 3 1 3}

\textbf{Length 21}
\seqsplit{1 2 1 2 1 2 1 3 1 2 1 2 1 2 3 2 1 2 1 2 3}

\textbf{Length 22}
\seqsplit{ 1 2 1 2 1 2 1 3 1 2 1 2 3 1 2 1 2 1 2 3 1 3}

\textbf{Length 23}
\seqsplit{ 1 2 1 2 1 2 1 3 1 2 1 2 1 2 1 3 1 2 3 1 3 1 3}

\textbf{Length 24}
\seqsplit{ 1 2 1 2 1 2 1 3 1 2 1 2 1 2 1 3 2 3 2 3 2 3 2 3}

\textbf{Length 25}
\seqsplit{ 1 2 1 2 1 2 1 3 1 2 1 2 1 2 1 3 1 2 1 2 3 1 3 2 3}

\textbf{Length 26}
\seqsplit{ 1 2 1 2 1 2 1 3 1 2 1 2 1 2 1 3 1 2 3 2 1 3 2 3 2 3}

\textbf{Length 27}
\seqsplit{ 1 2 1 2 1 2 1 3 1 2 1 2 1 2 1 3 1 3 1 3 1 2 1 2 1 2 3}

\textbf{Length 28}
\seqsplit{ 1 2 1 2 1 2 1 3 1 2 1 2 1 2 1 3 1 2 1 2 3 2 3 1 2 3 1 3}

\textbf{Length 29}
\seqsplit{ 1 2 1 2 1 2 1 3 1 2 1 2 1 2 1 3 1 2 1 2 3 1 2 1 2 3 2 1 3}

\textbf{Length 30}
\seqsplit{ 1 2 1 2 1 2 1 3 1 2 1 2 1 2 1 3 1 2 1 2 1 2 1 3 2 3 2 3 2 3}

\textbf{Length 31}
\seqsplit{ 1 2 1 2 1 2 1 3 1 2 1 2 1 2 1 3 1 2 1 2 1 2 1 3 1 2 1 2 3 1 3}

\textbf{Length 32}
\seqsplit{ 1 2 1 2 1 2 1 3 1 2 1 2 1 2 1 3 1 2 1 2 1 2 1 3 1 2 3 2 1 3 2 3}

\textbf{Length 33}
\seqsplit{ 1 2 1 2 1 2 1 3 1 2 1 2 1 2 1 3 1 2 1 2 3 1 2 1 2 3 1 2 1 2 1 2 3}

\textbf{Length 34}
\seqsplit{ 1 2 1 2 1 2 1 3 1 2 1 2 1 2 1 3 1 2 1 2 1 2 1 3 1 2 3 1 2 1 3 2 1 3}

\textbf{Length 35}
\seqsplit{ 1 2 1 2 1 2 1 3 1 2 1 2 1 2 1 3 1 2 1 2 1 2 1 3 1 3 2 1 3 2 1 2 3 2 3}

\textbf{Length 36}
\seqsplit{ 1 2 1 2 1 2 1 3 1 2 1 2 1 2 1 3 1 2 1 2 1 2 1 3 1 2 1 2 1 2 1 3 2 3 2 3}

\textbf{Length 37}
\seqsplit{ 1 2 1 2 1 2 1 3 1 2 1 2 1 2 1 3 1 2 1 2 1 2 1 3 1 2 1 2 3 2 1 2 1 2 1 2 3}

\textbf{Length 38}
\seqsplit{ 1 2 1 2 1 2 1 3 1 2 1 2 1 2 1 3 1 2 1 2 1 2 1 3 1 2 1 2 1 2 1 3 1 2 3 2 1 3}

\textbf{Length 39}
\seqsplit{ 1 2 1 2 1 2 1 3 1 2 1 2 1 2 1 3 1 2 1 2 1 2 1 3 1 2 1 2 3 2 3 1 2 1 2 1 3 2 3}

\textbf{Length 40}
\seqsplit{ 1 2 1 2 1 2 1 3 1 2 1 2 1 2 1 3 1 2 1 2 1 2 1 3 1 2 3 2 1 2 1 2 3 1 2 1 2 1 2 3}

\textbf{Length 41}
\seqsplit{ 1 2 1 2 1 2 1 3 1 2 1 2 1 2 1 3 1 2 1 2 1 2 1 3 1 2 3 1 2 1 3 2 3 1 2 1 3 2 3 2 3}

\textbf{Length 42}
\seqsplit{ 1 2 1 2 1 2 1 3 1 2 1 2 1 2 1 3 1 2 1 2 1 2 1 3 1 2 1 2 1 2 1 3 1 2 1 2 1 2 1 3 2 3}

\textbf{Length 43}
\seqsplit{ 1 2 1 2 1 2 1 3 1 2 1 2 1 2 1 3 1 2 1 2 1 2 1 3 1 2 3 2 1 2 3 1 2 1 3 2 1 2 1 2 1 2 3}

\textbf{Length 44}
\seqsplit{ 1 2 1 2 1 2 1 3 1 2 1 2 1 2 1 3 1 2 1 2 1 2 1 3 1 3 1 2 1 3 1 2 1 2 1 2 1 3 1 2 1 3 1 3}

\textbf{Length 45}
\seqsplit{ 1 2 1 2 1 2 1 3 1 2 1 2 1 2 1 3 1 2 1 2 1 2 1 3 1 2 3 1 2 1 2 1 3 1 2 1 3 2 1 2 1 2 1 2 3}

\textbf{Length 46}
\seqsplit{ 1 2 1 2 1 2 1 3 1 2 1 2 1 2 1 3 1 2 1 2 1 2 1 3 1 2 1 2 1 2 1 3 2 1 2 1 2 1 3 1 2 1 2 1 2 3}

\textbf{Length 47}
\seqsplit{ 1 2 1 2 1 2 1 3 1 2 1 2 3 2 1 2 1 2 3 1 2 1 3 1 2 1 2 1 2 1 3 2 3 1 2 1 2 1 2 1 3 1 3 1 2 1 3}

\textbf{Length 48}
\seqsplit{ 1 2 1 2 1 2 1 3 1 2 1 2 1 2 1 3 1 2 1 2 1 2 1 3 1 2 1 2 1 2 1 3 1 2 1 2 1 2 1 3 1 2 1 2 1 2 1 3}

\end{footnotesize}

\section{Cycles of lengths from 8 to 384 in $BP_4$}\label{a:BP4}

We simply list the indices of the generators and write $i$ instead of $r_i$, with $1\leq i\leq 4$.

\begin{footnotesize}

\textbf{Length 8}
\seqsplit{12121212}

\textbf{Length 9}
\seqsplit{121312323}

\textbf{Length 10}
\seqsplit{1212132123}

\textbf{Length 11}
\seqsplit{12123123232}

\textbf{Length 12}
\seqsplit{121213121213}

\textbf{Length 13}
\seqsplit{1212123121323}

\textbf{Length 14}
\seqsplit{12121213132313}

\textbf{Length 15}
\seqsplit{121212131213123}

\textbf{Length 16}
\seqsplit{1212121313242124}

\textbf{Length 17}
\seqsplit{12121213121432134}

\textbf{Length 18}
\seqsplit{121212131212321313}

\textbf{Length 19}
\seqsplit{1212121312121231213}

\textbf{Length 20}
\seqsplit{12121213121212412324}

\textbf{Length 21}
\seqsplit{121212131212123212123}

\textbf{Length 22}
\seqsplit{1212121312121234234343}

\textbf{Length 23}
\seqsplit{12121213121212131231313}

\textbf{Length 24}
\seqsplit{121212131212121323232323}

\textbf{Length 25}
\seqsplit{1212121312121213121231323}

\textbf{Length 26}
\seqsplit{12121213121212131212412434}

\textbf{Length 27}
\seqsplit{121212131212121312123143234}

\textbf{Length 28}
\seqsplit{1212121312121213121214124134}

\textbf{Length 29}
\seqsplit{12121213121212131212142124123}

\textbf{Length 30}
\seqsplit{121212131212121312121213232323}

\textbf{Length 31}
\seqsplit{1212121312121213121212131212313}

\textbf{Length 32}
\seqsplit{12121213121212131212121312321323}

\textbf{Length 33}
\seqsplit{121212131212121312121213121242124}

\textbf{Length 34}
\seqsplit{1212121312121213121212131214124343}

\textbf{Length 35}
\seqsplit{12121213121212131212121312124121214}

\textbf{Length 36}
\seqsplit{121212131212121312121213121212132323}

\textbf{Length 37}
\seqsplit{1212121312121213121212131212142141424}

\textbf{Length 38}
\seqsplit{12121213121212131212121312121213123213}

\textbf{Length 39}
\seqsplit{121212131212121312121213121212413423234}

\textbf{Length 40}
\seqsplit{1212121312121213121212131212121312431234}

\textbf{Length 41}
\seqsplit{12121213121212131212121312121213124121324}

\textbf{Length 42}
\seqsplit{121212131212121312121213121212131212121323}

\textbf{Length 43}
\seqsplit{1212121312121213121212131212121312124234234}

\textbf{Length 44}
\seqsplit{12121213121212131212121312121213121212143234}

\textbf{Length 45}
\seqsplit{121212131212121312121213121212131212124134243}

\textbf{Length 46}
\seqsplit{1212121312121213121212131212121312121214132314}

\textbf{Length 47}
\seqsplit{12121213121212131212121312121213121212134121434}

\textbf{Length 48}
\seqsplit{121212131212121312121213121212131212121312121213}

\textbf{Length 49}
\seqsplit{1212121312121213121212131212121312121213141214134}

\textbf{Length 50}
\seqsplit{12121213121212131212121312121213121212131214313413}

\textbf{Length 51}
\seqsplit{121212131212121312121213121212131212121312124234343}

\textbf{Length 52}
\seqsplit{1212121312121213121212131212121312121213121214324324}

\textbf{Length 53}
\seqsplit{12121213121212131212121312121213121212131212124132424}

\textbf{Length 54}
\seqsplit{121212131212121312121213121212131212121312121214143414}

\textbf{Length 55}
\seqsplit{1212121312121213121212131212121312121213121212141214234}

\textbf{Length 56}
\seqsplit{12121213121212131212121312121213121212131212124124142143}

\textbf{Length 57}
\seqsplit{121212131212121312121213121212131212121312121214124241434}

\textbf{Length 58} \seqsplit{1212121312121213121212131212121312121213121212141212431414}

\textbf{Length 59} \seqsplit{12121213121212131212121312121213121212131212121412121242314}

\textbf{Length 60} \seqsplit{121212131212121312121213121212131212121312121214121231413434}

\textbf{Length 61} \seqsplit{1212121312121213121212131212121312121213121212141212124313234}

\textbf{Length 62} \seqsplit{12121213121212131212121312121213121212131212121412121231234124}

\textbf{Length 63} \seqsplit{121212131212121312121213121212131212121312121214121212131424324}

\textbf{Length 64} \seqsplit{1212121312121213121212131212121312121213121212141212121312314124}

\textbf{Length 65} \seqsplit{12121213121212131212121312121213121212131212121412121213124242324}

\textbf{Length 66} \seqsplit{121212131212121312121213121212131212121312121214121212131213141214}

\textbf{Length 67} \seqsplit{1212121312121213121212131212121312121213121212141212121312123142134}

\textbf{Length 68} \seqsplit{12121213121212131212121312121213121212131212121412121213121241321434}

\textbf{Length 69} \seqsplit{121212131212121312121213121212131212121312121214121212131212142413214}

\textbf{Length 70} \seqsplit{1212121312121213121212131212121312121213121212141212121312121412423234}

\textbf{Length 71} \seqsplit{12121213121212131212121312121213121212131212121412121213121212132413434}

\textbf{Length 72} \seqsplit{121212131212121312121213121212131212121312121214121212131212121312124124}

\textbf{Length 73} \seqsplit{1212121312121213121212131212121312121213121212141212121312121213124131434}

\textbf{Length 74} \seqsplit{12121213121212131212121312121213121212131212121412121213121212131214142424}

\textbf{Length 75} \seqsplit{121212131212121312121213121212131212121312121214121212131212121312123432414}

\textbf{Length 76} \seqsplit{1212121312121213121212131212121312121213121212141212121312121213121214141424}

\textbf{Length 77} \seqsplit{12121213121212131212121312121213121212131212121412121213121212131212121413434}

\textbf{Length 78} \seqsplit{121212131212121312121213121212131212121312121214121212131212121312121213242134}

\textbf{Length 79} \seqsplit{1212121312121213121212131212121312121213121212141212121312121213121212141234324}

\textbf{Length 80} \seqsplit{12121213121212131212121312121213121212131212121412121213121212131212121312421314}

\textbf{Length 81} \seqsplit{121212131212121312121213121212131212121312121214121212131212121312121213124132324}

\textbf{Length 82} \seqsplit{1212121312121213121212131212121312121213121212141212121312121213121212131232432414}

\textbf{Length 83} \seqsplit{12121213121212131212121312121213121212131212121412121213121212131212121312321341214}

\textbf{Length 84} \seqsplit{121212131212121312121213121212131212121312121214121212131212121312121213121212142134}

\textbf{Length 85} \seqsplit{1212121312121213121212131212121312121213121212141212121312121213121212131212142123414}

\textbf{Length 86} \seqsplit{12121213121212131212121312121213121212131212121412121213121212131212121312121242141434}

\textbf{Length 87} \seqsplit{121212131212121312121213121212131212121312121214121212131212121312121213121212132341214}

\textbf{Length 88} \seqsplit{1212121312121213121212131212121312121213121212141212121312121213121212131212121321432414}

\textbf{Length 89} \seqsplit{12121213121212131212121312121213121212131212121412121213121212131212121312121213141324134}

\textbf{Length 90} \seqsplit{121212131212121312121213121212131212121312121214121212131212121312121213121212131214312414}

\textbf{Length 91} \seqsplit{1212121312121213121212131212121312121213121212141212121312121213121212131212121312321241214}

\textbf{Length 92} \seqsplit{12121213121212131212121312121213121212131212121412121213121212131212121312121213121212432414}

\textbf{Length 93} \seqsplit{121212131212121312121213121212131212121312121214121212131212121312121213121212131212121341214}

\textbf{Length 94} \seqsplit{1212121312121213121212131212121312121213121212141212121312121213121212131212121312121242142324}

\textbf{Length 95} \seqsplit{12121213121212131212121312121213121212131212121412121213121212131212121312121213121212134212124}

\textbf{Length 96} \seqsplit{121212131212121312121213121212131212121312121214121212131212121312121213121212131212121343123234}

\textbf{Length 97} \seqsplit{1212121312121213121212131212121312121213121212141212121312121213121212131212121312121213141241414}

\textbf{Length 98} \seqsplit{12121213121212131212121312121213121212131212121412121213121212131212121312121213121212131214134314}

\textbf{Length 99} \seqsplit{121212131212121312121213121212131212121312121214121212131212121312121213121212131212121312412421414}

\textbf{Length 100} \seqsplit{1212121312121213121212131212121312121213121212141212121312121213121212131212121312121213121214323234}

\textbf{Length 101} \seqsplit{12121213121212131212121312121213121212131212121412121213121212131212121312121213121212131212141213134}

\textbf{Length 102} \seqsplit{121212131212121312121213121212131212121312121214121212131212121312121213121212131212121312121241314314}

\textbf{Length 103} \seqsplit{1212121312121213121212131212121312121213121212141212121312121213121212131212121312121213121212141242414}

\textbf{Length 104} \seqsplit{12121213121212131212121312121213121212131212121412121213121212131212121312121213121212131212121413141314}

\textbf{Length 105} \seqsplit{1212121312121213121212131212121312121213121212141212121312121213121212131212121312121213121212141212424
24}

\textbf{Length 106} \seqsplit{1212121312121213121212131212121312121213121212141212121312121213121212131212121312121213121212141212314
134}

\textbf{Length 107} \seqsplit{1212121312121213121212131212121312121213121212141212121312121213121212131212121312121213121212141231342
1414}

\textbf{Length 108} \seqsplit{1212121312121213121212131212121312121213121212141212121312121213121212131212121312121213121212141212341
41434}

\textbf{Length 109} \seqsplit{1212121312121213121212131212121312121213121212141212121312121213121212131212121312121213121212141212313
421424}

\textbf{Length 110} \seqsplit{1212121312121213121212131212121312121213121212141212121312121213121212131212121312121213121212141212121
4343424}

\textbf{Length 111} \seqsplit{1212121312121213121212131212121312121213121212141212121312121213121212131212121312121213121212141212121
31423234}

\textbf{Length 112} \seqsplit{1212121312121213121212131212121312121213121212141212121312121213121212131212121312121213121212141212121
314131214}

\textbf{Length 113} \seqsplit{1212121312121213121212131212121312121213121212141212121312121213121212131212121312121213121212141212121
3123214134}

\textbf{Length 114} \seqsplit{1212121312121213121212131212121312121213121212141212121312121213121212131212121312121213121212141212121
31212413214}

\textbf{Length 115} \seqsplit{1212121312121213121212131212121312121213121212141212121312121213121212131212121312121213121212141212121
312123421414}

\textbf{Length 116} \seqsplit{1212121312121213121212131212121312121213121212141212121312121213121212131212121312121213121212141212121
3121231234134}

\textbf{Length 117} \seqsplit{1212121312121213121212131212121312121213121212141212121312121213121212131212121312121213121212141212121
31212121324134}

\textbf{Length 118} \seqsplit{1212121312121213121212131212121312121213121212141212121312121213121212131212121312121213121212141212121
312121214341234}

\textbf{Length 119} \seqsplit{1212121312121213121212131212121312121213121212141212121312121213121212131212121312121213121212141212121
3121212131241314}

\textbf{Length 120} \seqsplit{1212121312121213121212131212121312121213121212141212121312121213121212131212121312121213121212141212121
31212121323421424}

\textbf{Length 121} \seqsplit{1212121312121213121212131212121312121213121212141212121312121213121212131212121312121213121212141212121
312121213124313134}

\textbf{Length 122} \seqsplit{1212121312121213121212131212121312121213121212141212121312121213121212131212121312121213121212141212121
3121212131232421414}

\textbf{Length 123} \seqsplit{1212121312121213121212131212121312121213121212141212121312121213121212131212121312121213121212141212121
31212121312121214134}

\textbf{Length 124} \seqsplit{1212121312121213121212131212121312121213121212141212121312121213121212131212121312121213121212141212121
312121213121214123414}

\textbf{Length 125} \seqsplit{1212121312121213121212131212121312121213121212141212121312121213121212131212121312121213121212141212121
3121212131212124141434}

\textbf{Length 126} \seqsplit{1212121312121213121212131212121312121213121212141212121312121213121212131212121312121213121212141212121
31212121312121213421424}

\textbf{Length 127} \seqsplit{1212121312121213121212131212121312121213121212141212121312121213121212131212121312121213121212141212121
312121213121212132341324}

\textbf{Length 128} \seqsplit{1212121312121213121212131212121312121213121212141212121312121213121212131212121312121213121212141212121
3121212131212121321421414}

\textbf{Length 129} \seqsplit{1212121312121213121212131212121312121213121212141212121312121213121212131212121312121213121212141212121
31212121312121213123413124}

\textbf{Length 130} \seqsplit{1212121312121213121212131212121312121213121212141212121312121213121212131212121312121213121212141212121
312121213121212131234213234}

\textbf{Length 131} \seqsplit{1212121312121213121212131212121312121213121212141212121312121213121212131212121312121213121212141212121
3121212131212121312143431424}

\textbf{Length 132} \seqsplit{1212121312121213121212131212121312121213121212141212121312121213121212131212121312121213121212141212121
31212121312121213121212421414}

\textbf{Length 133} \seqsplit{1212121312121213121212131212121312121213121212141212121312121213121212131212121312121213121212141212121
312121213121212131212121341324}

\textbf{Length 134} \seqsplit{1212121312121213121212131212121312121213121212141212121312121213121212131212121312121213121212141212121
3121212131212121312121243242324}

\textbf{Length 135} \seqsplit{1212121312121213121212131212121312121213121212141212121312121213121212131212121312121213121212141212121
31212121312121213121212134323134}

\textbf{Length 136} \seqsplit{1212121312121213121212131212121312121213121212141212121312121213121212131212121312121213121212141212121
312121213121212131212121342131234}

\textbf{Length 137} \seqsplit{1212121312121213121212131212121312121213121212141212121312121213121212131212121312121213121212141212121
3121212131212121312121213141341424}

\textbf{Length 138} \seqsplit{1212121312121213121212131212121312121213121212141212121312121213121212131212121312121213121212141212121
31212121312121213121212131214123424}

\textbf{Length 139} \seqsplit{1212121312121213121212131212121312121213121212141212121312121213121212131212121312121213121212141212121
312121213121212131212121312121413124}

\textbf{Length 140} \seqsplit{1212121312121213121212131212121312121213121212141212121312121213121212131212121312121213121212141212121
3121212131212121312121213121214213234}

\textbf{Length 141} \seqsplit{1212121312121213121212131212121312121213121212141212121312121213121212131212121312121213121212141212121
31212121312121213121212131212143123134}

\textbf{Length 142} \seqsplit{1212121312121213121212131212121312121213121212141212121312121213121212131212121312121213121212141212121
312121213121212131212121312121241231424}

\textbf{Length 143} \seqsplit{1212121312121213121212131212121312121213121212141212121312121213121212131212121312121213121212141212121
3121212131212121312121213121212414124324}

\textbf{Length 144} \seqsplit{1212121312121213121212131212121312121213121212141212121312121213121212131212121312121213121212141212121
31212121312121213121212131212121314342324}

\textbf{Length 145} \seqsplit{1212121312121213121212131212121312121213121212141212121312121213121212131212121312121213121212141212121
312121213121212131212121312121213141213414}

\textbf{Length 146} \seqsplit{1212121312121213121212131212121312121213121212141212121312121213121212131212121312121213121212141212121
3121212131212121312121213121212131214121234}

\textbf{Length 147} \seqsplit{1212121312121213121212131212121312121213121212141212121312121213121212131212121312121213121212141212121
31212121312121213121212131212121312142312324}

\textbf{Length 148} \seqsplit{1212121312121213121212131212121312121213121212141212121312121213121212131212121312121213121212141212121
312121213121212131212121312121213121243142324}

\textbf{Length 149} \seqsplit{1212121312121213121212131212121312121213121212141212121312121213121212131212121312121213121212141212121
3121212131212121312121213121212131212124232324}

\textbf{Length 150} \seqsplit{1212121312121213121212131212121312121213121212141212121312121213121212131212121312121213121212141212121
31212121312121213121212131212121312121241212314}

\textbf{Length 151} \seqsplit{1212121312121213121212131212121312121213121212141212121312121213121212131212121312121213121212141212121
312121213121212131212121312121213121212412321324}

\textbf{Length 152} \seqsplit{1212121312121213121212131212121312121213121212141212121312121213121212131212121312121213121212141212121
3121212131212121312121213121212131212124121313234}

\textbf{Length 153} \seqsplit{1212121312121213121212131212121312121213121212141212121312121213121212131212121312121213121212141212121
31212121312121213121212131212121312121214142324324}

\textbf{Length 154} \seqsplit{1212121312121213121212131212121312121213121212141212121312121213121212131212121312121213121212141212121
312121213121212131212121312121213121212141212341414}

\textbf{Length 155} \seqsplit{1212121312121213121212131212121312121213121212141212121312121213121212131212121312121213121212141212121
3121212131212121312121213121212131212121412134342434}

\textbf{Length 156} \seqsplit{1212121312121213121212131212121312121213121212141212121312121213121212131212121312121213121212141212121
31212121312121213121212131212121312121214121231241424}

\textbf{Length 157} \seqsplit{1212121312121213121212131212121312121213121212141212121312121213121212131212121312121213121212141212121
312121213121212131212121312121213121212141212343242434}

\textbf{Length 158} \seqsplit{1212121312121213121212131212121312121213121212141212121312121213121212131212121312121213121212141212121
3121212131212121312121213121212131212121412121234243424}

\textbf{Length 159} \seqsplit{1212121312121213121212131212121312121213121212141212121312121213121212131212121312121213121212141212121
31212121312121213121212131212121312121214121212132341424}

\textbf{Length 160} \seqsplit{1212121312121213121212131212121312121213121212141212121312121213121212131212121312121213121212141212121
312121213121212131212121312121213121212141212121342143424}

\textbf{Length 161} \seqsplit{1212121312121213121212131212121312121213121212141212121312121213121212131212121312121213121212141212121
3121212131212121312121213121212131212121412121213123241414}

\textbf{Length 162} \seqsplit{1212121312121213121212131212121312121213121212141212121312121213121212131212121312121213121212141212121
31212121312121213121212131212121312121214121212131242132434}

\textbf{Length 163} \seqsplit{1212121312121213121212131212121312121213121212141212121312121213121212131212121312121213121212141212121
312121213121212131212121312121213121212141212121312321241424}

\textbf{Length 164} \seqsplit{1212121312121213121212131212121312121213121212141212121312121213121212131212121312121213121212141212121
3121212131212121312121213121212131212121412121213121231241434}

\textbf{Length 165} \seqsplit{1212121312121213121212131212121312121213121212141212121312121213121212131212121312121213121212141212121
31212121312121213121212131212121312121214121212131212121341424}

\textbf{Length 166} \seqsplit{1212121312121213121212131212121312121213121212141212121312121213121212131212121312121213121212141212121
312121213121212131212121312121213121212141212121312121234243434}

\textbf{Length 167} \seqsplit{1212121312121213121212131212121312121213121212141212121312121213121212131212121312121213121212141212121
3121212131212121312121213121212131212121412121213121212132141414}

\textbf{Length 168} \seqsplit{1212121312121213121212131212121312121213121212141212121312121213121212131212121312121213121212141212121
31212121312121213121212131212121312121214121212131212121342143434}

\textbf{Length 169} \seqsplit{1212121312121213121212131212121312121213121212141212121312121213121212131212121312121213121212141212121
312121213121212131212121312121213121212141212121312121213134143124}

\textbf{Length 170} \seqsplit{1212121312121213121212131212121312121213121212141212121312121213121212131212121312121213121212141212121
3121212131212121312121213121212131212121412121213121212131324243424}

\textbf{Length 171} \seqsplit{1212121312121213121212131212121312121213121212141212121312121213121212131212121312121213121212141212121
31212121312121213121212131212121312121214121212131212121312121241414}

\textbf{Length 172} \seqsplit{1212121312121213121212131212121312121213121212141212121312121213121212131212121312121213121212141212121
312121213121212131212121312121213121212141212121312121213121234232434}

\textbf{Length 173} \seqsplit{1212121312121213121212131212121312121213121212141212121312121213121212131212121312121213121212141212121
3121212131212121312121213121212131212121412121213121212131212121341434}

\textbf{Length 174} \seqsplit{1212121312121213121212131212121312121213121212141212121312121213121212131212121312121213121212141212121
31212121312121213121212131212121312121214121212131212121312121243242434}

\textbf{Length 175} \seqsplit{1212121312121213121212131212121312121213121212141212121312121213121212131212121312121213121212141212121
312121213121212131212121312121213121212141212121312121213121212423242324}

\textbf{Length 176} \seqsplit{1212121312121213121212131212121312121213121212141212121312121213121212131212121312121213121212141212121
3121212131212121312121213121212131212121412121213121212131212121342423414}

\textbf{Length 177} \seqsplit{1212121312121213121212131212121312121213121212141212121312121213121212131212121312121213121212141212121
31212121312121213121212131212121312121214121212131212121312121213214141234}

\textbf{Length 178} \seqsplit{1212121312121213121212131212121312121213121212141212121312121213121212131212121312121213121212141212121
312121213121212131212121312121213121212141212121312121213121212131324243434}

\textbf{Length 179} \seqsplit{1212121312121213121212131212121312121213121212141212121312121213121212131212121312121213121212141212121
3121212131212121312121213121212131212121412121213121212131212121312141412134}

\textbf{Length 180} \seqsplit{1212121312121213121212131212121312121213121212141212121312121213121212131212121312121213121212141212121
31212121312121213121212131212121312121214121212131212121312121213121412343434}

\textbf{Length 181} \seqsplit{1212121312121213121212131212121312121213121212141212121312121213121212131212121312121213121212141212121
312121213121212131212121312121213121212141212121312121213121212131212124141234}

\textbf{Length 182} \seqsplit{1212121312121213121212131212121312121213121212141212121312121213121212131212121312121213121212141212121
3121212131212121312121213121212131212121412121213121212131212121312121242413414}

\textbf{Length 183} \seqsplit{1212121312121213121212131212121312121213121212141212121312121213121212131212121312121213121212141212121
31212121312121213121212131212121312121214121212131212121312121213121212412321424}

\textbf{Length 184} \seqsplit{1212121312121213121212131212121312121213121212141212121312121213121212131212121312121213121212141212121
312121213121212131212121312121213121212141212121312121213121212131212121314342434}

\textbf{Length 185} \seqsplit{1212121312121213121212131212121312121213121212141212121312121213121212131212121312121213121212141212121
3121212131212121312121213121212131212121412121213121212131212121312121213141323414}

\textbf{Length 186} \seqsplit{1212121312121213121212131212121312121213121212141212121312121213121212131212121312121213121212141212121
31212121312121213121212131212121312121214121212131212121312121213121212131214143124}

\textbf{Length 187} \seqsplit{1212121312121213121212131212121312121213121212141212121312121213121212131212121312121213121212141212121
312121213121212131212121312121213121212141212121312121213121212131212121312124243424}

\textbf{Length 188} \seqsplit{1212121312121213121212131212121312121213121212141212121312121213121212131212121312121213121212141212121
3121212131212121312121213121212131212121412121213121212131212121312121213121241423124}

\textbf{Length 189} \seqsplit{1212121312121213121212131212121312121213121212141212121312121213121212131212121312121213121212141212121
31212121312121213121212131212121312121214121212131212121312121213121212131212124232434}

\textbf{Length 190} \seqsplit{1212121312121213121212131212121312121213121212141212121312121213121212131212121312121213121212141212121
312121213121212131212121312121213121212141212121312121213121212131212121312121241431214}

\textbf{Length 191} \seqsplit{1212121312121213121212131212121312121213121212141212121312121213121212131212121312121213121212141212121
3121212131212121312121213121212131212121412121213121212131212121312121213121212412321434}

\textbf{Length 192} \seqsplit{1212121312121213121212131212121312121213121212141212121312121213121212131212121312121213121212141212121
31212121312121213121212131212121312121214121212131212121312121213121212131212124121243214}

\textbf{Length 193} \seqsplit{1212121312121213121212131212121312121213121212141212121312121213121212131212121312121213121212141212121
312121213121212131212121312121213121212141212121312121213121212131212121312121213142434314}

\textbf{Length 194} \seqsplit{1212121312121213121212131212121312121213121212141212121312121213121212131212121312121213121212141212121
3121212131212121312121213121212131212121412121213121212131212121312121213121212131241212414}

\textbf{Length 195} \seqsplit{1212121312121213121212131212121312121213121212141212121312121213121212131212121312121213121212141212121
31212121312121213121212131212121312121214121212131212121312121213121212131212121312124243434}

\textbf{Length 196} \seqsplit{1212121312121213121212131212121312121213121212141212121312121213121212131212121312121213121212141212121
312121213121212131212121312121213121212141212121312121213121212131212121312121213121241212424}

\textbf{Length 197} \seqsplit{1212121312121213121212131212121312121213121212141212121312121213121212131212121312121213121212141212121
3121212131212121312121213121212131212121412121213121212131212121312121213121212131212412431434}

\textbf{Length 198} \seqsplit{1212121312121213121212131212121312121213121212141212121312121213121212131212121312121213121212141212121
31212121312121213121212131212121312121214121212131212121312121213121212131212121312124132123424}

\textbf{Length 199} \seqsplit{1212121312121213121212131212121312121213121212141212121312121213121212131212121312121213121212141212121
312121213121212131212121312121213121212141212121312121213121212131212121312121213121241213413414}

\textbf{Length 200} \seqsplit{1212121312121213121212131212121312121213121212141212121312121213121212131212121312121213121212141212121
3121212131212121312121213121212131212121412121213121212131212121312121213121212131212121434343434}

\textbf{Length 201} \seqsplit{1212121312121213121212131212121312121213121212141212121312121213121212131212121312121213121212141212121
31212121312121213121212131212121312121214121212131212121312121213121212131212121312121214121343424}

\textbf{Length 202} \seqsplit{1212121312121213121212131212121312121213121212141212121312121213121212131212121312121213121212141212121
312121213121212131212121312121213121212141212121312121213121212131212121312121213121212141212313414}

\textbf{Length 203} \seqsplit{1212121312121213121212131212121312121213121212141212121312121213121212131212121312121213121212141212121
3121212131212121312121213121212131212121412121213121212131212121312121213121212131212121412123432424}

\textbf{Length 204} \seqsplit{1212121312121213121212131212121312121213121212141212121312121213121212131212121312121213121212141212121
31212121312121213121212131212121312121214121212131212121312121213121212131212121312121214121213131414}

\textbf{Length 205} \seqsplit{1212121312121213121212131212121312121213121212141212121312121213121212131212121312121213121212141212121
312121213121212131212121312121213121212141212121312121213121212131212121312121213121212141212123431424}

\textbf{Length 206} \seqsplit{1212121312121213121212131212121312121213121212141212121312121213121212131212121312121213121212141212121
3121212131212121312121213121212131212121412121213121212131212121312121213121212131212121412121231242324}

\textbf{Length 207} \seqsplit{1212121312121213121212131212121312121213121212141212121312121213121212131212121312121213121212141212121
3121212131212121312121213121212131212121412121213121212131212121312121213121212131212121412121213232341
4}

\textbf{Length 208} \seqsplit{1212121312121213121212131212121312121213121212141212121312121213121212131212121312121213121212141212121
3121212131212121312121213121212131212121412121213121212131212121312121213121212131212121412121213124213
24}

\textbf{Length 209} \seqsplit{1212121312121213121212131212121312121213121212141212121312121213121212131212121312121213121212141212121
3121212131212121312121213121212131212121412121213121212131212121312121213121212131212121412121213123213
414}

\textbf{Length 210} \seqsplit{1212121312121213121212131212121312121213121212141212121312121213121212131212121312121213121212141212121
3121212131212121312121213121212131212121412121213121212131212121312121213121212131212121412121213121231
2414}

\textbf{Length 211} \seqsplit{1212121312121213121212131212121312121213121212141212121312121213121212131212121312121213121212141212121
3121212131212121312121213121212131212121412121213121212131212121312121213121212131212121412121213121321
31414}

\textbf{Length 212} \seqsplit{1212121312121213121212131212121312121213121212141212121312121213121212131212121312121213121212141212121
3121212131212121312121213121212131212121412121213121212131212121312121213121212131212121412121213121212
342434}

\textbf{Length 213} \seqsplit{1212121312121213121212131212121312121213121212141212121312121213121212131212121312121213121212141212121
3121212131212121312121213121212131212121412121213121212131212121312121213121212131212121412121213121212
1323414}

\textbf{Length 214} \seqsplit{1212121312121213121212131212121312121213121212141212121312121213121212131212121312121213121212141212121
3121212131212121312121213121212131212121412121213121212131212121312121213121212131212121412121213121212
13421434}

\textbf{Length 215} \seqsplit{1212121312121213121212131212121312121213121212141212121312121213121212131212121312121213121212141212121
3121212131212121312121213121212131212121412121213121212131212121312121213121212131212121412121213121212
313243424}

\textbf{Length 216} \seqsplit{1212121312121213121212131212121312121213121212141212121312121213121212131212121312121213121212141212121
3121212131212121312121213121212131212121412121213121212131212121312121213121212131212121412121213121212
1313423124}

\textbf{Length 217} \seqsplit{1212121312121213121212131212121312121213121212141212121312121213121212131212121312121213121212141212121
3121212131212121312121213121212131212121412121213121212131212121312121213121212131212121412121213121212
13123212414}

\textbf{Length 218} \seqsplit{1212121312121213121212131212121312121213121212141212121312121213121212131212121312121213121212141212121
3121212131212121312121213121212131212121412121213121212131212121312121213121212131212121412121213121212
131212342324}

\textbf{Length 219} \seqsplit{1212121312121213121212131212121312121213121212141212121312121213121212131212121312121213121212141212121
3121212131212121312121213121212131212121412121213121212131212121312121213121212131212121412121213121212
1312121213414}

\textbf{Length 220} \seqsplit{1212121312121213121212131212121312121213121212141212121312121213121212131212121312121213121212141212121
3121212131212121312121213121212131212121412121213121212131212121312121213121212131212121412121213121212
13121212432424}

\textbf{Length 221} \seqsplit{1212121312121213121212131212121312121213121212141212121312121213121212131212121312121213121212141212121
3121212131212121312121213121212131212121412121213121212131212121312121213121212131212121412121213121212
131212141324324}

\textbf{Length 222} \seqsplit{1212121312121213121212131212121312121213121212141212121312121213121212131212121312121213121212141212121
3121212131212121312121213121212131212121412121213121212131212121312121213121212131212121412121213121212
1312121241324214}

\textbf{Length 223} \seqsplit{1212121312121213121212131212121312121213121212141212121312121213121212131212121312121213121212141212121
3121212131212121312121213121212131212121412121213121212131212121312121213121212131212121412121213121212
13121212431431434}

\textbf{Length 224} \seqsplit{1212121312121213121212131212121312121213121212141212121312121213121212131212121312121213121212141212121
3121212131212121312121213121212131212121412121213121212131212121312121213121212131212121412121213121212
131212121313242434}

\textbf{Length 225} \seqsplit{1212121312121213121212131212121312121213121212141212121312121213121212131212121312121213121212141212121
3121212131212121312121213121212131212121412121213121212131212121312121213121212131212121412121213121212
1312121213123242324}

\textbf{Length 226} \seqsplit{1212121312121213121212131212121312121213121212141212121312121213121212131212121312121213121212141212121
3121212131212121312121213121212131212121412121213121212131212121312121213121212131212121412121213121212
13121212131214123434}

\textbf{Length 227} \seqsplit{1212121312121213121212131212121312121213121212141212121312121213121212131212121312121213121212141212121
3121212131212121312121213121212131212121412121213121212131212121312121213121212131212121412121213121212
131212121312121413134}

\textbf{Length 228} \seqsplit{1212121312121213121212131212121312121213121212141212121312121213121212131212121312121213121212141212121
3121212131212121312121213121212131212121412121213121212131212121312121213121212131212121412121213121212
1312121213121232142434}

\textbf{Length 229} \seqsplit{1212121312121213121212131212121312121213121212141212121312121213121212131212121312121213121212141212121
3121212131212121312121213121212131212121412121213121212131212121312121213121212131212121412121213121212
13121212131212141231324}

\textbf{Length 230} \seqsplit{1212121312121213121212131212121312121213121212141212121312121213121212131212121312121213121212141212121
3121212131212121312121213121212131212121412121213121212131212121312121213121212131212121412121213121212
131212121312121213143424}

\textbf{Length 231} \seqsplit{1212121312121213121212131212121312121213121212141212121312121213121212131212121312121213121212141212121
3121212131212121312121213121212131212121412121213121212131212121312121213121212131212121412121213121212
1312121213121212132142324}

\textbf{Length 232} \seqsplit{1212121312121213121212131212121312121213121212141212121312121213121212131212121312121213121212141212121
3121212131212121312121213121212131212121412121213121212131212121312121213121212131212121412121213121212
13121212131212121314134214}

\textbf{Length 233} \seqsplit{1212121312121213121212131212121312121213121212141212121312121213121212131212121312121213121212141212121
3121212131212121312121213121212131212121412121213121212131212121312121213121212131212121412121213121212
131212121312121213121423124}

\textbf{Length 234} \seqsplit{1212121312121213121212131212121312121213121212141212121312121213121212131212121312121213121212141212121
3121212131212121312121213121212131212121412121213121212131212121312121213121212131212121412121213121212
1312121213121212131212431424}

\textbf{Length 235} \seqsplit{1212121312121213121212131212121312121213121212141212121312121213121212131212121312121213121212141212121
3121212131212121312121213121212131212121412121213121212131212121312121213121212131212121412121213121212
13121212131212121312121242324}

\textbf{Length 236} \seqsplit{1212121312121213121212131212121312121213121212141212121312121213121212131212121312121213121212141212121
3121212131212121312121213121212131212121412121213121212131212121312121213121212131212121412121213121212
131212121312121213121241314214}

\textbf{Length 237} \seqsplit{1212121312121213121212131212121312121213121212141212121312121213121212131212121312121213121212141212121
3121212131212121312121213121212131212121412121213121212131212121312121213121212131212121412121213121212
1312121213121212131212124123214}

\textbf{Length 238} \seqsplit{1212121312121213121212131212121312121213121212141212121312121213121212131212121312121213121212141212121
3121212131212121312121213121212131212121412121213121212131212121312121213121212131212121412121213121212
13121212131212121312121241312134}

\textbf{Length 239} \seqsplit{1212121312121213121212131212121312121213121212141212121312121213121212131212121312121213121212141212121
3121212131212121312121213121212131212121412121213121212131212121312121213121212131212121412121213121212
131212121312121213121212421323134}

\textbf{Length 240} \seqsplit{1212121312121213121212131212121312121213121212141212121312121213121212131212121312121213121212141212121
3121212131212121312121213121212131212121412121213121212131212121312121213121212131212121412121213121212
1312121213121212131212121324343434}

\textbf{Length 241} \seqsplit{1212121312121213121212131212121312121213121212141212121312121213121212131212121312121213121212141212121
3121212131212121312121213121212131212121412121213121212131212121312121213121212131212121412121213121212
13121212131212121312121213121242434}

\textbf{Length 242} \seqsplit{1212121312121213121212131212121312121213121212141212121312121213121212131212121312121213121212141212121
3121212131212121312121213121212131212121412121213121212131212121312121213121212131212121412121213121212
131212121312121213121212131243143434}

\textbf{Length 243} \seqsplit{1212121312121213121212131212121312121213121212141212121312121213121212131212121312121213121212141212121
3121212131212121312121213121212131212121412121213121212131212121312121213121212131212121412121213121212
1312121213121212131212121312124124314}

\textbf{Length 244} \seqsplit{1212121312121213121212131212121312121213121212141212121312121213121212131212121312121213121212141212121
3121212131212121312121213121212131212121412121213121212131212121312121213121212131212121412121213121212
13121212131212121312121213124123123424}

\textbf{Length 245} \seqsplit{1212121312121213121212131212121312121213121212141212121312121213121212131212121312121213121212141212121
3121212131212121312121213121212131212121412121213121212131212121312121213121212131212121412121213121212
131212121312121213121212131212413423424}

\textbf{Length 246} \seqsplit{1212121312121213121212131212121312121213121212141212121312121213121212131212121312121213121212141212121
3121212131212121312121213121212131212121412121213121212131212121312121213121212131212121412121213121212
1312121213121212131212121312121214343434}

\textbf{Length 247} \seqsplit{1212121312121213121212131212121312121213121212141212121312121213121212131212121312121213121212141212121
3121212131212121312121213121212131212121412121213121212131212121312121213121212131212121412121213121212
13121212131212121312121213121212142412134}

\textbf{Length 248} \seqsplit{1212121312121213121212131212121312121213121212141212121312121213121212131212121312121213121212141212121
3121212131212121312121213121212131212121412121213121212131212121312121213121212131212121412121213121212
131212121312121213121212131212121412323424}

\textbf{Length 249} \seqsplit{1212121312121213121212131212121312121213121212141212121312121213121212131212121312121213121212141212121
3121212131212121312121213121212131212121412121213121212131212121312121213121212131212121412121213121212
1312121213121212131212121312121214124121314}

\textbf{Length 250} \seqsplit{1212121312121213121212131212121312121213121212141212121312121213121212131212121312121213121212141212121
3121212131212121312121213121212131212121412121213121212131212121312121213121212131212121412121213121212
13121212131212121312121213121212141213231424}

\textbf{Length 251} \seqsplit{1212121312121213121212131212121312121213121212141212121312121213121212131212121312121213121212141212121
3121212131212121312121213121212131212121412121213121212131212121312121213121212131212121412121213121212
131212121312121213121212131212121412121312424}

\textbf{Length 252} \seqsplit{1212121312121213121212131212121312121213121212141212121312121213121212131212121312121213121212141212121
3121212131212121312121213121212131212121412121213121212131212121312121213121212131212121412121213121212
1312121213121212131212121312121214121213431434}

\textbf{Length 253} \seqsplit{1212121312121213121212131212121312121213121212141212121312121213121212131212121312121213121212141212121
3121212131212121312121213121212131212121412121213121212131212121312121213121212131212121412121213121212
13121212131212121312121213121212141212121412134}

\textbf{Length 254} \seqsplit{1212121312121213121212131212121312121213121212141212121312121213121212131212121312121213121212141212121
3121212131212121312121213121212131212121412121213121212131212121312121213121212131212121412121213121212
131212121312121213121212131212121412121213213424}

\textbf{Length 255} \seqsplit{1212121312121213121212131212121312121213121212141212121312121213121212131212121312121213121212141212121
3121212131212121312121213121212131212121412121213121212131212121312121213121212131212121412121213121212
1312121213121212131212121312121214121212131341234}

\textbf{Length 256} \seqsplit{1212121312121213121212131212121312121213121212141212121312121213121212131212121312121213121212141212121
3121212131212121312121213121212131212121412121213121212131212121312121213121212131212121412121213121212
13121212131212121312121213121212141212121343124314}

\textbf{Length 257} \seqsplit{1212121312121213121212131212121312121213121212141212121312121213121212131212121312121213121212141212121
3121212131212121312121213121212131212121412121213121212131212121312121213121212131212121412121213121212
131212121312121213121212131212121412121213131232424}

\textbf{Length 258} \seqsplit{1212121312121213121212131212121312121213121212141212121312121213121212131212121312121213121212141212121
3121212131212121312121213121212131212121412121213121212131212121312121213121212131212121412121213121212
1312121213121212131212121312121214121212131212123424}

\textbf{Length 259} \seqsplit{1212121312121213121212131212121312121213121212141212121312121213121212131212121312121213121212141212121
3121212131212121312121213121212131212121412121213121212131212121312121213121212131212121412121213121212
13121212131212121312121213121212141212121312123412314}

\textbf{Length 260} \seqsplit{1212121312121213121212131212121312121213121212141212121312121213121212131212121312121213121212141212121
3121212131212121312121213121212131212121412121213121212131212121312121213121212131212121412121213121212
131212121312121213121212131212121412121213121212134214}

\textbf{Length 261} \seqsplit{1212121312121213121212131212121312121213121212141212121312121213121212131212121312121213121212141212121
3121212131212121312121213121212131212121412121213121212131212121312121213121212131212121412121213121212
1312121213121212131212121312121214121212131212141214324}

\textbf{Length 262} \seqsplit{1212121312121213121212131212121312121213121212141212121312121213121212131212121312121213121212141212121
3121212131212121312121213121212131212121412121213121212131212121312121213121212131212121412121213121212
13121212131212121312121213121212141212121312123131231424}

\textbf{Length 263} \seqsplit{1212121312121213121212131212121312121213121212141212121312121213121212131212121312121213121212141212121
3121212131212121312121213121212131212121412121213121212131212121312121213121212131212121412121213121212
131212121312121213121212131212121412121213121212312143434}

\textbf{Length 264} \seqsplit{1212121312121213121212131212121312121213121212141212121312121213121212131212121312121213121212141212121
3121212131212121312121213121212131212121412121213121212131212121312121213121212131212121412121213121212
1312121213121212131212121312121214121212131212121313131424}

\textbf{Length 265} \seqsplit{1212121312121213121212131212121312121213121212141212121312121213121212131212121312121213121212141212121
3121212131212121312121213121212131212121412121213121212131212121312121213121212131212121412121213121212
13121212131212121312121213121212141212121312121213121434314}

\textbf{Length 266} \seqsplit{1212121312121213121212131212121312121213121212141212121312121213121212131212121312121213121212141212121
3121212131212121312121213121212131212121412121213121212131212121312121213121212131212121412121213121212
131212121312121213121212131212121412121213121212131231241234}

\textbf{Length 267} \seqsplit{1212121312121213121212131212121312121213121212141212121312121213121212131212121312121213121212141212121
3121212131212121312121213121212131212121412121213121212131212121312121213121212131212121412121213121212
1312121213121212131212121312121214121212131212121312142123434}

\textbf{Length 268} \seqsplit{1212121312121213121212131212121312121213121212141212121312121213121212131212121312121213121212141212121
3121212131212121312121213121212131212121412121213121212131212121312121213121212131212121412121213121212
13121212131212121312121213121212141212121312121213121214213134}

\textbf{Length 269} \seqsplit{1212121312121213121212131212121312121213121212141212121312121213121212131212121312121213121212141212121
3121212131212121312121213121212131212121412121213121212131212121312121213121212131212121412121213121212
131212121312121213121212131212121412121213121212131212124314314}

\textbf{Length 270} \seqsplit{1212121312121213121212131212121312121213121212141212121312121213121212131212121312121213121212141212121
3121212131212121312121213121212131212121412121213121212131212121312121213121212131212121412121213121212
1312121213121212131212121312121214121212131212121312121213132424}

\textbf{Length 271} \seqsplit{1212121312121213121212131212121312121213121212141212121312121213121212131212121312121213121212141212121
3121212131212121312121213121212131212121412121213121212131212121312121213121212131212121412121213121212
13121212131212121312121213121212141212121312121213121212421231434}

\textbf{Length 272} \seqsplit{1212121312121213121212131212121312121213121212141212121312121213121212131212121312121213121212141212121
3121212131212121312121213121212131212121412121213121212131212121312121213121212131212121412121213121212
131212121312121213121212131212121412121213121212131212121312141234}

\textbf{Length 273} \seqsplit{1212121312121213121212131212121312121213121212141212121312121213121212131212121312121213121212141212121
3121212131212121312121213121212131212121412121213121212131212121312121213121212131212121412121213121212
1312121213121212131212121312121214121212131212121312121213131431434}

\textbf{Length 274} \seqsplit{1212121312121213121212131212121312121213121212141212121312121213121212131212121312121213121212141212121
3121212131212121312121213121212131212121412121213121212131212121312121213121212131212121412121213121212
13121212131212121312121213121212141212121312121213121212131212321424}

\textbf{Length 275} \seqsplit{1212121312121213121212131212121312121213121212141212121312121213121212131212121312121213121212141212121
3121212131212121312121213121212131212121412121213121212131212121312121213121212131212121412121213121212
131212121312121213121212131212121412121213121212131212121312123143434}

\textbf{Length 276} \seqsplit{1212121312121213121212131212121312121213121212141212121312121213121212131212121312121213121212141212121
3121212131212121312121213121212131212121412121213121212131212121312121213121212131212121412121213121212
1312121213121212131212121312121214121212131212121312121213121212412314}

\textbf{Length 277} \seqsplit{1212121312121213121212131212121312121213121212141212121312121213121212131212121312121213121212141212121
3121212131212121312121213121212131212121412121213121212131212121312121213121212131212121412121213121212
13121212131212121312121213121212141212121312121213121212131212124321324}

\textbf{Length 278} \seqsplit{1212121312121213121212131212121312121213121212141212121312121213121212131212121312121213121212141212121
3121212131212121312121213121212131212121412121213121212131212121312121213121212131212121412121213121212
131212121312121213121212131212121412121213121212131212121312121241313234}

\textbf{Length 279} \seqsplit{1212121312121213121212131212121312121213121212141212121312121213121212131212121312121213121212141212121
3121212131212121312121213121212131212121412121213121212131212121312121213121212131212121412121213121212
1312121213121212131212121312121214121212131212121312121213121212131231424}

\textbf{Length 280} \seqsplit{1212121312121213121212131212121312121213121212141212121312121213121212131212121312121213121212141212121
3121212131212121312121213121212131212121412121213121212131212121312121213121212131212121412121213121212
13121212131212121312121213121212141212121312121213121212131212121314124324}

\textbf{Length 281} \seqsplit{1212121312121213121212131212121312121213121212141212121312121213121212131212121312121213121212141212121
3121212131212121312121213121212131212121412121213121212131212121312121213121212131212121412121213121212
131212121312121213121212131212121412121213121212131212121312121213143121434}

\textbf{Length 282} \seqsplit{1212121312121213121212131212121312121213121212141212121312121213121212131212121312121213121212141212121
3121212131212121312121213121212131212121412121213121212131212121312121213121212131212121412121213121212
1312121213121212131212121312121214121212131212121312121213121212131232143434}

\textbf{Length 283} \seqsplit{1212121312121213121212131212121312121213121212141212121312121213121212131212121312121213121212141212121
3121212131212121312121213121212131212121412121213121212131212121312121213121212131212121412121213121212
13121212131212121312121213121212141212121312121213121212131212121312124321434}

\textbf{Length 284} \seqsplit{1212121312121213121212131212121312121213121212141212121312121213121212131212121312121213121212141212121
3121212131212121312121213121212131212121412121213121212131212121312121213121212131212121412121213121212
131212121312121213121212131212121412121213121212131212121312121213121241243214}

\textbf{Length 285} \seqsplit{1212121312121213121212131212121312121213121212141212121312121213121212131212121312121213121212141212121
3121212131212121312121213121212131212121412121213121212131212121312121213121212131212121412121213121212
1312121213121212131212121312121214121212131212121312121213121212131212421232434}

\textbf{Length 286} \seqsplit{1212121312121213121212131212121312121213121212141212121312121213121212131212121312121213121212141212121
3121212131212121312121213121212131212121412121213121212131212121312121213121212131212121412121213121212
13121212131212121312121213121212141212121312121213121212131212121312121213243434}

\textbf{Length 287} \seqsplit{1212121312121213121212131212121312121213121212141212121312121213121212131212121312121213121212141212121
3121212131212121312121213121212131212121412121213121212131212121312121213121212131212121412121213121212
131212121312121213121212131212121412121213121212131212121312121213121212131212424}

\textbf{Length 288} \seqsplit{1212121312121213121212131212121312121213121212141212121312121213121212131212121312121213121212141212121
3121212131212121312121213121212131212121412121213121212131212121312121213121212131212121412121213121212
1312121213121212131212121312121214121212131212121312121213121212131212121312431434}

\textbf{Length 289} \seqsplit{1212121312121213121212131212121312121213121212141212121312121213121212131212121312121213121212141212121
3121212131212121312121213121212131212121412121213121212131212121312121213121212131212121412121213121212
13121212131212121312121213121212141212121312121213121212131212121312121213141313124}

\textbf{Length 290} \seqsplit{1212121312121213121212131212121312121213121212141212121312121213121212131212121312121213121212141212121
3121212131212121312121213121212131212121412121213121212131212121312121213121212131212121412121213121212
131212121312121213121212131212121412121213121212131212121312121213121212131242314324}

\textbf{Length 291} \seqsplit{1212121312121213121212131212121312121213121212141212121312121213121212131212121312121213121212141212121
3121212131212121312121213121212131212121412121213121212131212121312121213121212131212121412121213121212
1312121213121212131212121312121214121212131212121312121213121212131212121314123121234}

\textbf{Length 292} \seqsplit{1212121312121213121212131212121312121213121212141212121312121213121212131212121312121213121212141212121
3121212131212121312121213121212131212121412121213121212131212121312121213121212131212121412121213121212
13121212131212121312121213121212141212121312121213121212131212121312121213121212143434}

\textbf{Length 293} \seqsplit{1212121312121213121212131212121312121213121212141212121312121213121212131212121312121213121212141212121
3121212131212121312121213121212131212121412121213121212131212121312121213121212131212121412121213121212
131212121312121213121212131212121412121213121212131212121312121213121212131212412121214}

\textbf{Length 294} \seqsplit{1212121312121213121212131212121312121213121212141212121312121213121212131212121312121213121212141212121
3121212131212121312121213121212131212121412121213121212131212121312121213121212131212121412121213121212
1312121213121212131212121312121214121212131212121312121213121212131212121312121214234324}

\textbf{Length 295} \seqsplit{1212121312121213121212131212121312121213121212141212121312121213121212131212121312121213121212141212121
3121212131212121312121213121212131212121412121213121212131212121312121213121212131212121412121213121212
13121212131212121312121213121212141212121312121213121212131212121312121213121241213212314}

\textbf{Length 296} \seqsplit{1212121312121213121212131212121312121213121212141212121312121213121212131212121312121213121212141212121
3121212131212121312121213121212131212121412121213121212131212121312121213121212131212121412121213121212
131212121312121213121212131212121412121213121212131212121312121213121212131212121412343124}

\textbf{Length 297} \seqsplit{1212121312121213121212131212121312121213121212141212121312121213121212131212121312121213121212141212121
3121212131212121312121213121212131212121412121213121212131212121312121213121212131212121412121213121212
1312121213121212131212121312121214121212131212121312121213121212131212121312124121312121314}

\textbf{Length 298} \seqsplit{1212121312121213121212131212121312121213121212141212121312121213121212131212121312121213121212141212121
3121212131212121312121213121212131212121412121213121212131212121312121213121212131212121412121213121212
13121212131212121312121213121212141212121312121213121212131212121312121213121212141212134314}

\textbf{Length 299} \seqsplit{1212121312121213121212131212121312121213121212141212121312121213121212131212121312121213121212141212121
3121212131212121312121213121212131212121412121213121212131212121312121213121212131212121412121213121212
131212121312121213121212131212121412121213121212131212121312121213121212131212121412131232434}

\textbf{Length 300} \seqsplit{1212121312121213121212131212121312121213121212141212121312121213121212131212121312121213121212141212121
3121212131212121312121213121212131212121412121213121212131212121312121213121212131212121412121213121212
1312121213121212131212121312121214121212131212121312121213121212131212121312121214121212134324}

\textbf{Length 301} \seqsplit{1212121312121213121212131212121312121213121212141212121312121213121212131212121312121213121212141212121
3121212131212121312121213121212131212121412121213121212131212121312121213121212131212121412121213121212
13121212131212121312121213121212141212121312121213121212131212121312121213121212141212312324324}

\textbf{Length 302} \seqsplit{1212121312121213121212131212121312121213121212141212121312121213121212131212121312121213121212141212121
3121212131212121312121213121212131212121412121213121212131212121312121213121212131212121412121213121212
131212121312121213121212131212121412121213121212131212121312121213121212131212121412121234212324}

\textbf{Length 303} \seqsplit{1212121312121213121212131212121312121213121212141212121312121213121212131212121312121213121212141212121
3121212131212121312121213121212131212121412121213121212131212121312121213121212131212121412121213121212
1312121213121212131212121312121214121212131212121312121213121212131212121312121214121212312132434}

\textbf{Length 304} \seqsplit{1212121312121213121212131212121312121213121212141212121312121213121212131212121312121213121212141212121
3121212131212121312121213121212131212121412121213121212131212121312121213121212131212121412121213121212
13121212131212121312121213121212141212121312121213121212131212121312121213121212141212121313231434}

\textbf{Length 305} \seqsplit{1212121312121213121212131212121312121213121212141212121312121213121212131212121312121213121212141212121
3121212131212121312121213121212131212121412121213121212131212121312121213121212131212121412121213121212
131212121312121213121212131212121412121213121212131212121312121213121212131212121412121213121312434}

\textbf{Length 306} \seqsplit{1212121312121213121212131212121312121213121212141212121312121213121212131212121312121213121212141212121
3121212131212121312121213121212131212121412121213121212131212121312121213121212131212121412121213121212
1312121213121212131212121312121214121212131212121312121213121212131212121312121214121212131212143124}

\textbf{Length 307} \seqsplit{1212121312121213121212131212121312121213121212141212121312121213121212131212121312121213121212141212121
3121212131212121312121213121212131212121412121213121212131212121312121213121212131212121412121213121212
13121212131212121312121213121212141212121312121213121212131212121312121213121212141212121312312421234}

\textbf{Length 308} \seqsplit{1212121312121213121212131212121312121213121212141212121312121213121212131212121312121213121212141212121
3121212131212121312121213121212131212121412121213121212131212121312121213121212131212121412121213121212
131212121312121213121212131212121412121213121212131212121312121213121212131212121412121213121232131434}

\textbf{Length 309} \seqsplit{1212121312121213121212131212121312121213121212141212121312121213121212131212121312121213121212141212121
3121212131212121312121213121212131212121412121213121212131212121312121213121212131212121412121213121212
1312121213121212131212121312121214121212131212121312121213121212131212121312121214121212131212123121434}

\textbf{Length 310} \seqsplit{1212121312121213121212131212121312121213121212141212121312121213121212131212121312121213121212141212121
3121212131212121312121213121212131212121412121213121212131212121312121213121212131212121412121213121212
1312121213121212131212121312121214121212131212121312121213121212131212121312121214121212131212141312313
4}

\textbf{Length 311} \seqsplit{1212121312121213121212131212121312121213121212141212121312121213121212131212121312121213121212141212121
3121212131212121312121213121212131212121412121213121212131212121312121213121212131212121412121213121212
1312121213121212131212121312121214121212131212121312121213121212131212121312121214121212131212123212124
34}

\textbf{Length 312} \seqsplit{1212121312121213121212131212121312121213121212141212121312121213121212131212121312121213121212141212121
3121212131212121312121213121212131212121412121213121212131212121312121213121212131212121412121213121212
1312121213121212131212121312121214121212131212121312121213121212131212121312121214121212131212121313243
214}

\textbf{Length 313} \seqsplit{1212121312121213121212131212121312121213121212141212121312121213121212131212121312121213121212141212121
3121212131212121312121213121212131212121412121213121212131212121312121213121212131212121412121213121212
1312121213121212131212121312121214121212131212121312121213121212131212121312121214121212131212121312142
1234}

\textbf{Length 314} \seqsplit{1212121312121213121212131212121312121213121212141212121312121213121212131212121312121213121212141212121
3121212131212121312121213121212131212121412121213121212131212121312121213121212131212121412121213121212
1312121213121212131212121312121214121212131212121312121213121212131212121312121214121212131212121313124
31214}

\textbf{Length 315} \seqsplit{1212121312121213121212131212121312121213121212141212121312121213121212131212121312121213121212141212121
3121212131212121312121213121212131212121412121213121212131212121312121213121212131212121412121213121212
1312121213121212131212121312121214121212131212121312121213121212131212121312121214121212131212121312123
132434}

\textbf{Length 316} \seqsplit{1212121312121213121212131212121312121213121212141212121312121213121212131212121312121213121212141212121
3121212131212121312121213121212131212121412121213121212131212121312121213121212131212121412121213121212
1312121213121212131212121312121214121212131212121312121213121212131212121312121214121212131212121312123
2143214}

\textbf{Length 317} \seqsplit{1212121312121213121212131212121312121213121212141212121312121213121212131212121312121213121212141212121
3121212131212121312121213121212131212121412121213121212131212121312121213121212131212121412121213121212
1312121213121212131212121312121214121212131212121312121213121212131212121312121214121212131212121312121
24212314}

\textbf{Length 318} \seqsplit{1212121312121213121212131212121312121213121212141212121312121213121212131212121312121213121212141212121
3121212131212121312121213121212131212121412121213121212131212121312121213121212131212121412121213121212
1312121213121212131212121312121214121212131212121312121213121212131212121312121214121212131212121312121
241232324}

\textbf{Length 319} \seqsplit{1212121312121213121212131212121312121213121212141212121312121213121212131212121312121213121212141212121
3121212131212121312121213121212131212121412121213121212131212121312121213121212131212121412121213121212
1312121213121212131212121312121214121212131212121312121213121212131212121312121214121212131212121312121
2131314314}

\textbf{Length 320} \seqsplit{1212121312121213121212131212121312121213121212141212121312121213121212131212121312121213121212141212121
3121212131212121312121213121212131212121412121213121212131212121312121213121212131212121412121213121212
1312121213121212131212121312121214121212131212121312121213121212131212121312121214121212131212121312121
21323232434}

\textbf{Length 321} \seqsplit{1212121312121213121212131212121312121213121212141212121312121213121212131212121312121213121212141212121
3121212131212121312121213121212131212121412121213121212131212121312121213121212131212121412121213121212
1312121213121212131212121312121214121212131212121312121213121212131212121312121214121212131212121312121
213121231434}

\textbf{Length 322} \seqsplit{1212121312121213121212131212121312121213121212141212121312121213121212131212121312121213121212141212121
3121212131212121312121213121212131212121412121213121212131212121312121213121212131212121412121213121212
1312121213121212131212121312121214121212131212121312121213121212131212121312121214121212131212121312121
2131232132434}

\textbf{Length 323} \seqsplit{1212121312121213121212131212121312121213121212141212121312121213121212131212121312121213121212141212121
3121212131212121312121213121212131212121412121213121212131212121312121213121212131212121412121213121212
1312121213121212131212121312121214121212131212121312121213121212131212121312121214121212131212121312121
21312314212324}

\textbf{Length 324} \seqsplit{1212121312121213121212131212121312121213121212141212121312121213121212131212121312121213121212141212121
3121212131212121312121213121212131212121412121213121212131212121312121213121212131212121412121213121212
1312121213121212131212121312121214121212131212121312121213121212131212121312121214121212131212121312121
213123121321434}

\textbf{Length 325} \seqsplit{1212121312121213121212131212121312121213121212141212121312121213121212131212121312121213121212141212121
3121212131212121312121213121212131212121412121213121212131212121312121213121212131212121412121213121212
1312121213121212131212121312121214121212131212121312121213121212131212121312121214121212131212121312121
2131212312124314}

\textbf{Length 326} \seqsplit{1212121312121213121212131212121312121213121212141212121312121213121212131212121312121213121212141212121
3121212131212121312121213121212131212121412121213121212131212121312121213121212131212121412121213121212
1312121213121212131212121312121214121212131212121312121213121212131212121312121214121212131212121312121
21312121213232434}

\textbf{Length 327} \seqsplit{1212121312121213121212131212121312121213121212141212121312121213121212131212121312121213121212141212121
3121212131212121312121213121212131212121412121213121212131212121312121213121212131212121412121213121212
1312121213121212131212121312121214121212131212121312121213121212131212121312121214121212131212121312121
213121212131431214}

\textbf{Length 328} \seqsplit{1212121312121213121212131212121312121213121212141212121312121213121212131212121312121213121212141212121
3121212131212121312121213121212131212121412121213121212131212121312121213121212131212121412121213121212
1312121213121212131212121312121214121212131212121312121213121212131212121312121214121212131212121312121
2131212121312321434}

\textbf{Length 329} \seqsplit{1212121312121213121212131212121312121213121212141212121312121213121212131212121312121213121212141212121
3121212131212121312121213121212131212121412121213121212131212121312121213121212131212121412121213121212
1312121213121212131212121312121214121212131212121312121213121212131212121312121214121212131212121312121
21312121213121243214}

\textbf{Length 330} \seqsplit{1212121312121213121212131212121312121213121212141212121312121213121212131212121312121213121212141212121
3121212131212121312121213121212131212121412121213121212131212121312121213121212131212121412121213121212
1312121213121212131212121312121214121212131212121312121213121212131212121312121214121212131212121312121
213121212131231214314}

\textbf{Length 331} \seqsplit{1212121312121213121212131212121312121213121212141212121312121213121212131212121312121213121212141212121
3121212131212121312121213121212131212121412121213121212131212121312121213121212131212121412121213121212
1312121213121212131212121312121214121212131212121312121213121212131212121312121214121212131212121312121
2131212121312124212324}

\textbf{Length 332} \seqsplit{1212121312121213121212131212121312121213121212141212121312121213121212131212121312121213121212141212121
3121212131212121312121213121212131212121412121213121212131212121312121213121212131212121412121213121212
1312121213121212131212121312121214121212131212121312121213121212131212121312121214121212131212121312121
21312121213121212132434}

\textbf{Length 333} \seqsplit{1212121312121213121212131212121312121213121212141212121312121213121212131212121312121213121212141212121
3121212131212121312121213121212131212121412121213121212131212121312121213121212131212121412121213121212
1312121213121212131212121312121214121212131212121312121213121212131212121312121214121212131212121312121
213121212131212412121324}

\textbf{Length 334} \seqsplit{1212121312121213121212131212121312121213121212141212121312121213121212131212121312121213121212141212121
3121212131212121312121213121212131212121412121213121212131212121312121213121212131212121412121213121212
1312121213121212131212121312121214121212131212121312121213121212131212121312121214121212131212121312121
2131212121312121213124314}

\textbf{Length 335} \seqsplit{1212121312121213121212131212121312121213121212141212121312121213121212131212121312121213121212141212121
3121212131212121312121213121212131212121412121213121212131212121312121213121212131212121412121213121212
1312121213121212131212121312121214121212131212121312121213121212131212121312121214121212131212121312121
21312121213121241212323134}

\textbf{Length 336} \seqsplit{1212121312121213121212131212121312121213121212141212121312121213121212131212121312121213121212141212121
3121212131212121312121213121212131212121412121213121212131212121312121213121212131212121412121213121212
1312121213121212131212121312121214121212131212121312121213121212131212121312121214121212131212121312121
213121212131212412131313234}

\textbf{Length 337} \seqsplit{1212121312121213121212131212121312121213121212141212121312121213121212131212121312121213121212141212121
3121212131212121312121213121212131212121412121213121212131212121312121213121212131212121412121213121212
1312121213121212131212121312121214121212131212121312121213121212131212121312121214121212131212121312121
2131212121312124121231313124}

\textbf{Length 338} \seqsplit{1212121312121213121212131212121312121213121212141212121312121213121212131212121312121213121212141212121
3121212131212121312121213121212131212121412121213121212131212121312121213121212131212121412121213121212
1312121213121212131212121312121214121212131212121312121213121212131212121312121214121212131212121312121
21312121213121212131212121434}

\textbf{Length 339} \seqsplit{1212121312121213121212131212121312121213121212141212121312121213121212131212121312121213121212141212121
3121212131212121312121213121212131212121412121213121212131212121312121213121212131212121412121213121212
1312121213121212131212121312121214121212131212121312121213121212131212121312121214121212131212121312121
213121212131212121312412132324}

\textbf{Length 340} \seqsplit{1212121312121213121212131212121312121213121212141212121312121213121212131212121312121213121212141212121
3121212131212121312121213121212131212121412121213121212131212121312121213121212131212121412121213121212
1312121213121212131212121312121214121212131212121312121213121212131212121312121214121212131212121312121
2131212121312121213124123132134}

\textbf{Length 341} \seqsplit{1212121312121213121212131212121312121213121212141212121312121213121212131212121312121213121212141212121
3121212131212121312121213121212131212121412121213121212131212121312121213121212131212121412121213121212
1312121213121212131212121312121214121212131212121312121213121212131212121312121214121212131212121312121
21312121213121212131241213123214}

\textbf{Length 342} \seqsplit{1212121312121213121212131212121312121213121212141212121312121213121212131212121312121213121212141212121
3121212131212121312121213121212131212121412121213121212131212121312121213121212131212121412121213121212
1312121213121212131212121312121214121212131212121312121213121212131212121312121214121212131212121312121
213121212131212121312412121312124}

\textbf{Length 343} \seqsplit{1212121312121213121212131212121312121213121212141212121312121213121212131212121312121213121212141212121
3121212131212121312121213121212131212121412121213121212131212121312121213121212131212121412121213121212
1312121213121212131212121312121214121212131212121312121213121212131212121312121214121212131212121312121
2131212121312121213124121212321324}

\textbf{Length 344} \seqsplit{1212121312121213121212131212121312121213121212141212121312121213121212131212121312121213121212141212121
3121212131212121312121213121212131212121412121213121212131212121312121213121212131212121412121213121212
1312121213121212131212121312121214121212131212121312121213121212131212121312121214121212131212121312121
21312121213121212131212121413131314}

\textbf{Length 345} \seqsplit{1212121312121213121212131212121312121213121212141212121312121213121212131212121312121213121212141212121
3121212131212121312121213121212131212121412121213121212131212121312121213121212131212121412121213121212
1312121213121212131212121312121214121212131212121312121213121212131212121312121214121212131212121312121
213121212131212121312121214121312324}

\textbf{Length 346} \seqsplit{1212121312121213121212131212121312121213121212141212121312121213121212131212121312121213121212141212121
3121212131212121312121213121212131212121412121213121212131212121312121213121212131212121412121213121212
1312121213121212131212121312121214121212131212121312121213121212131212121312121214121212131212121312121
2131212121312121213121212141212132124}

\textbf{Length 347} \seqsplit{1212121312121213121212131212121312121213121212141212121312121213121212131212121312121213121212141212121
3121212131212121312121213121212131212121412121213121212131212121312121213121212131212121412121213121212
1312121213121212131212121312121214121212131212121312121213121212131212121312121214121212131212121312121
21312121213121212131212121412313132324}

\textbf{Length 348} \seqsplit{1212121312121213121212131212121312121213121212141212121312121213121212131212121312121213121212141212121
3121212131212121312121213121212131212121412121213121212131212121312121213121212131212121412121213121212
1312121213121212131212121312121214121212131212121312121213121212131212121312121214121212131212121312121
213121212131212121312121214121213121214}

\textbf{Length 349} \seqsplit{1212121312121213121212131212121312121213121212141212121312121213121212131212121312121213121212141212121
3121212131212121312121213121212131212121412121213121212131212121312121213121212131212121412121213121212
1312121213121212131212121312121214121212131212121312121213121212131212121312121214121212131212121312121
2131212121312121213121212141212123121324}

\textbf{Length 350} \seqsplit{1212121312121213121212131212121312121213121212141212121312121213121212131212121312121213121212141212121
3121212131212121312121213121212131212121412121213121212131212121312121213121212131212121412121213121212
1312121213121212131212121312121214121212131212121312121213121212131212121312121214121212131212121312121
21312121213121212131212121412121213132314}

\textbf{Length 351} \seqsplit{1212121312121213121212131212121312121213121212141212121312121213121212131212121312121213121212141212121
3121212131212121312121213121212131212121412121213121212131212121312121213121212131212121412121213121212
1312121213121212131212121312121214121212131212121312121213121212131212121312121214121212131212121312121
213121212131212121312121214121212131213124}

\textbf{Length 352} \seqsplit{1212121312121213121212131212121312121213121212141212121312121213121212131212121312121213121212141212121
3121212131212121312121213121212131212121412121213121212131212121312121213121212131212121412121213121212
1312121213121212131212121312121214121212131212121312121213121212131212121312121214121212131212121312121
2131212121312121213121212141212123132312124}

\textbf{Length 353} \seqsplit{1212121312121213121212131212121312121213121212141212121312121213121212131212121312121213121212141212121
3121212131212121312121213121212131212121412121213121212131212121312121213121212131212121412121213121212
1312121213121212131212121312121214121212131212121312121213121212131212121312121214121212131212121312121
21312121213121212131212121412121213123131324}

\textbf{Length 354} \seqsplit{1212121312121213121212131212121312121213121212141212121312121213121212131212121312121213121212141212121
3121212131212121312121213121212131212121412121213121212131212121312121213121212131212121412121213121212
1312121213121212131212121312121214121212131212121312121213121212131212121312121214121212131212121312121
213121212131212121312121214121212131212321314}

\textbf{Length 355} \seqsplit{1212121312121213121212131212121312121213121212141212121312121213121212131212121312121213121212141212121
3121212131212121312121213121212131212121412121213121212131212121312121213121212131212121412121213121212
1312121213121212131212121312121214121212131212121312121213121212131212121312121214121212131212121312121
2131212121312121213121212141212121312121231214}

\textbf{Length 356} \seqsplit{1212121312121213121212131212121312121213121212141212121312121213121212131212121312121213121212141212121
3121212131212121312121213121212131212121412121213121212131212121312121213121212131212121412121213121212
1312121213121212131212121312121214121212131212121312121213121212131212121312121214121212131212121312121
21312121213121212131212121412121213123132312314}

\textbf{Length 357} \seqsplit{1212121312121213121212131212121312121213121212141212121312121213121212131212121312121213121212141212121
3121212131212121312121213121212131212121412121213121212131212121312121213121212131212121412121213121212
1312121213121212131212121312121214121212131212121312121213121212131212121312121214121212131212121312121
213121212131212121312121214121212131212123212124}

\textbf{Length 358} \seqsplit{1212121312121213121212131212121312121213121212141212121312121213121212131212121312121213121212141212121
3121212131212121312121213121212131212121412121213121212131212121312121213121212131212121412121213121212
1312121213121212131212121312121214121212131212121312121213121212131212121312121214121212131212121312121
2131212121312121213121212141212121312123121212314}

\textbf{Length 359} \seqsplit{1212121312121213121212131212121312121213121212141212121312121213121212131212121312121213121212141212121
3121212131212121312121213121212131212121412121213121212131212121312121213121212131212121412121213121212
1312121213121212131212121312121214121212131212121312121213121212131212121312121214121212131212121312121
21312121213121212131212121412121213121212131231314}

\textbf{Length 360} \seqsplit{1212121312121213121212131212121312121213121212141212121312121213121212131212121312121213121212141212121
3121212131212121312121213121212131212121412121213121212131212121312121213121212131212121412121213121212
1312121213121212131212121312121214121212131212121312121213121212131212121312121214121212131212121312121
213121212131212121312121214121212131212121323232324}

\textbf{Length 361} \seqsplit{1212121312121213121212131212121312121213121212141212121312121213121212131212121312121213121212141212121
3121212131212121312121213121212131212121412121213121212131212121312121213121212131212121412121213121212
1312121213121212131212121312121214121212131212121312121213121212131212121312121214121212131212121312121
2131212121312121213121212141212121312121213121231324}

\textbf{Length 362} \seqsplit{1212121312121213121212131212121312121213121212141212121312121213121212131212121312121213121212141212121
3121212131212121312121213121212131212121412121213121212131212121312121213121212131212121412121213121212
1312121213121212131212121312121214121212131212121312121213121212131212121312121214121212131212121312121
21312121213121212131212121412121213121212131232132324}

\textbf{Length 363} \seqsplit{1212121312121213121212131212121312121213121212141212121312121213121212131212121312121213121212141212121
3121212131212121312121213121212131212121412121213121212131212121312121213121212131212121412121213121212
1312121213121212131212121312121214121212131212121312121213121212131212121312121214121212131212121312121
213121212131212121312121214121212131212121313131212124}

\textbf{Length 364} \seqsplit{1212121312121213121212131212121312121213121212141212121312121213121212131212121312121213121212141212121
3121212131212121312121213121212131212121412121213121212131212121312121213121212131212121412121213121212
1312121213121212131212121312121214121212131212121312121213121212131212121312121214121212131212121312121
2131212121312121213121212141212121312121213121232312314}

\textbf{Length 365} \seqsplit{1212121312121213121212131212121312121213121212141212121312121213121212131212121312121213121212141212121
3121212131212121312121213121212131212121412121213121212131212121312121213121212131212121412121213121212
1312121213121212131212121312121214121212131212121312121213121212131212121312121214121212131212121312121
21312121213121212131212121412121213121212131212312123214}

\textbf{Length 366} \seqsplit{1212121312121213121212131212121312121213121212141212121312121213121212131212121312121213121212141212121
3121212131212121312121213121212131212121412121213121212131212121312121213121212131212121412121213121212
1312121213121212131212121312121214121212131212121312121213121212131212121312121214121212131212121312121
213121212131212121312121214121212131212121312121213232324}

\textbf{Length 367} \seqsplit{1212121312121213121212131212121312121213121212141212121312121213121212131212121312121213121212141212121
3121212131212121312121213121212131212121412121213121212131212121312121213121212131212121412121213121212
1312121213121212131212121312121214121212131212121312121213121212131212121312121214121212131212121312121
2131212121312121213121212141212121312121213121212131212314}

\textbf{Length 368}\seqsplit{1212121312121213121212131212121312121213121212141212121312121213121212131212121312121213121212141212121
3121212131212121312121213121212131212121412121213121212131212121312121213121212131212121412121213121212
1312121213121212131212121312121214121212131212121312121213121212131212121312121214121212131212121312121
21312121213121212131212121412121213121212131212121312321324}

\textbf{Length 369}\seqsplit{1212121312121213121212131212121312121213121212141212121312121213121212131212121312121213121212141212121
3121212131212121312121213121212131212121412121213121212131212121312121213121212131212121412121213121212
1312121213121212131212121312121214121212131212121312121213121212131212121312121214121212131212121312121
213121212131212121312121214121212131212121312123121231212124}

\textbf{Length 370}\seqsplit{1212121312121213121212131212121312121213121212141212121312121213121212131212121312121213121212141212121
3121212131212121312121213121212131212121412121213121212131212121312121213121212131212121412121213121212
1312121213121212131212121312121214121212131212121312121213121212131212121312121214121212131212121312121
2131212121312121213121212141212121312121213121212131231213214}

\textbf{Length 371}\seqsplit{1212121312121213121212131212121312121213121212141212121312121213121212131212121312121213121212141212121
3121212131212121312121213121212131212121412121213121212131212121312121213121212131212121412121213121212
1312121213121212131212121312121214121212131212121312121213121212131212121312121214121212131212121312121
21312121213121212131212121412121213121212131212121313213212324}

\textbf{Length 372}\seqsplit{1212121312121213121212131212121312121213121212141212121312121213121212131212121312121213121212141212121
3121212131212121312121213121212131212121412121213121212131212121312121213121212131212121412121213121212
1312121213121212131212121312121214121212131212121312121213121212131212121312121214121212131212121312121
213121212131212121312121214121212131212121312121213121212132324}

\textbf{Length 373}
\seqsplit{1212121312121213121212131212121312121213121212141212121312121213121212131212121312121213121212141212121
3121212131212121312121213121212131212121412121213121212131212121312121213121212131212121412121213121212
1312121213121212131212121312121214121212131212121312121213121212131212121312121214121212131212121312121
2131212121312121213121212141212121312121213121212131212321212124}

\textbf{Length 374}
\seqsplit{1212121312121213121212131212121312121213121212141212121312121213121212131212121312121213121212141212121
3121212131212121312121213121212131212121412121213121212131212121312121213121212131212121412121213121212
1312121213121212131212121312121214121212131212121312121213121212131212121312121214121212131212121312121
21312121213121212131212121412121213121212131212121312121213123214}

\textbf{Length 375}
\seqsplit{1212121312121213121212131212121312121213121212141212121312121213121212131212121312121213121212141212121
3121212131212121312121213121212131212121412121213121212131212121312121213121212131212121412121213121212
1312121213121212131212121312121214121212131212121312121213121212131212121312121214121212131212121312121
213121212131212121312121214121212131212121312121213121232312121324}

\textbf{Length 376}
\seqsplit{1212121312121213121212131212121312121213121212141212121312121213121212131212121312121213121212141212121
3121212131212121312121213121212131212121412121213121212131212121312121213121212131212121412121213121212
1312121213121212131212121312121214121212131212121312121213121212131212121312121214121212131212121312121
2131212121312121213121212141212121312121213121212131232121231212124}

\textbf{Length 377}
\seqsplit{1212121312121213121212131212121312121213121212141212121312121213121212131212121312121213121212141212121
3121212131212121312121213121212131212121412121213121212131212121312121213121212131212121412121213121212
1312121213121212131212121312121214121212131212121312121213121212131212121312121214121212131212121312121
21312121213121212131212121412121213121212131212121312312132312132324}

\textbf{Length 378}
\seqsplit{1212121312121213121212131212121312121213121212141212121312121213121212131212121312121213121212141212121
3121212131212121312121213121212131212121412121213121212131212121312121213121212131212121412121213121212
1312121213121212131212121312121214121212131212121312121213121212131212121312121214121212131212121312121
213121212131212121312121214121212131212121312121213121212131212121324}

\textbf{Length 379}
\seqsplit{1212121312121213121212131212121312121213121212141212121312121213121212131212121312121213121212141212121
3121212131212121312121213121212131212121412121213121212131212121312121213121212131212121412121213121212
1312121213121212131212121312121214121212131212121312121213121212131212121312121214121212131212121312121
2131212121312121213121212141212121312121213121212131232123121321212124}

\textbf{Length 380}
\seqsplit{1212121312121213121212131212121312121213121212141212121312121213121212131212121312121213121212141212121
3121212131212121312121213121212131212121412121213121212131212121312121213121212131212121412121213121212
1312121213121212131212121312121214121212131212121312121213121212131212121312121214121212131212121312121
21312121213121212131212121412121213121212131212121313121312121213121314}

\textbf{Length 381}
\seqsplit{1212121312121213121212131212121312121213121212141212121312121213121212131212121312121213121212141212121
3121212131212121312121213121212131212121412121213121212131212121312121213121212131212121412121213121212
1312121213121212131212121312121214121212131212121312121213121212131212121312121214121212131212121312121
213121212131212121312121214121212131212121312121213123121213121321212124}

\textbf{Length 382}
\seqsplit{1212121312121213121212131212121312121213121212141212121312121213121212131212121312121213121212141212121
3121212131212121312121213121212131212121412121213121212131212121312121213121212131212121412121213121212
1312121213121212131212121312121214121212131212121312121213121212131212121312121214121212131212121312121
2131212121312121213121212141212121312121213121212131212121321212131212124}

\textbf{Length 383}
\seqsplit{1212121312121213121212131212121312121213121212141212121312121213121212131212121312121213121212141212121
3121212131212121312121213121212131212121412121213121212131212121312121213121212131212121412121213121212
1312121213121212131212121312121214121212131212121312121213121212131212121312121214121212131212121312121
21312121213121212131212121412121213121232121231213121212132312121213131214}

\textbf{Length 384}
\seqsplit{1212121312121213121212131212121312121213121212141212121312121213121212131212121312121213121212141212121
3121212131212121312121213121212131212121412121213121212131212121312121213121212131212121412121213121212
1312121213121212131212121312121214121212131212121312121213121212131212121312121214121212131212121312121
213121212131212121312121214121212131212121312121213121212131212121312121214}

\end{footnotesize}

\end{document}